\newif\ifignore 
\newcommand{\auxproof}[1]{
\ifignore\mbox{}\newline
\textbf{PROOF:} \dotfill\newline
{\it #1}\mbox{}\newline
\textbf{ENDPROOF}\dotfill
\fi}
\newdimen\proofrulebreadth \proofrulebreadth=.05em
\newdimen\proofdotseparation \proofdotseparation=1.25ex
\newdimen\proofrulebaseline \proofrulebaseline=2ex
\let\then\relax
\def\hfi{\hskip0pt plus.0001fil}
\mathchardef\squigto="3A3B
\newif\ifinsideprooftree\insideprooftreefalse
\newif\ifonleftofproofrule\onleftofproofrulefalse
\newif\ifproofdots\proofdotsfalse
\newif\ifdoubleproof\doubleprooffalse
\let\wereinproofbit\relax
\newdimen\shortenproofleft
\newdimen\shortenproofright
\newdimen\proofbelowshift
\newbox\proofabove
\newbox\proofbelow
\newbox\proofrulename
\def\shiftproofbelow{\let\next\relax\afterassignment\setshiftproofbelow\dimen0 }
\def\shiftproofbelowneg{\def\next{\multiply\dimen0 by-1 }%
\afterassignment\setshiftproofbelow\dimen0 }
\def\setshiftproofbelow{\next\proofbelowshift=\dimen0 }
\def\setproofrulebreadth{\proofrulebreadth}
\def\prooftree{
%
\ifnum  \lastpenalty=1
\then   \unpenalty
\else   \onleftofproofrulefalse
\fi
%
\ifonleftofproofrule
\else   \ifinsideprooftree
        \then   \hskip.5em plus1fil
        \fi
\fi
%
\bgroup
\setbox\proofbelow=\hbox{}\setbox\proofrulename=\hbox{}%
\let\justifies\proofover\let\leadsto\proofoverdots\let\Justifies\proofoverdbl
\let\using\proofusing\let\[\prooftree
\ifinsideprooftree\let\]\endprooftree\fi
\proofdotsfalse\doubleprooffalse
\let\thickness\setproofrulebreadth
\let\shiftright\shiftproofbelow \let\shift\shiftproofbelow
\let\shiftleft\shiftproofbelowneg
\let\ifwasinsideprooftree\ifinsideprooftree
\insideprooftreetrue
%
\setbox\proofabove=\hbox\bgroup$\displaystyle 
\let\wereinproofbit\prooftree
%
\shortenproofleft=0pt \shortenproofright=0pt \proofbelowshift=0pt
%
\onleftofproofruletrue\penalty1
}
\def\eproofbit{
%
\ifx    \wereinproofbit\prooftree
\then   \ifcase \lastpenalty
        \then   \shortenproofright=0pt  
        \or     \unpenalty\hfil         
        \or     \unpenalty\unskip       
        \else   \shortenproofright=0pt  
        \fi
\fi
%
\global\dimen0=\shortenproofleft
\global\dimen1=\shortenproofright
\global\dimen2=\proofrulebreadth
\global\dimen3=\proofbelowshift
\global\dimen4=\proofdotseparation
\global\count255=\proofdotnumber
%
$\egroup  
%
\shortenproofleft=\dimen0
\shortenproofright=\dimen1
\proofrulebreadth=\dimen2
\proofbelowshift=\dimen3
\proofdotseparation=\dimen4
\proofdotnumber=\count255
}
\def\proofover{
\eproofbit 
\setbox\proofbelow=\hbox\bgroup 
\let\wereinproofbit\proofover
$\displaystyle
}%
\def\proofoverdbl{
\eproofbit 
\doubleprooftrue
\setbox\proofbelow=\hbox\bgroup 
\let\wereinproofbit\proofoverdbl
$\displaystyle
}%
\def\proofoverdots{
\eproofbit 
\proofdotstrue
\setbox\proofbelow=\hbox\bgroup 
\let\wereinproofbit\proofoverdots
$\displaystyle
}%
\def\proofusing{
\eproofbit 
\setbox\proofrulename=\hbox\bgroup 
\let\wereinproofbit\proofusing
\kern0.3em$
}
\def\endprooftree{
\eproofbit 
  \dimen5 =0pt
%
\dimen0=\wd\proofabove \advance\dimen0-\shortenproofleft
\advance\dimen0-\shortenproofright
%
\dimen1=.5\dimen0 \advance\dimen1-.5\wd\proofbelow
\dimen4=\dimen1
\advance\dimen1\proofbelowshift \advance\dimen4-\proofbelowshift
%
\ifdim  \dimen1<0pt
\then   \advance\shortenproofleft\dimen1
        \advance\dimen0-\dimen1
        \dimen1=0pt
        \ifdim  \shortenproofleft<0pt
        \then   \setbox\proofabove=\hbox{%
                        \kern-\shortenproofleft\unhbox\proofabove}%
                \shortenproofleft=0pt
        \fi
\fi
%
\ifdim  \dimen4<0pt
\then   \advance\shortenproofright\dimen4
        \advance\dimen0-\dimen4
        \dimen4=0pt
\fi
%
\ifdim  \shortenproofright<\wd\proofrulename
\then   \shortenproofright=\wd\proofrulename
\fi
%
\dimen2=\shortenproofleft \advance\dimen2 by\dimen1
\dimen3=\shortenproofright\advance\dimen3 by\dimen4
%
\ifproofdots
\then
        \dimen6=\shortenproofleft \advance\dimen6 .5\dimen0
        \setbox1=\vbox to\proofdotseparation{\vss\hbox{$\cdot$}\vss}%
        \setbox0=\hbox{%
                \advance\dimen6-.5\wd1
                \kern\dimen6
                $\vcenter to\proofdotnumber\proofdotseparation
                        {\leaders\box1\vfill}$%
                \unhbox\proofrulename}%
\else   \dimen6=\fontdimen22\the\textfont2 
        \dimen7=\dimen6
        \advance\dimen6by.5\proofrulebreadth
        \advance\dimen7by-.5\proofrulebreadth
        \setbox0=\hbox{%
                \kern\shortenproofleft
                \ifdoubleproof
                \then   \hbox to\dimen0{%
                        $\mathsurround0pt\mathord=\mkern-6mu%
                        \cleaders\hbox{$\mkern-2mu=\mkern-2mu$}\hfill
                        \mkern-6mu\mathord=$}%
                \else   \vrule height\dimen6 depth-\dimen7 width\dimen0
                \fi
                \unhbox\proofrulename}%
        \ht0=\dimen6 \dp0=-\dimen7
\fi
%
\let\doll\relax
\ifwasinsideprooftree
\then   \let\VBOX\vbox
\else   \ifmmode\else$\let\doll=$\fi
        \let\VBOX\vcenter
\fi
\VBOX   {\baselineskip\proofrulebaseline \lineskip.2ex
        \expandafter\lineskiplimit\ifproofdots0ex\else-0.6ex\fi
        \hbox   spread\dimen5   {\hfi\unhbox\proofabove\hfi}%
        \hbox{\box0}%
        \hbox   {\kern\dimen2 \box\proofbelow}}\doll%
%
\global\dimen2=\dimen2
\global\dimen3=\dimen3
\egroup 
\ifonleftofproofrule
\then   \shortenproofleft=\dimen2
\fi
\shortenproofright=\dimen3
%
\onleftofproofrulefalse
\ifinsideprooftree
\then   \hskip.5em plus 1fil \penalty2
\fi
}
\newcommand{\xyline}[2][]{\ensuremath{\smash{\xymatrix@1#1{#2}}}}
\newcommand{\xyinline}[2][]{\ensuremath{\smash{\xymatrix@1#1{#2}}}^{\rule[8.5pt]{0pt}{0pt}}}
\newcommand{\filter}{\raisebox{5.5pt}{$\xymatrix@=6pt@H=0pt@M=0pt@W=4pt{\\ \ar@{>->}[u]}$}}
\newcommand{\ideal}{\raisebox{1pt}{$\xymatrix@=5pt@H=0pt@M=0pt@W=4pt{\ar@{>->}[d] \\ \mbox{}}$}}
\newtheorem{theorem}{Theorem}
\newtheorem{lemma}[theorem]{Lemma}
\newtheorem{proposition}[theorem]{Proposition}
\newtheorem{definition}[theorem]{Definition}
\newtheorem{example}[theorem]{Example}
\newtheorem{remark}[theorem]{Remark}
\newenvironment{proof}[1][Proof]%
   { \begin{trivlist}%
     \item[\hskip \labelsep {\bfseries #1}]%
   }%
   { \end{trivlist}%
   }
\newcommand{\QEDbox}{\square}
\newcommand{\QED}{\hspace*{\fill}$\QEDbox$}
\newcommand{\after}{\mathrel{\circ}}
\newcommand{\cat}[1]{\ensuremath{\mathbf{#1}}}
\newcommand{\Cat}[1]{\ensuremath{\mathbf{#1}}}
\newcommand{\op}{\ensuremath{^{\mathrm{op}}}}
\newcommand{\idmap}[1][]{\ensuremath{\mathrm{id}_{#1}}}
\newcommand{\support}{\ensuremath{\mathrm{supp}}}
\newcommand{\Vect}[1][]{\ensuremath{\Cat{Vect}_{#1}}}
\newcommand{\Mod}[1][]{\ensuremath{\Cat{Mod}_{#1}}}
\newcommand{\SA}{\ensuremath{\mathcal{S}{\kern-.95ex}\mathcal{A}}}
\newcommand{\BL}{\ensuremath{\mathcal{B}}}
\newcommand{\Pos}{\ensuremath{\mathcal{P}{\kern-.6ex}o{\kern-.35ex}s}}
\newcommand{\Ef}{\ensuremath{\mathcal{E}{\kern-.5ex}f}}
\newcommand{\Proj}{\ensuremath{\mathcal{P}{\kern-.45ex}r}}
\newcommand{\DM}{\ensuremath{\mathcal{D}{\kern-.85ex}\mathcal{M}}}
\newcommand{\Hom}{\textsl{Hom}}
\newcommand{\inprod}[2]{\ensuremath{\langle #1\,|\,#2 \rangle}}
\newcommand{\Alg}{\textsl{Alg}\xspace}
\newcommand{\Mon}{\textsl{Mon}\xspace}
\newcommand{\Act}{\textsl{Act}\xspace}
\newcommand{\Mlt}{\ensuremath{\mathcal{M}}}
\newcommand{\Dstr}{\ensuremath{\mathcal{D}}}
\newcommand{\UF}{\ensuremath{\mathcal{U}{\kern-.75ex}\mathcal{F}}}
\newcommand{\CS}{\ensuremath{\mathcal{C}{\kern-.75ex}\mathcal{S}}}
\newcommand{\NNO}{\ensuremath{\mathbb{N}}}
\newcommand{\C}{\mathbb{C}}
\newcommand{\R}{\mathbb{R}}
\newcommand{\Rnn}{\mathbb{R}_{\geq0}} 
\newcommand{\unitR}{\ensuremath{[0,1]}}
\newcommand{\closed}{\ensuremath{\mathcal{C}{\kern-.45ex}\ell}}
\newcommand{\orthogonal}{\mathrel{\bot}}
\newcommand{\scalar}{\mathrel{\bullet}}
\newcommand{\powerset}{\mathcal{P}}
\newcommand{\Pow}{\powerset}
\newcommand{\powersetfin}{\mathcal{P}_{\mathit{fin}}}
\newcommand{\tr}{\ensuremath{\textrm{tr}}\xspace}
\newcommand{\ket}[1]{|\,#1\,\rangle}
\newcommand{\bra}[1]{\langle\,#1\,|}
\newcommand{\hs}[1][]{{\frak{hs}_{#1}}}
\newcommand{\weakprec}{\ensuremath{\textrm{wp}}\xspace}
\newcommand{\leftScottint}{[{\kern-.3ex}[}
\newcommand{\rightScottint}{]{\kern-.3ex}]}
\newcommand{\Hilb}{\Cat{Hilb}\xspace}
\newcommand{\FdHilb}{\Cat{FdHilb}\xspace}
\newcommand{\FdHilbUn}{\Cat{FdHilb}_{\mathrm{Un}}\xspace}
\newcommand{\Conv}{\Cat{Conv}\xspace}
\newcommand{\CCH}{\Cat{CCH}\xspace}
\newcommand{\CCHobs}{\CCH_{\mathrm{obs}}}
\newcommand{\EMod}{\Cat{EMod}\xspace}
\newcommand{\BEMod}{\Cat{BEMod}\xspace}
\newcommand{\Sets}{\Cat{Sets}\xspace}
\newcommand{\EA}{\Cat{EA}\xspace}
\newcommand{\sotimes}{\mathrel{\raisebox{.05pc}{$\scriptstyle \otimes$}}}
\newcommand{\set}[2]{\{#1\;|\;#2\}}
\newcommand{\setin}[3]{\{#1\in#2\;|\;#3\}}
\newcommand{\all}[2]{\forall{#1}.\,#2}
\newcommand{\allin}[3]{\forall{#1\in#2}.\,#3}
\newcommand{\ex}[2]{\exists{#1}.\,#2}
\newcommand{\exin}[3]{\exists{#1\in#2}.\,#3}
\newcommand{\lamin}[3]{\lambda{#1\in#2}.\,#3}
\newcommand{\lam}[2]{\lambda{#1}.\,#2}
\newcommand{\congrightarrow}{\mathrel{\stackrel{
           \raisebox{.5ex}{$\scriptstyle\cong\,$}}{
           \raisebox{0ex}[0ex][0ex]{$\rightarrow$}}}}
\newcommand{\conglongrightarrow}{\mathrel{\stackrel{
           \raisebox{.5ex}{$\scriptstyle\cong\,$}}{
           \raisebox{0ex}[0ex][0ex]{$\longrightarrow$}}}}
\newcommand{\BCM}{\ensuremath{\mathbf{BCM}}\xspace}
\newcommand{\Bmodu}{\ensuremath{\mathbf{BModu}}\xspace}
\newcommand{\poVectu}{\ensuremath{\mathbf{poVectu}}\xspace}
\newcommand{\partot}{\ensuremath{\mathcal{T}\hspace{-3pt}{\scriptstyle{o}}}}
\newcommand{\totpar}{\ensuremath{\mathcal{P}\hspace{-3pt}{\scriptstyle{a}}}}
\newcommand{\ojoin}{\ovee}
\newcommand{\eps}{\varepsilon}
\newcommand{\reals}{\ensuremath{\mathbb{R}}}
\renewcommand{\arraycolsep}{3pt}
\title{Relating Operator Spaces via Adjunctions}
\author{Bart Jacobs\ and\ Jorik Mandemaker \\
{\small Institute for Computing and Information Sciences (iCIS),} \\[-.5em]
{\small Radboud University Nijmegen, The Netherlands.} \\[-.5em]
{\small Contact: \texttt{\{bart,mandemak\}@cs.ru.nl}}}
\date{\small \today}
\begin{document}
\maketitle

\begin{abstract}
This chapter uses categorical techniques to describe relations between
various sets of operators on a Hilbert space, such as self-adjoint,
positive, density, effect and projection operators. These relations,
including various Hilbert-Schmidt isomorphisms of the form $\tr(A-)$,
are expressed in terms of dual adjunctions, and maps between them. Of
particular interest is the connection with quantum structures, via a
dual adjunction between convex sets and effect modules. The approach
systematically uses categories of modules, via their description as
Eilenberg-Moore algebras of a monad.
\end{abstract}

\section{Introduction}\label{IntroSec}

There is a recent exciting line of work connecting research in the
semantics of programming languages and logic, and research in the
foundations of quantum physics, including quantum computation and
logic, see~\cite{Coecke11a} for an overview. This paper fits in that
line of work. It concentrates on operators (on Hilbert spaces) and
organises and relates these operators according to their algebraic
structure. This is to a large extent not more than a systematic
presentation of known results and connections in the (modern) language
of category theory. However, the approach leads to clarifying results,
like Theorem~\ref{ConvEModAdjMapThm} that relates density operators
and effects via a dual adjunction between convex sets and effect
modules (extending earlier work~\cite{Jacobs10e}). It is in line with
many other dual adjunctions and dualities that are relevant in
programming logics~\cite{Johnstone82,Abramsky91,JacobsS10}. Indeed,
via this dual adjunction we can put the work~\cite{dHondtP06a} on
quantum weakest preconditions in perspective (see especially
Remark~\ref{WPRem}).

The article begins by describing the familiar sets of operators
(bounded, self-adjoint, positive) on a (finite-dimensional) Hilbert
space in terms of functors to categories of modules. The dual
adjunctions involved are made explicit, basically via dual operation
$V \mapsto V^{*}$, see Section~\ref{FunctOperSec}.  Since the
algebraic structure of these sets of operators is described in terms
of modules over various semirings, namely over complex numbers $\C$
(for bounded operators), over real numbers $\R$ (for self-adjoint
operators), and over non-negative real numbers $\Rnn$ (for positive
operators), it is useful to have a uniform description of such
modules. It is provided in Section~\ref{ModuleSec}, via the notion of
algebra of a monad (namely the multiset monad). This abstract
description provides (co)limits and the monoidal closed structure of
such algebras (from~\cite{Kock71a}) for free. We then use that convex
sets can also be described as such algebras of a monad (namely the
distribution monad), and elaborate the connection with effect modules
(also known as convex effect algebras, see~\cite{PulmannovaG98}). In
this setting we discuss various `Gleason-style' correspondences,
between projections, effects and density matrices. We borrow the
probabilistic Gelfand duality between (Banach) effect modules and
(compact) convex sets from~\cite{JacobsM12b} for the final steps in
our analysis. This duality formalises the difference between the
approaches of Heisenberg (focusing on observables/effects) and
Schr{\"o}dinger (focusing on states), see
\textit{e.g.}~\cite{HeinosaariZ12}. It allows us to reconstruct all
sets of operators on a Hilbert space from its projections, see
Figure~\ref{FreeConstructsFig} for an overview. The main contribution
of the paper thus lies in a systematic description.

We should emphasise that the investigations in this paper concentrate
on finite-dimensional Hilbert and vector spaces.

\subsection{Operator overview}\label{OperatorSubsec}

For a (finite-dimensional) Hilbert space $H$ we shall study the
following sets of operators $H\rightarrow H$.
\begin{equation}
\label{OperatorSetsDiag}
\vcenter{\xymatrix@R-2pc{
& & & & \Proj(H)\ar@{_{(}->}[dl] \\
\BL(H) &
\;\SA(H)\ar@{_{(}->}[l] &
\;\Pos(H)\ar@{_{(}->}[l] & 
\;\Ef(H)\ar@{_{(}->}[l] \\
& & & & \DM(H)\ar@{_{(}->}[ul] \\
}}
\end{equation}

\noindent where:
\begin{center}
\begin{tabular}{c||c|c}
\textbf{Notation} & \textbf{Description} & \textbf{Structure}
   \vrule height5mm depth3mm width0mm \\
\hline\hline 
$\BL(H)$ & bounded/continuous linear & vector space over $\C$
   \vrule height5mm depth3mm width0mm \\
\hline 
$\SA(H)$ & self-adjoint: $A^{\dag} = A$ & vector space over $\R$
   \vrule height5mm depth3mm width0mm \\
\hline
$\Pos(H)$ & positive: $A\geq 0$ & module over $\Rnn$
   \vrule height5mm depth3mm width0mm \\
\hline
$\Ef(H)$ & effect: $0 \leq A \leq I$ & effect module over $[0,1]$
   \vrule height5mm depth3mm width0mm \\
\hline
$\Proj(H)$ & projection: $A^{2} = A = A^{\dag}$ & orthomodular lattice
   \vrule height5mm depth3mm width0mm \\
\hline
$\DM(H)$ & density: $A\geq 0$ and $\tr(A) = 1$ & convex set
\end{tabular}
\end{center}

\noindent The emphasis lies on the `structure' column. It describes
the algebraic structure of the sets of operators that will be relevant
here. It is not meant to capture all the structure that is
present. For instance, the set $\BL(H)$ of endomaps is not only a
vector space over the complex numbers, but actually a $C^*$-algebra.

As is well-known, operators on Hilbert spaces behave in a certain
sense as numbers. For instance, by taking $H$ to be the trivial space
$\C$ of complex numbers, the diagram~(\ref{OperatorSetsDiag})
becomes:
$$\xymatrix@R-2pc{
& & & & \{0,1\}\ar@{_{(}->}[dl] \\
\C &
\;\R\ar@{_{(}->}[l] &
\;\Rnn\ar@{_{(}->}[l] & 
\;[0,1]\ar@{_{(}->}[l] \\
& & & & \{1\}\ar@{_{(}->}[ul] \\
}$$

\section{Operators and duality}\label{FunctOperSec}

This section concentrates on the first three sets of operators
in~(\ref{OperatorSetsDiag}), namely on $\BL(H) \hookleftarrow \SA(H)
\hookleftarrow \Pos(H)$. It will focus on isomorphisms $V \cong
V^{*}$, for $V = \BL(H), \SA(H), \Pos(H)$. These isomorphisms turn out
to be natural in $H$, in categories of modules (or vector
spaces). This serves as motivation for further investigation of the
structures involved, in subsequent sections. Only later will we study
the density and effect operators $\DM(H)$ and $\Ef(H)$, capturing
`states and statements' in quantum logic. The material in this section
thus serves as preparation. It is not new, except possibly for the
presentation in terms of maps of adjunctions.

We start by recalling that the category $\Vect[\C]$ of vector spaces
over the complex numbers $\C$ carries an involution given by
conjugation: for a vector space $V$ we write $\overline{V}$ for the
conjugate space, with the same vectors as $V$, but with scalar
multiplication given by $z\scalar_{\overline{V}} x =
\overline{z}\scalar_{V} x$, where the complex number $z\in\C$ has
conjugate $\overline{z}\in\C$. This yields an `involution' endofunctor
$\overline{(-)} \colon \Vect[\C] \rightarrow \Vect[\C]$ which is the
identity on morphisms. A linear map $f\colon \overline{V} \rightarrow
W$ is sometimes called \emph{conjugate} linear, because it satisfies
$f(z\scalar v) = \overline{z}\scalar f(v)$. Complex conjugation $z
\mapsto \overline{z}$ is an example of a conjugate linear
(isomorphism) $\overline{\C} \conglongrightarrow \C$ in
$\Vect[\C]$. We refer to~\cite{BeggsM09,Egger11,Jacobs11e} for more
information on involutions in a categorical setting.

We shall write $V\multimap W$ for the `exponent' vector space of
linear maps $V\rightarrow W$ between vector spaces $V$ and $W$. There
is the standard correspondence between linear functions $U\rightarrow
(V\multimap W)$ and $U\otimes V \rightarrow W$. 

One uses this exponent $\multimap$ to form the dual space $V^{*} =
\overline{V} \multimap \C$. If $V$ is finite-dimensional, say with a
basis $e_{1}, \ldots, e_{n}$, written in `ket' notation as $\ket{j} =
e_{j}$, there is the familiar isomorphism of $V$ with its dual space
$V^{*} = \overline{V}\multimap\C$ given as:
\begin{equation}
\label{DualSpaceEqn}
\vcenter{\xymatrix@R-2pc{
V\ar[r]^-{\cong} & V^{*} 
   \rlap{$\;\;\cong\; \overline{V}\multimap\C$} \\
\Big(\sum_{j}z_{j}\ket{j}\Big)\ar@{|->}[r] &
   \Big(\sum_{j}z_{j}\bra{j}\Big),
}}
\end{equation}

\noindent where the `bra' $\bra{j}\colon \overline{V} \rightarrow \C$
sends a vector $w = (\sum_{k}w_{k}\ket{k})$ to its $j$-th coordinate
$\inprod{j}{w} = \sum_{k}\overline{w_k}\inprod{j}{k} =
\overline{w_j}$. Clearly, this yields an isomorphism $V
\congrightarrow V^{*}$, because these functions $\bra{j}$ form a
`dual' basis for $V^*$. This isomorphism~(\ref{DualSpaceEqn}) is a
famous example of a non-natural mapping, depending on a choice of
basis. It will play a crucial role below, where $V$ is a vector space
of operators on a Hilbert space.


\auxproof{ 
We check that $\bra{j}$ is an antilinear map:
$$\begin{array}{rcl}
\bra{j}(z\cdot_{\overline{V}} w)
& = &
\bra{j}(\overline{z}\cdot_{V} w) \\
& = &
\bra{j}(\overline{z}w_{1}, \ldots, \overline{z}w_{n}) \\
& = &
\overline{\overline{z}w_{j}} \\
& = &
z\overline{w_{j}} \\
& = &
z\bra{j}(w).
\end{array}$$

In a Hilbert space this $\bra{j}$ may be understood as a function
sending an arbitrary vector $\ket{x}$ to $\bra{j}\ket{x} =
\inprod{j}{x} = \overline{\inprod{x}{j}} = x_{j}$ if $x =
\sum_{j}x_{j}\ket{j}$. The antilinearity works here in the second
coordinate of the inner product.
}

The mapping $V \mapsto V^{*} = (\overline{V}\multimap\C)$ yields a
functor $\Vect[\C] \rightarrow \big(\Vect[\C]\big)\op$; for a map
$C\colon H \rightarrow K$ we have $C^{*} \colon K^{*} \rightarrow
H^{*}$ given by $f \mapsto f \after \overline{C}$. This functor
$(-)^{*}$ is adjoint to itself, in the sense that there is a bijective
correspondence (suggested by the double lines) as on the left below,
forming an adjunction as on the right.
$$\qquad\qquad\begin{prooftree}
\begin{prooftree}
V \longrightarrow (\overline{W}\multimap\C)
\Justifies
V\otimes \overline{W} \longrightarrow \C \rlap{$\;\cong\overline{\C}$}
\end{prooftree}
\Justifies
\begin{prooftree}
\llap{$\overline{V}\otimes W \cong\;$}
   \overline{V\otimes\overline{W}} \longrightarrow \C
\Justifies
W \longrightarrow (\overline{V}\multimap\C)
\end{prooftree}
\end{prooftree}
\qquad\qquad
\vcenter{\xymatrix{
\Vect[\C]\ar@/^2ex/[rr]^-{(-)^{*} = \overline{(-)}\multimap \C} & \bot & 
   \big(\Vect[\C]\big)\rlap{$\op$}
      \ar@/^2ex/[ll]^-{(-)^{*} = \overline{(-)}\multimap \C}
}}$$

In the next step, let $\FdHilb$ be the category of finite-dimensional
Hilbert spaces with bounded linear maps between them. One can drop the
boundedness requirement, because a linear map between
finite-dimensional spaces is automatically bounded
(\textit{i.e.}~continuous). As is usual, we write $\BL(H)$ for the
homset of endomaps $H\rightarrow H$ in $\FdHilb$.  This set $\BL(H)$
of ``operators on $H$'' is a vector space over $\C$, of
dimension $n^2$ with the outer products $\ket{j}\bra{k}$, for $j,k\leq
n$, as basis---assuming a basis $\ket{1}, \ldots, \ket{n}$ for
$H$. Such outer product projections $\ket{j}\bra{k}$ may be understood
as the matrix with only 0s except for a single 1 in the $j$-th row of
the $k$-th column. In general, an operator $A\colon H\rightarrow H$
can be written as matrix $A = \sum_{j,k}A_{jk}\ket{j}\bra{k}$, where
the matrix entries $A_{jk}$ may be described as $\bra{j}A\ket{k}$.


The mapping $H\mapsto\BL(H)$ is functorial, and will be used here as
functor $\BL\colon\FdHilb\rightarrow\Vect[\C]$. On a map
$C\colon H\rightarrow K$ it yields a linear function $\BL(H)
\rightarrow \BL(K)$, written as $\BL(C)$, and given by:
\begin{equation}
\label{BLVectEqn}
\begin{array}{rcl}
\BL(C)\Big(H\stackrel{A}{\longrightarrow} H\Big)
& = &
\Big(K\stackrel{C^{\dag}}{\longrightarrow} H \stackrel{A}{\longrightarrow}
   H \stackrel{C}{\longrightarrow} K\Big).
\end{array}
\end{equation}

\auxproof{
We check functoriality of $\BL$
$$\begin{array}{rcl}
\BL(C \after D)(A)
& = &
(C \after D)^{\dag} \after A \after (C \after D) \\
& = &
D^{\dag} \after C^{\dag} \after A \after C \after D \\
& = &
D^{\dag} \after \BL(C)(A) \after D \\
& = &
\BL(D)(\BL(C)(A)) \\
& = &
\big(\BL(D) \after \BL(C)\big)(A) \\
\end{array}$$

\noindent Also, $\BL(C) \colon \BL(H) \rightarrow \BL(K)$ preserves
involution:
$$\begin{array}{rcl}
\BL(C)(A^{\dag})
& = &
CA^{\dag}C^{\dag} \\
& = &
\big(CAC^{\dag}\big)^{\dag} \\
& = &
\BL(C)(A)^{\dag}.
\end{array}$$
}

\noindent The operator $C^{\dag} = \overline{C}^{T}$ is the conjugate
transpose of $C$, satisfying $\inprod{Cv}{w} =
\inprod{v}{C^{\dag}w}$. It makes $\FdHilb$ into a dagger category, see
\textit{e.g.}~\cite{AbramskyC09}. This dagger forms an involution on
the vector space $\BL(H)$. Also, it is adjoint to itself, as in:
$$\begin{prooftree}
V \;\xrightarrow{\;\,f\,\;}\; W
\Justifies
W \;\xrightarrow[\;f^{\dag}\;]\; V
\end{prooftree}
\qquad\qquad\qquad
\vcenter{\xymatrix{
\FdHilb\ar@/^2ex/[rr]^-{(-)^{\dag}} & \bot & 
   \FdHilb\rlap{$\op$}\ar@/^2ex/[ll]^-{(-)^{\dag}}
}}$$

In the next result we apply the duality isomorphism $V \cong V^{*}$
in~(\ref{DualSpaceEqn}) for $V=\BL(H)$. As we shall see, it involves
the trace operation $\tr\colon \BL(H) \rightarrow \C$ of which
we first recall some basic facts. For $A\in\BL(H)$ the trace $\tr(A)$
can be defined as the sum $\sum_{j}A_{jj}$ of the diagonal matrix
values. This definition is independent of the choice of
matrix/basis. This trace $\tr$ satisfies the following basic
properties.
$$\begin{array}{rcll}
\tr(A+B)
& = &
\tr(A) + \tr(B) \\
\tr(zA)
& = &
z\,\tr(A) & \mbox{where $z\in\C$} \\
\tr(AB)
& = &
\tr(BA) & \mbox{the so-called cyclic property} \\
\tr(A^{T})
& = &
\tr(A) & \mbox{where $(-)^{T}$ is the transpose operation} \\
\tr(A^{\dag})
& = &
\overline{\tr(A)} & \mbox{which results from previous points} \\
\tr(A)
& \geq &
0 & \mbox{when $A$ is positive: $A \geq 0$, 
    \textit{i.e.}~$\inprod{v}{Av} \geq 0$.}
\end{array}$$

\auxproof{
We check the dagger property, because it is maybe less standard.
$$\begin{array}{rcl}
\tr(A^{\dag})
& = &
\sum_{j} (A^{\dag})_{jj} \\
& = &
\sum_{j} \overline{A_{jj}} \\
& = &
\overline{\sum_{j}A_{jj}} \\
& = &
\overline{\tr(A)}.
\end{array}$$
}

\begin{proposition}
\label{BLDualProp}
For a finite-dimensional Hilbert space $H$ the duality
isomorphism~(\ref{DualSpaceEqn}) applied to the vector space $\BL(H)$
of endomaps boils down to a trace calculation, namely:
$$\xymatrix@R-2pc{
\BL(H)\ar[r]^-{\hs[\BL]}_-{\cong} & 
   \BL(H)^{*} = \overline{\BL(H)} \multimap \C
}
\quad\mbox{is}\quad
\begin{array}{rcl}
\hs[\BL](A)
& = &
\lam{B}{\tr(AB^{\dag})},
\end{array}$$

\noindent where the $\lambda$-notation is borrowed from the
$\lambda$-calculus, and used for function abstraction:
$\lam{B}{\cdots}$ describes the function $B \mapsto \cdots$.

This map $\BL(H) \conglongrightarrow \BL(H)^{*}$ is independent of the
choice of basis. More categorically, it yields a natural isomorphism
involving adjoint $(-)^{\dag}$ and dual $(-^{*} =
\overline{(-)}\multimap\C$ in:
$$\xymatrix{
\BL \after (-)^{\dag} \ar@{=>}[r]^-{\hs[\BL]}_-{\cong} & (-)^{*} \after \BL,
}$$

\noindent Pictorially, this $\hs[\BL]$ is a natural transformation
$\smash{\xymatrix@R-1pc@C+1pc{\FdHilb\rtwocell{\cong} & \Vect[\C]\op}}$
between the two functors $\FdHilb \rightrightarrows \Vect[\C]\op$
given in:
$$\xymatrix@R-2pc{
& & \FdHilb\rlap{$\op$}\ar[drr]^-{\BL} & & \\
\FdHilb\ar[urr]^-{(-)^{\dag}}\ar[drr]_{\BL} & & & & \Vect[\C]\rlap{$\op$} \\
& & \Vect[\C]\ar[urr]_{(-)^{*}}
}$$

\noindent Moreover, this $\hs[\BL]$ is part of a \emph{map of
  adjunctions} (see~\cite[IV,7]{MacLane71}) in the following
situation.
$$\xymatrix{
\FdHilb\ar@/^1.2ex/[rr]^-{(-)^{\dag}}\ar[d]_{\BL} & \bot & 
   \FdHilb\rlap{$\op$}\ar@/^1.2ex/[ll]^-{(-)^{\dag}}\ar[d]^{\BL} \\
\Vect[\C]\ar@/^1.2ex/[rr]^-{(-)^{*}} & \bot & 
   \Vect[\C]\rlap{$\op$}\ar@/^1.2ex/[ll]^-{(-)^{*}}
}$$
\end{proposition}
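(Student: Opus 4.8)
The plan is to treat the four assertions in order; each one reduces to elementary matrix bookkeeping together with the cyclic property of the trace, so no single step is deep and the real work is keeping the variances straight. I would begin with the formula for $\hs[\BL]$. Fixing a basis $\ket{1},\dots,\ket{n}$ of $H$, the space $\BL(H)$ has the outer products $\ket{j}\bra{k}$ as basis, so writing $A = \sum_{j,k}A_{jk}\ket{j}\bra{k}$ the isomorphism~(\ref{DualSpaceEqn}) applied to $V = \BL(H)$ sends $A$ to the functional on $\overline{\BL(H)}$ that takes $B$ to $\sum_{j,k}A_{jk}\overline{B_{jk}}$, because the `bra' dual to the basis vector $\ket{j}\bra{k}$ reads off and conjugates the $(j,k)$ matrix entry. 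The one identity to verify is $\tr(AB^{\dag}) = \sum_{j,k}A_{jk}\overline{B_{jk}}$, which is immediate from $(B^{\dag})_{kj} = \overline{B_{jk}}$ and the definition of the trace. Hence $\hs[\BL](A) = \lam{B}{\tr(AB^{\dag})}$, the Hilbert--Schmidt pairing; it is conjugate-linear in $B$ and linear in $A$, and it is an isomorphism because it is the image of the isomorphism~(\ref{DualSpaceEqn}) (alternatively, $\tr(AA^{\dag}) = \sum_{j,k}|A_{jk}|^{2}$ forces injectivity and $\dim\BL(H) = \dim\BL(H)^{*}$).

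Independence of the basis then needs no extra work: the closed form $\lam{B}{\tr(AB^{\dag})}$ mentions no basis at all, so every choice in~(\ref{DualSpaceEqn}) produces this same map. For naturality I would spell out the two composite functors $\FdHilb\rightarrow\Vect[\C]\op$ on a morphism $C\colon H\rightarrow K$: by~(\ref{BLVectEqn}), $\BL\after(-)^{\dag}$ sends $C$ to the linear map $\BL(C^{\dag})\colon\BL(K)\rightarrow\BL(H)$, $A\mapsto C^{\dag}AC$, whereas $(-)^{*}\after\BL$ sends $C$ to $\BL(C)^{*}\colon\BL(K)^{*}\rightarrow\BL(H)^{*}$, $g\mapsto\lam{B}{g(CBC^{\dag})}$. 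The naturality square then asserts exactly
$$\tr\big((C^{\dag}AC)B^{\dag}\big) \;=\; \tr\big(A(CBC^{\dag})^{\dag}\big) \qquad (A\in\BL(K),\; B\in\BL(H)),$$
which holds since $(CBC^{\dag})^{\dag} = CB^{\dag}C^{\dag}$ and the trace is cyclic. As each component is bijective, this yields the claimed natural isomorphism, that is, the displayed $2$-cell.

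For the map of adjunctions, both vertical functors in the square are $\BL$ (in its $\FdHilb\rightarrow\Vect[\C]$ and $\FdHilb\op\rightarrow\Vect[\C]\op$ guises), the top row is the self-adjunction $(-)^{\dag}\dashv(-)^{\dag}$ and the bottom row the self-adjunction $(-)^{*}\dashv(-)^{*}$, and the natural isomorphism $\hs[\BL]\colon\BL\after(-)^{\dag}\Rightarrow(-)^{*}\after\BL$ just obtained is precisely a natural transformation filling the left-hand square of functors. By~\cite[IV,7]{MacLane71} such a transformation possesses a conjugate transformation filling the companion square, and the pair constitutes a morphism of adjunctions; since $\hs[\BL]$ is invertible, so is its conjugate. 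If an explicit check is preferred, the compatibility of $\hs[\BL]$ with the units and counits of the two self-adjunctions reduces once more to cyclicity of the trace together with $\tr(A^{\dag}) = \overline{\tr(A)}$.

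The only genuine obstacle I anticipate is bookkeeping rather than computation: correctly tracking which functor lands in $\Vect[\C]$ and which in $\Vect[\C]\op$ and in which direction each arrow runs, and matching the informal phrase ``the square commutes up to $\hs[\BL]$'' to the precise notion of a morphism of adjunctions in~\cite[IV,7]{MacLane71}. Once the diagrams are set up correctly, every remaining verification is a one-line trace identity.
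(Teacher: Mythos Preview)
Your proposal is correct and follows essentially the same route as the paper: derive the trace formula from the basis description of~(\ref{DualSpaceEqn}), infer basis-independence from the fact that the trace is basis-independent, verify naturality by the cyclic property of the trace, and appeal to~\cite[IV,7]{MacLane71} for the map-of-adjunctions claim (which the paper likewise leaves to the reader). Your alternative remark on injectivity via $\tr(AA^{\dag})=\sum_{j,k}|A_{jk}|^{2}$ is a harmless extra, not a different method.
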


The letters `h' and `s' in the map $\hs[\BL]$ stand for Hilbert and
Schmidt, since the inner product $(A,B) \mapsto \tr(AB^{\dag}) =
\hs[\BL](A)(B)$ is commonly named after them. The subscript $\BL$ is
added because we shall encounter analogues of this isomorphism for
other operators. We drop the subscript when confusion is unlikely.


\begin{proof}
If $\ket{1}, \ldots, \ket{n}$ is a basis for $H$, then the map
$\hs \colon \BL(H)\rightarrow \BL(H)^{*}$ becomes, according
to~(\ref{DualSpaceEqn}),
$$\textstyle A
\hspace*{\arraycolsep} = \hspace*{\arraycolsep}
\big(\sum_{j,k}A_{jk}\ket{j}\bra{k}\big)
\hspace*{\arraycolsep} \longmapsto 
\begin{array}[t]{rcl}
\lam{B}{\sum_{j,k}A_{jk}\overline{B_{jk}}} 
& = &
\lam{B}{\sum_{j,k}A_{jk}(B^{\dag})_{kj}} \\
& = &
\lam{B}{\sum_{j}\,(AB^{\dag})_{jj}} \\
& = &
\lam{B}{\tr(AB^{\dag})}.
\end{array}$$

\noindent Since the trace of a matrix is basis-independent, so is this
isomorphism $\hs$. Naturality amounts to commutation of the following
diagram, for each map $C\colon H\rightarrow K$ in $\FdHilb$.
$$\xymatrix@R-1pc{
\BL(H)\ar[rr]^-{{\hs}_{H}}_-{\cong} & & 
   \BL(H)^{*} \rlap{$\;=\overline{\BL(H)}\multimap\C$} \\
\BL(K)\ar[u]^{\BL(C^{\dag})}\ar[rr]^-{\hs_K}_-{\cong} & & 
   \BL(K)^{*} \rlap{$\;=\overline{\BL(K)}\multimap\C$}
   \ar[u]_{\BL(C)^{*}}
}$$

\noindent This diagram commutes because:
$$\begin{array}{rcl}
\big(\BL(C)^{*} \after \hs_{K}\big)(A)(B)
& = &
\big(\hs_{K}(A) \after \BL(C)\big)(B) \\
& = &
\hs_{K}(A)\big(\BL(C)(B)\big) \\
& = &
\hs_{K}(A)\big(CBC^{\dag}\big) \\
& = &
\tr(A(CBC^{\dag})^{\dag}) \\
& = &
\tr(ACB^{\dag}C^{\dag}) \\
& = &
\tr(C^{\dag}ACB^{\dag}) \qquad \mbox{by the cyclic property} \\
& = &
\hs_{H}(C^{\dag}AC)(B) \\
& = &
\hs_{H}\big(\BL(C^{\dag})(A)\big)(B) \\
& = &
\Big(\hs_{H} \after \BL(C^{\dag})\Big)(A)(B).
\end{array}$$

\noindent Finally, we use the basic fact that, because these
$\hs_{H}$'s are natural in $H$ and componentwise isomorphisms, the
inverses $\hs_{H}^{-1}$ are also natural in $H$, see
\textit{e.g.}~\cite[Lemma~7.11]{Awodey06}. The details of the map of
adjunctions in the above diagram are left to the interested
reader. \QED

\auxproof{
First, the adjunction $(-)^{\dag} \dashv (-)^{\dag}$ is given by
the correspondences:
$$\begin{prooftree}
{\xymatrix{ X\ar[r]^-{f} & Y \rlap{$\;= Y^{\dag}$} }}
\Justifies
{\xymatrix{ Y\ar[r]_-{f^{\dag}} & X \rlap{$\;= X^{\dag}$} }}
\end{prooftree}$$

\noindent The unit and counit of this adjunction are the identities.

The unit and counit of the adjunction between vector spaces both have
type $V \rightarrow (\overline{(\overline{V}\multimap
  \C)}\multimap \C)$, and are given by
$\lam{x}{\lam{A}{\overline{Ax}}}$. 

In general, a map of adjunctions:
$$\xymatrix@R-.5pc{
\cat{A}\ar@/^1.2ex/[rr]^-{F}\ar[d]_{K} & \bot & 
   \cat{B}\ar@/^1.2ex/[ll]^-{G}\ar[d]^{L} \\
\cat{A}'\ar@/^1.2ex/[rr]^-{F'} & \bot & 
   \cat{B}'\ar@/^1.2ex/[ll]^-{G'}
}$$

\noindent involves isomorphisms
$$\xymatrix{
F'K\ar[r]^-{\alpha}_-{\cong} & LF
& &
G'L\ar[r]^-{\beta}_-{\cong} & KG
}$$

\noindent making the following diagrams commute.
$$\xymatrix@R-1pc@C+1pc{
& G'F'K\ar[d]^{G'\alpha} 
& &
F'G'L\ar[dr]^{\varepsilon'L}\ar[d]_{F'\beta} & \\
K\ar[ur]^-{\eta'}\ar[dr]_{K\eta} & G'LF\ar[d]^{\beta F} 
& &
F'KG\ar[d]_{\alpha G} & L \\
& KGF
& &
LFG\ar[ur]_{L\varepsilon}
}$$

\noindent In the present situation:
$$\xymatrix{
\FdHilb\ar@/^1.2ex/[rr]^-{(-)^{\dag}}\ar[d]_{\BL} & \bot & 
   \FdHilb\rlap{$\op$}\ar@/^1.2ex/[ll]^-{(-)^{\dag}}\ar[d]^{\BL} \\
\Vect[\C]\ar@/^1.2ex/[rr]^-{\overline{(-)}\multimap \C} & \bot & 
   \Vect[\C]\rlap{$\op$}\ar@/^1.2ex/[ll]^-{\overline{(-)}\multimap \C}
}$$

\noindent this involves isomorphisms ($\alpha$ and $\beta$ above):
$$\xymatrix{
\big(\overline{\BL(H)}\multimap\C\big)\ar[r]^-{\hs_{H}^{-1}}_-{\cong} &
   \BL(H^{\dag})\quad\mbox{in }\Vect[\C]
\qquad
\big(\overline{\BL(H)}\multimap\C\big)\ar[r]^-{\hs_{H}^{-1}}_-{\cong} &
   \BL(H^{\dag})\quad\mbox{in }\Vect[\C]\op
}$$

$$\xymatrix@R-1pc@C+1pc{
& 
\overline{(\overline{\overline{\BL(H)}}\multimap\C)}\multimap\C
   \ar[d]^{\overline{\hs}\multimap\C =
      \lam{f}{\lam{B}{f(\hs(B))}}} \\
\overline{\BL(H)}\ar[ur]^-{\eta}\ar@{=}[dr] & 
   \overline{\overline{\BL(H)}}\multimap\C\ar[d]^{\hs^{-1}} \\
& \overline{\BL(H)}
}$$

\noindent This commutes since:
$$\begin{array}{rcl}
\big(\hs^{-1} \after (\overline{\hs}\multimap\C) 
   \after \eta\big)(A)
& = &
\big(\hs^{-1} \after (\overline{\hs}\multimap\C)\big)
   (\eta(A)) \\
& = &
\hs^{-1}\big(\lam{B}{\eta(A)(\hs(B))}\big) \\
& = &
\hs^{-1}\big(\lam{B}{\overline{\hs(B)(A)}}\big) \\
& = &
\hs^{-1}\big(\lam{B}{\overline{\tr(BA^{\dag})}}\big) \\
& = &
\hs^{-1}\big(\lam{B}{\tr((BA^{\dag})^{\dag})}\big) \\
& = &
\hs^{-1}\big(\lam{B}{\tr(AB^{\dag})}\big) \\
& = &
\hs^{-1}\big(\hs(A)\big) \\
& = &
A.
\end{array}$$

\noindent Since everything is dual, the same works for $\varepsilon$.
}
\end{proof}

The remarkable thing about this result is that whereas the maps $V
\conglongrightarrow V^{*}$ in~(\ref{DualSpaceEqn}) are not natural,
the instantiations $\hs\colon \BL(H) \conglongrightarrow \BL(H)^{*}$
are, because they involve a trace calculation that is
base-independent.  We briefly describe the inverse of
$\hs = \lam{A}{\tr(A-)} \colon \BL(H) \conglongrightarrow \BL(H)^{*} =
(\overline{\BL(H)}\multimap \C)$, via a choice of basis $\ket{1},
\ldots, \ket{n}$ for $H$. So suppose we have a linear map $f\colon
\overline{\BL(H)} \rightarrow \C$. Define an operator $\hs^{-1}(f)\in
\BL(H)$ with matrix entries:
\begin{equation}
\label{hsBLinvEqn}
\begin{array}{rcl}
\big(\hs^{-1}(f)\big)_{jk}
& = &
f(\ket{j}\bra{k}).
\end{array}
\end{equation}

\noindent Then we recover $f$ via the trace calculation:
$$\begin{array}{rcl}
\hs\Big(\hs^{-1}(f)\Big)(B)
& = &
\tr\big(\hs^{-1}(f)B^{\dag}\big) \\
& = &
\tr\big((\sum_{j,k}\hs^{-1}(f)_{jk}\ket{j}\bra{k})B^{\dag}\big) \\
& = &
\sum_{j,k}\hs^{-1}(f)_{jk}\,\tr\big(\ket{j}\bra{k}B^{\dag}\big) \\
& = &
\sum_{j,k}f(\ket{j}\bra{k})\,\tr\big(\bra{k}B^{\dag}\ket{j}\big) \\
& = &
\sum_{j,k}f(\ket{j}\bra{k})\,\tr\big((B^{\dag})_{kj}\big) \\
& = &
\sum_{j,k}f(\ket{j}\bra{k})\overline{B_{jk}} \\
& = &
\sum_{j,k}f(B_{jk}\ket{j}\bra{k}), \qquad
   \mbox{because $f$ is conjugate linear} \\
& = &
f\big(\sum_{j,k}B_{jk}\ket{j}\bra{k}\big) \\
& = &
f(B).
\end{array}$$

\noindent Again, this mapping $f\mapsto A_{f}$ is independent of a
choice of basis, because its inverse $A\mapsto \tr(A-)$ does not
depend on such a choice.

\auxproof{
We also check the equality in the other direction:
$$\begin{array}{rcl}
\hs^{-1}(\hs(A))_{jk}
& = &
\hs(A)(\ket{j}\bra{k}) \\
& = &
\tr(A(\ket{j}\bra{k})^{\dag}) \\
& = &
\tr(A\ket{k}\bra{j}) \\
& = &
\tr(\bra{j}A\ket{k}) \\
& = &
A_{jk}.
\end{array}$$
}

\subsection*{Self-adjoint operators}

We now restrict ourselves to self-adjoint operators $\SA(H)
\hookrightarrow \BL(H)$. We recall that an operator $A\colon
H\rightarrow H$ is called self-adjoint (or Hermitian) if $A^{\dag} =
A$. In terms of matrices this means that $A_{jk} =
\overline{A_{kj}}$. In particular, all entries $A_{jj}$ on the
diagonal are real numbers, and so is the trace (as sum of these
$A_{jj}$). The set of self-adjoint operators $\SA(H)$ forms a vector
space over $\R$. The mapping $H \mapsto \SA(H)$ can be extended to a
functor $\SA\colon \Hilb\rightarrow\Vect[\R]$, by:
$$\begin{array}{rcl}
\SA(C)\Big(H\stackrel{A}{\longrightarrow} H\Big)
& = &
\Big(K\stackrel{C^{\dag}}{\longrightarrow} H \stackrel{A}{\longrightarrow}
   H \stackrel{C}{\longrightarrow} K\Big).
\end{array}$$

\noindent like for $\BL$ in~(\ref{BLVectEqn}). This is well-defined
since if $A$ is self-adjoint then so is $\SA(C)(A)$, since:
$$\big(\SA(C)(A)\big)^{\dag}
=
\big(CAC^{\dag}\big)^{\dag}
=
C^{\dag\dag}A^{\dag}C^{\dag}
=
CAC^{\dag}
=
\SA(C)(A).$$

\noindent There are serveral ways to turn a linear operator into a
self-adjoint one. For instance, for each complex number $z\in\C$ and
$B\in\BL(H)$ we have self-adjoint operators:
\begin{equation}
\label{BLtoSAEqn}
zB + \overline{z}B^{\dag}
\qquad\mbox{and}\qquad
izB - i\overline{z}B^{\dag}.
\end{equation}

\noindent In this way we obtain mappings $\BL(H) \rightarrow \SA(H)$
in $\Vect[\R]$. If the real part $\textit{Re}(z)$ is non-zero, the
mapping:
$$\begin{array}{rcl}
B 
& \longmapsto & 
\frac{1}{2\textit{Re}(z)}\big(zB + \overline{z}B^{\dag}\big)
\end{array}$$

\noindent is a left-inverse of the inclusion a $\SA(H) \hookrightarrow
\BL(H)$, making it a split mono.

\auxproof{
Write $h(B) = zB + \overline{z}B^{\dag}$, then $h(B)$ is self-adjoint
since:
$$h(B)^{\dag}
=
(zB)^{\dag} + (\overline{z}B^{\dag})^{\dag}
=
\overline{z}B^{\dag} + zB
=
h(B).$$

\noindent Moreover, $h$ is linear (in $\Vect[\R]$):
$$\begin{array}{rcl}
h(A+B)
& = &
z(A+B) + \overline{z}(A+B)^{\dag} \\
& = &
zA + zB + \overline{z}A^{\dag} + \overline{z}B^{\dag} \\
& = &
h(A) + h(B) \\
h(rB)
& = &
z(rB) + \overline{z}(rB)^{\dag} \\
& = &
r(zB) + \overline{z}rB^{\dag}, \qquad \mbox{since }r\in\R \\
& = &
rh(B).
\end{array}$$

Similarly, write $k(B) = izB - i\overline{z}B^{\dag}$, then $k(B)$ 
is self-adjoint since:
$$k(B)^{\dag}
=
(izB)^{\dag} - (i\overline{z}B^{\dag})^{\dag}
=
-i\overline{z}B^{\dag} + izB
=
k(B).$$

\noindent Moreover, $k$ is linear (in $\Vect[\R]$):
$$\begin{array}{rcl}
k(A+B)
& = &
iz(A+B) - i\overline{z}(A+B)^{\dag} \\
& = &
izA + izB - i\overline{z}A^{\dag} - i\overline{z}B^{\dag} \\
& = &
k(A) + k(B) \\
k(rB)
& = &
iz(rB) - i\overline{z}(rB)^{\dag} \\
& = &
r(izB) - i\overline{z}rB^{\dag}, \qquad \mbox{since }r\in\R \\
& = &
rk(B).
\end{array}$$

Now write $\ell(B) = \frac{1}{2\textit{Re}(z)}\big(zB +
\overline{z}B^{\dag}\big)$ and $i\colon\SA(H)\hookrightarrow\BL(H)$.
Then for $A\in\SA(H)$,
$$\begin{array}{rcl}
\ell(i(A))
& = &
\frac{1}{2\textit{Re}(z)}\big(zA + \overline{z}A^{\dag}\big) \\
& = &
\frac{1}{2\textit{Re}(z)}\big(zA + \overline{z}A\big) \\
& = &
\frac{z+\overline{z}}{2\textit{Re}(z)}A \\
& = &
A.
\end{array}$$
}

By moving from $\BL$ to $\SA$ we get the following
analogue of Proposition~\ref{BLDualProp}.

\begin{proposition}
\label{SADualProp}
For $H\in\FdHilb$, the subset $\SA(H) \hookrightarrow \BL(H)$ of
self-adjoint operators on $H$ is a vector space over $\R$, for
which one obtains a natural isomorphism in $\Vect[\R]$:
\begin{equation}
\label{SADualEqn}
\vcenter{\xymatrix{
\SA(H)\ar[r]^-{\hs[\SA]}_-{\cong} & \SA(H)^{*} =
\SA(H) \multimap \R
\qquad\mbox{by}\qquad
{\begin{array}{rcl}
\hs[\SA](A)(B)
& = &
\tr(AB).
\end{array}}
}}
\end{equation}

\noindent It gives rise to a map of adjunctions:
$$\xymatrix{
\FdHilb\ar@/^1.2ex/[rr]^-{(-)^{\dag}}\ar[d]_{\SA} & \bot & 
   \FdHilb\rlap{$\op$}\ar@/^1.2ex/[ll]^-{(-)^{\dag}}\ar[d]^{\SA} \\
\Vect[\R]\ar@/^1.2ex/[rr]^-{(-)\multimap \R} & \bot & 
   \Vect[\R]\rlap{$\op$}\ar@/^1.2ex/[ll]^-{(-)\multimap \R}
}$$
\end{proposition}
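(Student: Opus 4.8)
The plan is to mirror the proof of Proposition~\ref{BLDualProp} as closely as possible, restricting everything to the self-adjoint subspace. First I would verify that $\hs[\SA]$ as defined is well-typed and an isomorphism. Because $A$ is self-adjoint, $B \mapsto \tr(AB)$ lands in $\R$ (the trace of a self-adjoint operator is real, and a product argument shows $\tr(AB)$ is real when both $A,B$ are self-adjoint — write $\overline{\tr(AB)} = \tr((AB)^{\dag}) = \tr(B^{\dag}A^{\dag}) = \tr(BA) = \tr(AB)$). Note that here the dual is simply $\SA(H) \multimap \R$ rather than $\overline{\SA(H)} \multimap \R$: conjugation drops out because over $\R$ there is no nontrivial involution, so $\hs[\SA]$ need not absorb a dagger. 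To see it is an isomorphism, I would exhibit the inverse in coordinates exactly as in~(\ref{hsBLinvEqn}): given a real-linear $f\colon \SA(H) \rightarrow \R$, the self-adjoint operator with entries built from $f$ evaluated on the self-adjoint ``building blocks'' $\ket{j}\bra{k} + \ket{k}\bra{j}$ and $i\ket{j}\bra{k} - i\ket{k}\bra{j}$ (cf.~(\ref{BLtoSAEqn})) recovers $f$ via a trace computation; alternatively, one can simply note that $\SA(H)$ is a finite-dimensional real inner-product space under $(A,B)\mapsto\tr(AB)$, so the map to its dual is automatically bijective. I expect this to be routine.

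Next I would check naturality, i.e.\ commutativity, for each $C\colon H\rightarrow K$ in $\FdHilb$, of the square with $\hs[\SA]_H$, $\hs[\SA]_K$, $\SA(C^{\dag})$ on the left and $\SA(C)^{*}$ on the right. This is the same calculation as in the proof of Proposition~\ref{BLDualProp}, now without daggers on the $B$-argument: one computes
$$\big(\SA(C)^{*} \after \hs[\SA]_{K}\big)(A)(B)
= \tr\big(A\,CBC^{\dag}\big)
= \tr\big(C^{\dag}ACB\big)
= \hs[\SA]_{H}\big(\SA(C^{\dag})(A)\big)(B),$$
using only the cyclic property of the trace. As in the $\BL$ case, since each $\hs[\SA]_H$ is a natural componentwise isomorphism, the family of inverses $\hs[\SA]_H^{-1}$ is natural as well.

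Finally, for the map of adjunctions, I would recall that both adjunctions $(-)^{\dag}\dashv(-)^{\dag}$ and $(-)\multimap\R \dashv (-)\multimap\R$ are self-adjunctions (the $(-)^{\dag}$ one has identity unit/counit, the $(-)\multimap\R$ one has the canonical double-dualization map as unit/counit, which is an iso in finite dimensions). The data of the map of adjunctions consists of the natural isomorphism $\hs[\SA]\colon \SA\after(-)^{\dag} \Rightarrow (-)^{*}\after\SA$ just constructed (and its mate on the other side, which by self-duality is again $\hs[\SA]$ up to the obvious identifications), and one must check the two coherence pentagons relating these to the units and counits. Since everything in sight is symmetric under the self-dualities, verifying one pentagon suffices, and it reduces — exactly as in the $\BL$ case — to the identity $\hs[\SA]^{-1}\big(\lam{B}{\tr(BA)}\big) = A$, which is immediate from $\tr(BA) = \tr(AB) = \hs[\SA](A)(B)$. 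I expect the main obstacle, such as it is, to be purely bookkeeping: keeping track of which occurrences of $(-)\multimap\R$ live in $\Vect[\R]$ versus $\Vect[\R]\op$ and orienting the coherence diagrams consistently; there is no genuine mathematical difficulty beyond what was already handled for $\BL(H)$, the simplification being that the real scalars eliminate all conjugations.
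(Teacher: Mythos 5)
Your proposal is correct, but the central step---bijectivity of $\hs[\SA]$---goes by a genuinely different route than the paper's. The paper does not use your inner-product shortcut: it takes a real-linear $f\colon \SA(H)\to\R$, extends it to a conjugate-linear $f'\colon \overline{\BL(H)}\to\C$ by $f'(B) = \frac{1}{2}\big(f(B+B^{\dag}) + i\,f(iB-iB^{\dag})\big)$, invokes Proposition~\ref{BLDualProp} to obtain the unique $A\in\BL(H)$ with $f'=\tr(A(-)^{\dag})$, sets $\hs[\SA]^{-1}(f)=\frac{1}{2}(A+A^{\dag})$, verifies this by a trace computation, and then proves uniqueness of the self-adjoint representative separately, by comparing real and imaginary parts of matrix entries against the self-adjoint test operators $\ket{j}\bra{k}+\ket{k}\bra{j}$ and $i\ket{j}\bra{k}-i\ket{k}\bra{j}$. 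Your alternative (b)---that $(A,B)\mapsto\tr(AB)$ is a symmetric, positive-definite bilinear form on the finite-dimensional real space $\SA(H)$ (positive definite since $\tr(A^{2})=\tr(A^{\dag}A)=\sum_{j,k}|A_{jk}|^{2}$), so the induced map into the dual is injective and hence bijective by dimension count---is shorter and delivers existence and uniqueness in one stroke, which is all that the later proof of Proposition~\ref{PosDualProp} actually needs from this result; your coordinate route (a) amounts to the explicit inverse formula the paper only records implicitly, reached directly rather than via $\BL(H)$. What the paper's detour buys is the extension-of-functionals template that is reused for $\Pos(H)$ (where no inner-product argument is available on the cone and one must extend to $\SA(H)$ and appeal to uniqueness there), plus an explicit matrix formula for $\hs[\SA]^{-1}$. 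Your naturality computation via the cyclic property and the reduction of the map-of-adjunctions coherence to $\hs[\SA]^{-1}(\lam{B}{\tr(BA)})=A$ match what the paper does (or explicitly leaves to the reader) in the $\BL$ case; the remark that the unit of the $(-)\multimap\R$ self-adjunction is an isomorphism in finite dimensions is harmless but not needed for the coherence conditions.
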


\begin{proof}
If $A,B\colon H \rightarrow H$ are self adjoint operators, then
$\tr(AB^{\dag}) = \tr(AB)$ is a real number, since:
$$\overline{\tr(AB)}
=
\tr\big((AB)^{\dag}\big)
=
\tr\big(B^{\dag}A^{\dag}\big)
=
\tr\big(BA\big)
=
\tr(AB).$$

\noindent Conversely, suppose we have a (linear) map $f\colon \SA(H)
\rightarrow \R$ in $\Vect[\R]$. It can be extended to a function
$f'\colon \overline{\BL(H)} \rightarrow \C$ via
$$\begin{array}{rcl}
f'(B)
& = &
\frac{1}{2}\Big(f(B+B^{\dag}) + if(iB - iB^{\dag})\Big)
\end{array}$$

\noindent using, as described in~(\ref{BLtoSAEqn}), that $B+B^{\dag}$
and $iB-iB^{\dag}$ are self-adjoint. It is not hard to see that $f'$
preserves sums of operators and satisfies $f'(zB) =
\overline{z}f'(B)$. This $f'$ really extends $f$ since in the special
case when $B$ is self-adjoint we get $f'(B) = \frac{1}{2}\big(f(2B) +
if(0)\big) = f(B)$ by linearity.

\auxproof{
Linearity of $f'$.
$$\begin{array}{rcl}
\lefteqn{f'(0)} \\
& = &
\frac{1}{2}\Big(f(0) + if(0)\Big) \\
& = &
0 \\
\lefteqn{f'(B+C)} \\
& = &
\frac{1}{2}\Big(f((B+C) + (B+C)^{\dag}) +
   if(i(B+C) - i(B+C)^{\dag})\Big) \\
& = &
\frac{1}{2}\Big(f(B+C + B^{\dag} +C^{\dag}) +
   if(iB + iC - iB^{\dag} - iC^{\dag})\Big) \\
& = &
\frac{1}{2}\Big(f(B+B^{\dag}) + f(C+C^{\dag}) +
   if(iB - iB^{\dag}) + if(iC  - iC^{\dag})\Big) 
   \qquad \mbox{by~(\ref{BLtoSAEqn})} \\
& = &
\frac{1}{2}\Big(f(B+B^{\dag}) + if(iB - iB^{\dag})\Big) + 
   \frac{1}{2}\Big(f(C+C^{\dag}) + if(iC - iC^{\dag})\Big) \\
& = &
f'(B) + f'(C) \\
\lefteqn{f(zB) \qquad \mbox{for }z = a+ib} \\
& = &
\frac{1}{2}\Big(f(zB+(zB)^{\dag}) + if(izB - i(zB)^{\dag})\Big) \\
& = &
\frac{1}{2}\Big(f(zB+\overline{z}B^{\dag}) + 
   if(izB - i\overline{z}B^{\dag})\Big) \\
& = &
\frac{1}{2}\Big(f(aB + ibB + aB^{\dag} - ibB^{\dag}) + 
   if(iaB - bB - iaB^{\dag} - bB^{\dag})\Big) \\
& = &
\frac{1}{2}\Big(f(aB + aB^{\dag}) + f(ibB - ibB^{\dag}) + 
   if(iaB - iaB^{\dag}) - if(bB + bB^{\dag})\Big)
   \quad \mbox{by~(\ref{BLtoSAEqn})} \\
& = &
\frac{1}{2}\Big(af(B + B^{\dag}) - ibf(B + B^{\dag} + 
   iaf(iB - iB^{\dag}) + bf(iB - iB^{\dag}) )\Big) \\
& = &
\frac{1}{2}\Big((a-ib)f(B + B^{\dag}) +
   i(a-ib)f(iB - iB^{\dag})\Big) \\
& = &
\frac{1}{2}\overline{z}\Big(f(B + B^{\dag}) + if(iB - iB^{\dag})\Big) \\
& = &
\overline{z}f'(B).
\end{array}$$
}

By Proposition~\ref{BLDualProp} there is a unique $A\in\BL(H)$ with:
$$\begin{array}{rcccl}
f'
& = &
\hs[\BL](A)
& = &
\tr(A(-)^{\dag}) \;\colon\; \overline{\BL(H)} \longrightarrow \C. 
\end{array}$$

\noindent We now put $\hs[\SA]^{-1}(f) = \frac{1}{2}(A + A^{\dag})
\in \SA(H)$, and check for $B\in\SA(H)$:
$$\begin{array}{rcll}
\lefteqn{\hs[\SA]\big(\hs[\SA]^{-1}(f)\big)(B)} \\
& = &
\tr\big(\hs[\SA]^{-1}(f)B\big) \\
& = &
\frac{1}{2}\Big(\tr(AB) + \tr(A^{\dag}B)\Big) \\
& = &
\frac{1}{2}\Big(\tr(AB^{\dag}) + \tr((BA)^{\dag})\Big) 
   & \mbox{since $B$ is self-adjoint} \\
& = &
\frac{1}{2}\Big(f'(B) + \overline{\tr(BA)}\Big) \\
& = &
\frac{1}{2}\Big(f(B) + \overline{\tr(AB)}\Big) &
   \mbox{since $f(B) = f'(B)$ when $B\in\SA(H)$} \\
& = &
\frac{1}{2}\Big(f(B) + \overline{f(B)}\Big) \\
& = &
\frac{1}{2}\Big(f(B) + f(B)\Big) & 
   \mbox{because $f(B)$ is real valued} \\
& = &
f(B) & \mbox{since $f$ is linear.}
\end{array}$$

\noindent In the other direction, one obtains 
$\hs[\SA]^{-1}\big(\hs[\SA](A)\big) = A$ by uniqueness.

\auxproof{
Write $f = \tr(A-)\colon \SA(H) \rightarrow \R$, for $A\in\SA(H)$.
Then, for $B\in\BL(H)$ one has:
$$\begin{array}{rcl}
f'(B)
& = &
\frac{1}{2}\Big(\tr(AB) + \tr(AB^{\dag}) + i\tr(AiB) -
  i\tr(AiB^{\dag})\Big) \\
& = &
\frac{1}{2}\Big(\tr(AB) + \tr(AB^{\dag}) - \tr(AB) + \tr(AB^{\dag})\Big) \\
& = &
\tr(AB^{\dag}).
\end{array}$$

\noindent Hence $\hs[\BL]^{-1}(f') = A$, and thus:
$$\begin{array}{rcccl}
\hs[\SA]^{-1}(f)
& = &
\frac{1}{2}(A + A^{\dag}) 
& = &
A,
\end{array}$$

\noindent since $A$ is self-adjoint. Thus 
$\hs[\SA]^{-1}\big(\hs[\SA](A)\big) = A$.
}

We prove uniqueness in the self-adjoint case too. Assume a
self-adjoint operator $C\in\SA(H)$ also satisfies $f = \hs[\SA](C)
\colon \SA(H) \rightarrow \R$. We need to prove $C = A =
\frac{1}{2}(A' + A'^{\dag})$.  We plan to show $A_{jk} = C_{jk}$
wrt.\ an arbitrary basis, and thus $A=C$. We prove the equality
$A_{jk} = C_{jk}$ in two steps, by proving that both their real and
imaginary parts are the same.
$$\begin{array}{rcl}
\textit{Re}(C_{jk})
& = &
\frac{1}{2}\big(C_{jk} + \overline{C_{jk}}\big) \\
& = &
\frac{1}{2}\big(C_{jk} + (C^{\dag})_{kj}\big) \\
& = &
\frac{1}{2}\big(C_{jk} + C_{kj}\big) \\
& = &
\frac{1}{2}\big(\bra{j}C\ket{k} + \bra{k}C\ket{j}\big) \\
& = &
\frac{1}{2}\big(\tr(\bra{j}C\ket{k}) + \tr(\bra{k}C\ket{j})\big) \\
& = &
\frac{1}{2}\tr\big((C(\ket{k}\bra{j} + \ket{j}\bra{k})\big) \\
& = &
\frac{1}{2}\tr\big((A(\ket{k}\bra{j} + \ket{j}\bra{k})\big) \\
& & \qquad
\mbox{by assumption, using that
     $\ket{k}\bra{j} + \ket{j}\bra{k}$ is self-adjoint} \\
& = &
\cdots \mbox{ (as before)} \\
& = &
\textit{Re}(A_{jk}).
\end{array}$$

\noindent Similarly, $\textit{Im}(C_{jk}) = \textit{Im}(A_{jk})$, by
writing $\textit{Im}(C_{jk}) = \frac{1}{2}\big(-iC_{jk} +
i\overline{C_{jk}}\big)$ and using the self-adjoint operator
$-i\ket{j}\bra{k} + i\ket{k}\bra{j}$. \QED

\auxproof{
In order to see this we first recall that for a complex number 
$z = a + ib$ one has:
$$\begin{array}{rcl}
-iz + i\overline{z}
& = &
-i(a+ib) + i(a-ib) \\
& = &
-ia +b + ia +b \\
& = &
2b
\end{array}$$

\noindent Hence we calculate:
$$\begin{array}{rcl}
\textit{Im}(C_{jk})
& = &
\frac{1}{2}\big(-iC_{jk} + i\overline{C_{jk}}\big) \\
& = &
\frac{1}{2}\big(-iC_{jk} + i(C^{\dag})_{kj}\big) \\
& = &
\frac{1}{2}\big(-iC_{jk} + iC_{kj}\big) \\
& = &
\frac{1}{2}\big(-i\bra{j}C\ket{k} + i\bra{k}C\ket{j}\big) \\
& = &
\frac{1}{2}\big(\tr(-i\bra{j}C\ket{k}) + \tr(i\bra{k}C\ket{j})\big) \\
& = &
\frac{1}{2}\tr\big((C(-i\ket{k}\bra{j} + i\ket{j}\bra{k})\big) \\
& = &
\frac{1}{2}\tr\big((A(-i\ket{k}\bra{j} + i\ket{j}\bra{k})\big) \\
& & \qquad
\mbox{by assumption, using that
     $-i\ket{k}\bra{j} + i\ket{j}\bra{k}$ is self-adjoint} \\
& = &
\cdots \\
& = &
\textit{Im}(A_{jk}).
\end{array}$$

We make sure that the operators $\ket{k}\bra{j} + \ket{j}\bra{k}$ 
and $-i\ket{k}\bra{j} + i\ket{j}\bra{k}$
self-adjoint:
$$\begin{array}{rcl}
\big(\ket{k}\bra{j} + \ket{j}\bra{k}\big)^{\dag}
& = &
\big(\ket{k}\bra{j}\big)^{\dag} + \big(\ket{j}\bra{k}\big)^{\dag} \\
& = &
\ket{j}\bra{k} + \ket{k}\bra{j} \\
& = &
\ket{k}\bra{j} + \ket{j}\bra{k} \\
\big(- i\ket{k}\bra{j} + i\ket{j}\bra{k}\big)^{\dag}
& = &
\big(-i\ket{k}\bra{j}\big)^{\dag} + \big(i\ket{j}\bra{k}\big)^{\dag} \\
& = &
i\ket{j}\bra{k} -i\ket{k}\bra{j} \\
& = &
-i\ket{k}\bra{j} + i\ket{j}\bra{k}.
\end{array}$$
}

\end{proof}

Implicitly, the proof gives a formula for the inverse $\hs[\SA]^{-1}$
of the Hilbert-Schmidt map for self-adjoint operators.

\auxproof{
For future use we extract the formula for the inverse $\hs[\SA]^{-1}$
of the Hilbert-Schmidt map for self-adjoint operators, as obtained in
this proof.

\begin{lemma}
\label{hsSAinvLem}
Assuming a basis $\ket{1}, \ldots, \ket{n}$ for a Hilbert space $H$,
the inverse $\hs[\SA]^{-1}\colon (\SA(H)\multimap\R) \rightarrow
\SA(H)$ on a linear map $f\colon \SA(H) \rightarrow \R$ has matrix
entries given by the formula:
\begin{equation}
\label{hsSAinvEqn}
\begin{array}{rcl}
\hs[\SA]^{-1}(f)_{jk}
& = &
f\Big(\frac{\ket{j}\bra{k}+\ket{k}\bra{j}}{2}\Big) +
   if\Big(\frac{i\ket{j}\bra{k}-i\ket{k}\bra{j}}{2}\Big).
\end{array}
\end{equation}
\end{lemma}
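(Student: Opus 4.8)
The plan is to extract the formula directly from the construction of $\hs[\SA]^{-1}$ given in the proof of Proposition~\ref{SADualProp}, combined with the explicit matrix description~(\ref{hsBLinvEqn}) of $\hs[\BL]^{-1}$; no new ideas are needed, only bookkeeping.

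First I would recall the recipe from that proof: given a linear $f\colon\SA(H)\to\R$, one extends it to $f'\colon\overline{\BL(H)}\to\C$ by $f'(B)=\tfrac12\bigl(f(B+B^{\dag})+if(iB-iB^{\dag})\bigr)$, obtains from Proposition~\ref{BLDualProp} the unique $A\in\BL(H)$ with $\hs[\BL](A)=f'$, and then sets $\hs[\SA]^{-1}(f)=\tfrac12(A+A^{\dag})$. Since $A=\hs[\BL]^{-1}(f')$, equation~(\ref{hsBLinvEqn}) gives $A_{jk}=f'(\ket{j}\bra{k})$. Substituting $(\ket{j}\bra{k})^{\dag}=\ket{k}\bra{j}$ into the definition of $f'$ and using $\R$-linearity of $f$ to absorb the factors $\tfrac12$ into the arguments, I get $A_{jk}=f\bigl(\tfrac{\ket{j}\bra{k}+\ket{k}\bra{j}}{2}\bigr)+if\bigl(\tfrac{i\ket{j}\bra{k}-i\ket{k}\bra{j}}{2}\bigr)$, which is already the right-hand side of~(\ref{hsSAinvEqn}).

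It then remains to check that symmetrising does not change the matrix entry, i.e.\ that $\bigl(\tfrac12(A+A^{\dag})\bigr)_{jk}=\tfrac12\bigl(A_{jk}+\overline{A_{kj}}\bigr)$ equals $A_{jk}$. For this I would verify the identity $\overline{A_{kj}}=\overline{f'(\ket{k}\bra{j})}=f'(\ket{j}\bra{k})=A_{jk}$: in the expansion of $f'(\ket{k}\bra{j})$ both arguments fed to $f$ are self-adjoint --- they are precisely operators of the form~(\ref{BLtoSAEqn}) --- so $f$ returns real values there, complex conjugation merely negates the $i$-coefficient, and $\R$-linearity of $f$, together with $\ket{k}\bra{j}+\ket{j}\bra{k}=\ket{j}\bra{k}+\ket{k}\bra{j}$ and $-(i\ket{k}\bra{j}-i\ket{j}\bra{k})=i\ket{j}\bra{k}-i\ket{k}\bra{j}$, turns this back into $f'(\ket{j}\bra{k})$. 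Hence $\tfrac12(A+A^{\dag})$ and $A$ have the same $jk$-entry, and formula~(\ref{hsSAinvEqn}) follows.

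The argument is routine, and the only spot needing a little care is that last identity $\overline{f'(\ket{k}\bra{j})}=f'(\ket{j}\bra{k})$, which is exactly where one uses that $f$ is $\R$-valued (equivalently, real on self-adjoint operators). If a self-contained proof were preferred instead, one could define $M$ by the right-hand side of~(\ref{hsSAinvEqn}), check directly that $M$ is self-adjoint (again by reality of $f$) and that $\tr(M-)=f$ on $\SA(H)$, and then invoke the uniqueness already established in Proposition~\ref{SADualProp}; but the extraction route above is shorter, since all the real work was done there.
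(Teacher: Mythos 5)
Your proposal is correct and follows essentially the same route as the paper's proof: extract $\hs[\SA]^{-1}(f)=\tfrac12(A+A^{\dag})$ with $A=\hs[\BL]^{-1}(f')$ from the proof of Proposition~\ref{SADualProp}, read off $A_{jk}=f'(\ket{j}\bra{k})$ from~(\ref{hsBLinvEqn}), and use that $f$ is real-valued on self-adjoint operators to handle the conjugation. The paper simplifies the average $\tfrac12\bigl(A_{jk}+\overline{A_{kj}}\bigr)$ directly, whereas you first observe $A_{jk}$ already has the target form and then check $\overline{A_{kj}}=A_{jk}$; this is only a cosmetic reordering of the same computation.
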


\begin{proof}
As can be seen in the proof of Proposition~\ref{SADualProp}, the
inverse $\hs[\SA]^{-1}\colon (\SA(H)\multimap\R) \rightarrow \SA(H)$
on a linear map $f\colon \SA(H) \rightarrow \R$ is given by the
formula:
$$\begin{array}{rcl}
\hs[\SA]^{-1}(f)
& = &
\frac{1}{2}\big(A + A^{\dag}), \qquad \mbox{where:} \\
A 
& = & 
\hs[\BL]^{-1}\Big(\lamin{B}{\overline{\BL(H)}}{
   \frac{1}{2}\big(f(B+B^{\dag}) + if(iB - iB^{\dag})\big)}\Big). 
\end{array}$$

\noindent When we unravel the formula~(\ref{hsBLinvEqn}) for
$\hs[\BL]^{-1}$ we obtain as matrix entries for this inverse
$\hs[\SA]^{-1}(f)$:
$$\begin{array}[b]{rcl}
\hs[\SA]^{-1}(f)_{jk}
& = &
\frac{1}{2}\big(A_{jk} + \overline{A_{kj}}) 
   \qquad\mbox{for $A$ as above} \\
& = & 
\frac{1}{4}\Big(f\big(\ket{j}\bra{k}+\ket{k}\bra{j}\big) +
   if\big(i\ket{j}\bra{k}-i\ket{k}\bra{j}\big) \; + \\
& & \qquad
   \overline{f\big(\ket{k}\bra{j}+\ket{j}\bra{k}\big) +
   if\big(i\ket{k}\bra{j}-i\ket{j}\bra{k}\big)}\Big) \\
& = & 
\frac{1}{4}\Big(f\big(\ket{j}\bra{k}+\ket{k}\bra{j}\big) +
   if\big(i\ket{j}\bra{k}-i\ket{k}\bra{j}\big) \; + \\
& & \qquad
   f\big(\ket{k}\bra{j}+\ket{j}\bra{k}\big) -
   if\big(i\ket{k}\bra{j}-i\ket{j}\bra{k}\big)\Big) \\
& = &
\frac{1}{4}\Big(f\big(2\ket{j}\bra{k}+2\ket{k}\bra{j}\big) +
   if\big(2i\ket{j}\bra{k}-2i\ket{k}\bra{j}\big)\Big) \\
& = &
f\Big(\frac{\ket{j}\bra{k}+\ket{k}\bra{j}}{2}\Big) +
   if\Big(\frac{i\ket{j}\bra{k}-i\ket{k}\bra{j}}{2}\Big).
\end{array}\eqno{\QEDbox}$$
\end{proof}

One can also direct prove that this formula gives the inverse.
In that case the following equations for a self-adjoint matrix are
convenient.
$$\begin{array}{rclcrcl}
A 
& = &
\frac{1}{2}\sum_{j,k}A_{jk}\ket{j}\bra{k} +
   \overline{A_{jk}}\ket{k}\bra{j}
& \quad &
0
& = &
\sum_{j,k}A_{jk}\ket{j}\bra{k} - \overline{A_{jk}}\ket{k}\bra{j}.
\end{array}$$

Just to be sure we do the calculations, to see which properties
we need where. For $A\in\SA(H)$,
$$\begin{array}{rcl}
\lefteqn{\big(\hs[\SA]^{-1} \after \hs[\SA]\big)(A)_{jk}} \\
& = &
\hs[\SA]^{-1}(\tr(A-))_{jk} \\
& = &
\tr\Big(A\big(\frac{\ket{j}\bra{k}+\ket{k}\bra{j}}{2}\big)\Big) +
   i\tr\Big(A\big(\frac{i\ket{j}\bra{k}-i\ket{k}\bra{j}}{2}\big)\Big) \\
& = &
\frac{1}{2}\Big(\tr(\bra{k}A\ket{j}) + \tr(\bra{j}A\ket{k}) -
   \tr(\bra{k}A\ket{j}) + \tr(\bra{j}A\ket{k})\Big) \\
& = &
\frac{1}{2}\big(A_{jk} + A_{jk}\big) \\
& = &
A_{jk}.
\end{array}$$

\noindent Next assume $f\colon \SA(H) \rightarrow \R$ in $\Vect[\R]$
and $A\in\SA(H)$, where we write $A_{kj} = B_{kj} + iC_{kj}$ with
$B_{kj},C_{kj}\in\R$. Since $\overline{A_{jk}} = A_{kj}$ we have
$B_{jk} = B_{kj}$ and $C_{jk} = -C_{kj}$. Hence:
$$\begin{array}{rcl}
\lefteqn{\big(\hs[\SA] \after \hs[\SA]^{-1}\big)(f)(A)} \\
& = &
\tr\big(\hs[\SA]^{-1}(f)A\big) \\
& = &
\sum_{j}\big(\hs[\SA]^{-1}(f)A\big)_{jj} \\
& = &
\sum_{j,k}\hs[\SA]^{-1}(f)_{jk}A_{kj} \\
& = &
\sum_{j,k}\Big[f\Big(\frac{\ket{j}\bra{k}+\ket{k}\bra{j}}{2}\Big) +
   if\Big(\frac{i\ket{j}\bra{k}-i\ket{k}\bra{j}}{2}\Big)\Big]
   \big(B_{kj} + iC_{kj}\big) \\
& = &
\frac{1}{2}\sum_{j,k} B_{kj}f(\ket{j}\bra{k}+\ket{k}\bra{j}) +
   iC_{kj}f(\ket{j}\bra{k}+\ket{k}\bra{j}) \; + \\
& & \qquad iB_{kj}f(i\ket{j}\bra{k}-i\ket{k}\bra{j}) -
   C_{kj}f(i\ket{j}\bra{k}-i\ket{k}\bra{j}) \\
& = &
\frac{1}{2}\sum_{j,k} f\big(B_{kj}\ket{j}\bra{k}+B_{kj}\ket{k}\bra{j} -
   iC_{kj}\ket{j}\bra{k} + iC_{kj}\ket{k}\bra{j}\big) \; + \\
& & \qquad if\big(iB_{kj}\ket{j}\bra{k} - iB_{kj}\ket{k}\bra{j} +
   C_{kj}\ket{j}\bra{k} + C_{kj}\ket{k}\bra{j}\big) \\
& = &
\frac{1}{2}\sum_{j,k} f\big(B_{jk}\ket{j}\bra{k}+B_{kj}\ket{k}\bra{j} +
   iC_{jk}\ket{j}\bra{k} + iC_{kj}\ket{k}\bra{j}\big) \; + \\
& & \qquad if\big(iB_{jk}\ket{j}\bra{k} - iB_{kj}\ket{k}\bra{j} -
   C_{jk}\ket{j}\bra{k} + C_{kj}\ket{k}\bra{j}\big) \\
& = &
\frac{1}{2}\sum_{j,k} f\big(A_{jk}\ket{j}\bra{k}+A_{kj}\ket{k}\bra{j}\big) +
   if\big(iA_{jk}\ket{j}\bra{k} - iA_{kj}\ket{k}\bra{j}\big) \\
& = &
\frac{1}{2}\sum_{j,k} f\big(A_{jk}\ket{j}\bra{k}+
   \overline{A_{jk}}\ket{k}\bra{j}\big) +
   if\big(iA_{jk}\ket{j}\bra{k} - i\overline{A_{jk}}\ket{k}\bra{j}\big) \\
& = &
\frac{1}{2}\Big( f\big(\sum_{j,k}A_{jk}\ket{j}\bra{k}+
   \overline{A_{jk}}\ket{k}\bra{j}\big) +
   if\big(i\sum_{j,k}A_{jk}\ket{j}\bra{k} - 
      \overline{A_{jk}}\ket{k}\bra{j}\big)\Big) \\
& \smash{\stackrel{(*)}{=}} &
\frac{1}{2}\Big(f(2A) + if(0)\Big) \\
& = &
f(A).
\end{array}$$

The marked equation relies on the following formulas for self-adjoint $A$,
$$\begin{array}{rclcrcl}
A 
& = &
\frac{1}{2}\sum_{j,k}A_{jk}\ket{j}\bra{k} +
   \overline{A_{jk}}\ket{k}\bra{j}
& \qquad
&
0
& = &
\sum_{j,k}A_{jk}\ket{j}\bra{k} - \overline{A_{jk}}\ket{k}\bra{j}
\end{array}$$

The proofs go as follows.
$$\begin{array}{rcl}
A
& = &
\frac{1}{2}\sum_{j,k}(A_{jk} + A_{jk})\ket{j}\bra{k} \\
& = &
\frac{1}{2}\sum_{j,k}(A_{jk} + \overline{A_{kj}})\ket{j}\bra{k} \\
& = &
\frac{1}{2}\Big(\sum_{j,k}A_{jk}\ket{j}\bra{k} + 
   \sum_{j,k}\overline{A_{kj}}\ket{j}\bra{k}\Big) \\
& = &
\frac{1}{2}\Big(\sum_{j,k}A_{jk}\ket{j}\bra{k} + 
   \sum_{j,k}\overline{A_{jk}}\ket{k}\bra{j}\Big) \\
& = &
\frac{1}{2}\sum_{j,k}A_{jk}\ket{j}\bra{k} + 
   \overline{A_{jk}}\ket{k}\bra{j} \\
0
& = &
\frac{1}{2}\sum_{j,k}(A_{jk} - A_{jk})\ket{j}\bra{k} \\
& = &
\frac{1}{2}\sum_{j,k}(A_{jk} - \overline{A_{kj}})\ket{j}\bra{k} \\
& = &
\frac{1}{2}\Big(\sum_{j,k}A_{jk}\ket{j}\bra{k} - 
   \sum_{j,k}\overline{A_{kj}}\ket{j}\bra{k}\Big) \\
& = &
\frac{1}{2}\Big(\sum_{j,k}A_{jk}\ket{j}\bra{k} - 
   \sum_{j,k}\overline{A_{jk}}\ket{k}\bra{j}\Big) \\
& = &
\frac{1}{2}\sum_{j,k}A_{jk}\ket{j}\bra{k} - \overline{A_{jk}}\ket{k}\bra{j}
\end{array}$$
}

\subsection*{Positive operators}

An operator $A\colon H\rightarrow H$ is called positive if the inner
product $\inprod{Ax}{x}$ is a non-negative real number, for each $x\in
H$. In that case one writes $A\geq 0$. This is equivalent to: $A =
BB^{\dag}$, for some operator $B$, and also to: all eigenvalues are
non-negative reals.  In a spectral decomposition $A =
\sum_{j}\lambda_{j}\ket{j}\bra{j}$ a positive operator $A$ has
eigenvalues $\lambda_{j} \in \Rnn$ for all $j$. Hence the trace
$\tr(A)$ is a non-negative real number. The set of positive operators
on $H$ is written here as $\Pos(H)$. It forms a module over the
semiring $\Rnn$ of non-negative reals since positive operators are
closed under addition and under scalar multiplication with
$r\in\Rnn$. A positive operator is clearly self-adjoint, since
$A^{\dag} = (BB^{\dag})^{\dag} = B^{\dag\dag}B^{\dag} = BB^{\dag} =
A$. Thus there are inclusion maps $\Pos(H) \hookrightarrow \SA(H)
\hookrightarrow \BL(H)$.  We can describe taking positive operators as
a functor $\Pos\colon\Hilb\rightarrow \Mod[\Rnn]$ from Hilbert spaces
to modules over the non-negative real numbers. The action of $\Pos$ on
maps is like for $\SA$ and $\BL$ in~(\ref{BLVectEqn}), and is
well-defined, since if $C\colon H\rightarrow K$ in $\Hilb$ and $A\geq
0$, then $\Pos(C)(A) = CAC^{\dag} \geq 0$ since for each $x\in K$,
$$\begin{array}{rcccl}
\inprod{CAC^{\dag}x}{x}
& = &
\inprod{AC^{\dag}x}{C^{\dag}x}
& \geq &
0.
\end{array}$$

\noindent As an aside we recall that via positivity one obtains the
L\"owner order on arbitrary operators $A,B$, defined as: $A \leq B$
iff $B-A\geq 0$. Thus: $A\leq B$ iff $\exin{P}{\Pos(H)}{A+P =
  B}$. Hence the spaces $\Pos(H) \hookrightarrow \SA(H)
\hookrightarrow \BL(H)$ are actually ordered (see
also~\cite{Selinger04,Edalat04}).


\auxproof{
Suppose $A = BB^{\dag}$. Then $\inprod{Ax}{x} = \inprod{BB^{\dag}x}{x}
= \inprod{B^{\dag}x}{B^{\dag}x} \geq 0$.

Conversely, if $\inprod{Ax}{x} \geq 0$ and $A$ is self-adjoint, then
we can form its spectral decomposition $A =
\sum_{j}\lambda_{j}\ket{j}\bra{j}$ and conclude that $\lambda_{j} \geq
0$ for each $j$. Hence we take $B =
\sum_{j}\sqrt{\lambda_{j}}\ket{j}\bra{j}$. It is self-adjoint and
satisfies $BB^{\dag} = A$.

Next suppose $A$ is self-adjoint and positive, with spectral
decomposition $A = \sum_{j}\lambda_{j}\ket{j}\bra{j}$. Then for each
eigenvector $0 \leq \inprod{A\ket{k}}{k} = \lambda_{k}\inprod{k}{k}$.
Since $\inprod{k}{k} \geq 0$ also $\lambda_{k} \geq 0$.

Conversely, assume $A = \sum_{j}\lambda_{j}\ket{j}\bra{j}$ and
$\lambda_{j} \geq 0$. Then write an arbitrary element $x = \sum_{j}
x_{j}\ket{j}$, wrt. the basis $(\ket{j})$ of eigenvectors, so that $Ax
= \sum_{j}\lambda_{j}x_{j}$. But then $\inprod{Ax}{x} =
\sum_{j}\lambda_{j}x_{j}\overline{x_j} =
\sum_{j}\lambda_{j}|x_{j}|^{2} \geq 0$.
}

\auxproof{
We also include the verification that the l\"owner order is
a partial order on $\BL(H)$.
\begin{itemize}
\item Reflexivity is easy: $A-A = 0$ is positive.

\item Transitivity is also easy: if $A\leq B \leq C$, then $B-A \geq
  0$ and $C-B \geq 0$. Hence also $C - A = (C-B) + (B-A) \geq 0$, so
  $A\leq C$.

\item For antisymmetry assume $A\leq B$ and $B\leq A$, so that
  $B-A\geq 0$ and $A-B \geq 0$. This means that $\inprod{(A-B)x}{x} =
  0$ for all $x$. Hence we are done if can prove:
$$\all{x}{\inprod{Cx}{x}=0} \Longrightarrow C=0$$

\noindent This follows from a result that a bilinear form is
determined by its diagonal values (see
\textit{e.g.}~\cite[\S\S3]{Halmos57}). Write $\widehat{C}(x) = 
\inprod{Cx}{x}$. Then:
$$\begin{array}{rcl}
\inprod{Cx}{y}
& = &
\widehat{C}(\frac{1}{2}(x+y)) - 
   \widehat{C}(\frac{1}{2}(x-y)) +
  i\widehat{C}(\frac{1}{2}(x+iy)) -
    i\widehat{C}(\frac{1}{2}(x-iy)).
\end{array}\eqno{(*)}$$

\noindent Then we are done because it gives $\inprod{Cx}{y} = 0$ for
all $x,y$. In particular $\inprod{Cx}{Cx} = 0$, and thus $Cx=0$.
Since this holds for all $x$, we get $C=0$, as required.

For a proof of this equation (*) we first notice that:
$$\begin{array}{rcl}
\widehat{C}(\frac{1}{2}(x+y))
& = &
\inprod{\frac{1}{2}C(x+y)}{\frac{1}{2}(x+y)} \\
& = &
\frac{1}{2}\inprod{Cx}{\frac{1}{2}(x+y)} + 
   \frac{1}{2}\inprod{Cy}{\frac{1}{2}(x+y)} \\
& = &
\frac{1}{4}\inprod{Cx}{x} + \frac{1}{4}\inprod{Cx}{y} +
   \frac{1}{4}\inprod{Cy}{x} + \frac{1}{4}\inprod{Cy}{y} \\
\widehat{C}(\frac{1}{2}(x-y))
& = &
\inprod{\frac{1}{2}C(x-y)}{\frac{1}{2}(x-y)} \\
& = &
\frac{1}{2}\inprod{Cx}{\frac{1}{2}(x-y)} - 
   \frac{1}{2}\inprod{Cy}{\frac{1}{2}(x-y)} \\
& = &
\frac{1}{4}\inprod{Cx}{x} - \frac{1}{4}\inprod{Cx}{y} -
   \frac{1}{4}\inprod{Cy}{x} + \frac{1}{4}\inprod{Cy}{y}
\end{array}$$

\noindent Hence as first intermediate result we have:
$$\begin{array}{rcl}
\widehat{C}(\frac{1}{2}(x+y)) - \widehat{C}(\frac{1}{2}(x-y))
& = &
\frac{1}{2}\inprod{Cx}{y} + \frac{1}{2}\inprod{Cy}{x}.
\end{array}$$

\noindent And thus (using conjugate linearity in the second argument):
$$\begin{array}{rcl}
\widehat{C}(\frac{1}{2}(x+iy)) - \widehat{C}(\frac{1}{2}(x-iy))
& = &
-\frac{i}{2}\inprod{Cx}{y} + \frac{i}{2}\inprod{Cy}{x}.
\end{array}$$

\noindent By multiplying with $i$ on both sides we get:
$$\begin{array}{rcl}
i\widehat{C}(\frac{1}{2}(x+iy)) - i\widehat{C}(\frac{1}{2}(x-iy))
& = &
\frac{1}{2}\inprod{Cx}{y} - \frac{1}{2}\inprod{Cy}{x}.
\end{array}$$

\noindent Hence by adding this to the previous intermediate
result we get the required equation:
$$\begin{array}{rcl}
\lefteqn{\textstyle
\Big(\widehat{C}(\frac{1}{2}(x+y)) - 
   \widehat{C}(\frac{1}{2}(x-y))\Big)
  +\Big(i\widehat{C}(\frac{1}{2}(x+iy)) - 
    i\widehat{C}(\frac{1}{2}(x-iy))\Big)} \\
& = &
\Big(\frac{1}{2}\inprod{Cx}{y} + \frac{1}{2}\inprod{Cy}{x}\Big)
   + \Big(\frac{1}{2}\inprod{Cx}{y} - \frac{1}{2}\inprod{Cy}{x}\Big) \\
& = &
\inprod{Cx}{y}.
\end{array}$$

\end{itemize}
}

\begin{proposition}
\label{PosDualProp}
For $H\in\FdHilb$, the subset $\Pos(H) \hookrightarrow \SA(H)$ of
positive operators is a module over the non-negative reals
$\Rnn$, for which there is a natural isomorphism in $\Mod[\Rnn]$:
\begin{equation}
\label{PosDualEqn}
\vcenter{\xymatrix{
\Pos(H)\ar[r]^-{\hs[\Pos]}_-{\cong} & \Pos(H)^{*} =
\Pos(H) \multimap \Rnn
\quad\mbox{by}\quad
{\begin{array}{rcl}
\hs[\Pos](A)(B)
& = &
\tr(AB).
\end{array}}
}}
\end{equation}

\noindent This isomorphism gives rise to a map of adjunctions:
$$\xymatrix{
\FdHilb\ar@/^1.2ex/[rr]^-{(-)^{\dag}}\ar[d]_{\Pos} & \bot & 
   \FdHilb\rlap{$\op$}\ar@/^1.2ex/[ll]^-{(-)^{\dag}}\ar[d]^{\Pos} \\
\Mod[\Rnn]\ar@/^1.2ex/[rr]^-{(-)\multimap \Rnn} 
   & \bot & 
   \Mod[\Rnn]\rlap{$\op$}
   \ar@/^1.2ex/[ll]^-{(-)\multimap \Rnn}
}$$
\end{proposition}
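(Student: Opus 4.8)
The plan is to piggyback on Proposition~\ref{SADualProp}, just as that result piggybacked on Proposition~\ref{BLDualProp}. First I would dispatch well-definedness: for $A,B\in\Pos(H)$ the number $\tr(AB)$ lies in $\Rnn$, which follows from a spectral decomposition $A=\sum_{j}\lambda_{j}\ket{j}\bra{j}$ with all $\lambda_{j}\geq0$, giving $\tr(AB)=\sum_{j}\lambda_{j}\bra{j}B\ket{j}\geq0$ since $B\geq0$; linearity of $\tr$ then makes each $\hs[\Pos](A)=\lam{B}{\tr(AB)}$ an $\Rnn$-module map $\Pos(H)\rightarrow\Rnn$ and makes $A\mapsto\hs[\Pos](A)$ an $\Rnn$-module map $\Pos(H)\rightarrow\Pos(H)\multimap\Rnn$. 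Injectivity is then immediate from the trick used again below: if $\tr(AB)=\tr(A'B)$ for all positive $B$, take $B=\ket{v}\bra{v}$ to get $\inprod{(A-A')v}{v}=0$ for all $v$, and since $A-A'$ is self-adjoint this forces $A=A'$.

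For surjectivity I would build the inverse by extension. Given an $\Rnn$-module map $f\colon\Pos(H)\rightarrow\Rnn$, note that every self-adjoint operator is a difference $S=P-Q$ of positive operators (the positive and negative parts of its spectral decomposition), and set $\overline{f}(S)=f(P)-f(Q)$. The point requiring a small argument is well-definedness: if $P-Q=P'-Q'$ then $P+Q'=P'+Q$ in $\Pos(H)$, so additivity of $f$ gives $f(P)+f(Q')=f(P')+f(Q)$, i.e.\ $f(P)-f(Q)=f(P')-f(Q')$ --- here one uses that $\Pos(H)$ is cancellative, being a submonoid of the additive group $\SA(H)$. That $\overline{f}$ is additive, $\R$-linear and restricts to $f$ on $\Pos(H)$ is then routine, so Proposition~\ref{SADualProp} supplies a unique $A\in\SA(H)$ with $\overline{f}=\hs[\SA](A)=\tr(A-)$, and in particular $f(B)=\tr(AB)$ for all $B\in\Pos(H)$.

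The only step with genuine content beyond the self-adjoint case is checking that this $A$ is \emph{positive}. Evaluating $f$ on rank-one positive operators does it:
\[
\inprod{Av}{v}\;=\;\tr\big(A\ket{v}\bra{v}\big)\;=\;f\big(\ket{v}\bra{v}\big)\;\geq\;0
\qquad\text{for every }v\in H,
\]
since $f$ is $\Rnn$-valued; hence $A\geq0$, i.e.\ $A\in\Pos(H)$. Putting $\hs[\Pos]^{-1}(f)=A$, one composite is the identity by construction, and $\hs[\Pos]^{-1}\big(\hs[\Pos](A)\big)=A$ follows from the uniqueness clause of Proposition~\ref{SADualProp}, because the extension of $\tr(A-)\colon\Pos(H)\rightarrow\Rnn$ to $\SA(H)$ is again $\tr(A-)$. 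Thus $\hs[\Pos]$ is a bijective $\Rnn$-module map, hence an isomorphism in $\Mod[\Rnn]$.

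Naturality in $H$ is the same cyclic-trace computation as for $\hs[\BL]$ in Proposition~\ref{BLDualProp}, only \emph{without} daggers since all operators in sight are self-adjoint: for $C\colon H\rightarrow K$ the relevant square commutes because $\tr\big(A\,CBC^{\dag}\big)=\tr\big(C^{\dag}AC\,B\big)$ by the cyclic property, together with $\Pos(C)(B)=CBC^{\dag}$ and $\Pos(C^{\dag})(A)=C^{\dag}AC$. Finally, the map of adjunctions is obtained exactly as in the two previous propositions, transporting the self-dual adjunction $(-)^{\dag}\dashv(-)^{\dag}$ on $\FdHilb$ to the self-dual adjunction $(-)\multimap\Rnn\dashv(-)\multimap\Rnn$ on $\Mod[\Rnn]$ along $\Pos$ via $\hs[\Pos]$; the coherence conditions on units and counits reduce once more to the trace identity and can be left to the reader. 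I expect the subtlest bookkeeping point to be the well-definedness of the extension $\overline{f}$ (resting on cancellativity of $\Pos(H)$), with the positivity of the recovered $A$ being the one conceptually new move; everything else is inherited from Proposition~\ref{SADualProp}.
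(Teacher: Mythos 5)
Your proposal is correct and follows essentially the same route as the paper: extend $f\colon\Pos(H)\rightarrow\Rnn$ to $\SA(H)$ via positive/negative parts, check well-definedness from additivity, invoke Proposition~\ref{SADualProp}, and then show the resulting operator is positive by testing against rank-one positive operators. The only cosmetic differences are that you prove injectivity directly with $\ket{v}\bra{v}$ (the paper gets it from the uniqueness clause of Proposition~\ref{SADualProp}) and you establish positivity via $\inprod{Av}{v}=f(\ket{v}\bra{v})\geq 0$ for arbitrary $v$, where the paper argues by contradiction using the eigenprojections of $A$'s own spectral decomposition; also, your appeal to cancellativity of $\Pos(H)$ is unnecessary, since the rearranged identity $P+Q'=P'+Q$ already lives in $\Pos(H)$ and additivity of $f$ suffices.
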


\begin{proof}
We first have to check that $\tr(AB) \geq 0$, for $A,B\in\Pos(H)$, so
that indeed $\tr(A-)$ has type $\Pos(H)\rightarrow\Rnn$. We do so by
first writing the spectral decomposition as $A =
\sum_{j}\lambda_{j}\ket{j}\bra{j}$, with $\lambda_{j} \geq 0$. Then:
$$\begin{array}{rcl}
\tr(AB)
\hspace*{\arraycolsep} = \hspace*{\arraycolsep}
\sum_{j}\lambda_{j}\tr(\ket{j}\bra{j}B)
& = &
\sum_{j}\lambda_{j}\tr(\bra{j}B\ket{j}) \\
& = &
\sum_{j}\lambda_{j}\tr(\inprod{Bj}{j}) \\
& = &
\sum_{j}\lambda_{j}\inprod{Bj}{j} \\
& \geq &
0, \qquad \mbox{since }\inprod{Bj}{j} \geq 0.
\end{array}$$

\noindent These maps $\hs[\Pos] = \tr(A-)$ clearly preserve the module
structure: additions and scalar multiplication (with a non-negative
real number).  Next, assume we have a linear map $f\colon \Pos(H)
\rightarrow \Rnn$ in $\Mod[\Rnn]$. Like before, we wish to extend it,
this time to a map $f'\colon \SA(H) \multimap \R$. If we have an
arbitrary self-adjoint operator $B\in\SA(H)$ we can write it as
difference $B = B_{p} - B_{n}$ of its positive and negative parts
$B_{p}, B_{n}\in\Pos(H)$. One way to do it is to write $B =
\sum_{j}\lambda_{j}\ket{j}\bra{j}$ as spectral decomposition, and to
separate the (real-valued) eigenvalues $\lambda_j$ into negative and
non-negative ones. Then take:
\begin{equation}
\label{PosNegOperatorEqn}
\begin{array}{rcccccl}
B_{p}
& = &
\displaystyle\sum_{\lambda_{j}\geq 0}\lambda_{j}\ket{j}\bra{j}
& \quad\mbox{and}\quad &
B_{n}
& = &
\displaystyle\sum_{\lambda_{j}< 0}-\lambda_{j}\ket{j}\bra{j}.
\end{array}
\end{equation}

\noindent Now we can define $f'(B) = f(B_{p}) - f(B_{n}) \in
\R$. This outcome is independent of the choice of $B_{p},
B_{n}$, since if $C,D\in\Pos(H)$ also satisfy $B = C - D$, then
$B_{p} + D = C + B_{n}$, so that by linearity:
$$f(B_{p}) + f(D) = f(B_{p} + D) = f(C + B_{n}) = f(C) + f(B_{n}),$$

\noindent and thus:
$$f'(B)
=
f(B_{p}) - f(B_{n})
=
f(C) - f(D).$$

\noindent It is not hard to see that the resulting function $f'\colon
\SA(H) \rightarrow \R$ is linear (in $\Vect[\R]$). Hence by
Proposition~\ref{SADualProp} there is a unique $A =
\hs[\SA]^{-1}(f')\in\SA(H)$ with $f' = \hs[\SA](A) = \tr(A-) \colon
\SA(H) \rightarrow \R$. For a positive operator $B\in\Pos(H)$ we then
get $\tr(AB) = f'(B) = f(B) \geq 0$, since $B = B_{p}$ for such a
positive $B$.

\auxproof{
Linearity of $f'$.
$$\begin{array}{rcl}
f'(0)
& = &
f(0) - f(0) \\
& = &
0 - 0 \\
& = &
0 \\
f'(B) + f'(C)
& = &
\big(f(B_{p}) - f(B_{n})\big) + \big(f(C_{p}) - f(C_{n})\big) \\
& = &
\big(f(B_{p}) + f(C_{p})\big) - \big(f(B_{n}) + f(C_{n})\big) \\
& = &
\big(f(B_{p} + C_{p})\big) - \big(f(B_{n} + C_{n})\big) \\
& = &
f'(B) + f'(C), \qquad
   \mbox{since $B+C = (B_{p}+C_{p}) - (B_{n}+C_{n})$} \\
f'(rB)
& = &
f((rB)_{p}) - f((rB)_{n}) \\
& = &
\left\{\begin{array}{ll}
f(rB_{p}) - f(rB_{n}) & \quad \mbox{if } r \geq 0 \\
f((-r)B_{n}) - f((-r)B_{p}) & \quad \mbox{if } r < 0
\end{array}\right. \\
& = &
\left\{\begin{array}{ll}
rf(B_{p}) - rf(B_{n}) & \quad \mbox{if } r \geq 0 \\
(-r)f(B_{n}) - (-r)f(B_{p}) & \quad \mbox{if } r < 0
\end{array}\right. \\
& = &
rf(B_{p}) - rf(B_{n}) \\
& = &
rf'(B).
\end{array}$$
}

We now write $A = A_{p} - A_{n}$ as in~(\ref{PosNegOperatorEqn}),
where $A_{n} = \sum_{\lambda_{j}< 0}-\lambda_{j}\ket{j}\bra{j}$.
Projection operators of the form $\ket{j}\bra{j}$ are positive, so
that we get for each $j$ with $\lambda_{j} < 0$
$$\begin{array}{rcl}
0 
\hspace*{\arraycolsep} \leq \hspace*{\arraycolsep}
\tr(A\ket{j}\bra{j})
& = &
\tr(A_{p}\ket{j}\bra{j}) - \tr(A_{n}\ket{j}\bra{j}) \\
& = &
0 - (-\lambda_{j}) \\
& = &
\lambda_{j}.
\end{array}$$

\noindent But this is impossible, since we assumed $\lambda_{j} < 0$.
Hence $A_{n} = 0$, and $A = A_{p}$ is a positive operator. Thus we
have $f = \tr(A-)\colon \Pos(H) \rightarrow \Rnn$, as required, so
that we can take $\hs[\Pos]^{-1}(f) = \hs[\SA]^{-1}(f')$.

We briefly check uniqueness: if $C\in\Pos(H)$ also satisfies $f =
\tr(C-)$, then for an arbitrary $B\in\SA(H)$,
$$f'(B)
=
f(B_{p}) - f(B_{n})
=
\tr(CB_{p}) - \tr(CB_{n})
=
\tr(C(B_{p}-B_{n}))
=
\tr(CB).$$

\noindent But then $C=A$ by the uniqueness from
Proposition~\ref{SADualProp}. \QED
\end{proof}

\auxproof{
Like in Lemma~\ref{hsSAinvLem} we would like to have an explicit
formula for the inverse $\hs[\Pos]^{-1}$ of the Hilbert-Schmidt map
for positive operators.

\begin{lemma}
\label{hsPosinvLem}
Assume Hilbert space $H$ has basis $\ket{1}, \ldots, \ket{n}$.  For a
linear map $f\colon\Pos(H)\rightarrow\Rnn$ in $\Mod[\Rnn]$ we have:
\begin{equation}
\label{hsPosinvEqn}
\hspace*{-1em}\begin{array}{rcl}
\hs[\Pos]^{-1}(f)_{jk}
& = &
f\Big(\frac{(\ket{j}+\ket{k})(\bra{j}+\bra{k})}{4}\Big)
  - f\Big(\frac{(\ket{j}-\ket{k})(\bra{j}-\bra{k})}{4}\Big) \; + \\
& & \quad if\Big(\frac{(i\ket{j}+\ket{k})(-i\bra{j}+\bra{k})}{4}\Big) -
   if\Big(\frac{(i\ket{j}-\ket{k})(-i\bra{j}-\bra{k})}{4}\Big).
\end{array}
\end{equation}
\end{lemma}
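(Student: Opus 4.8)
The plan is to piggy-back on the two-step construction of $\hs[\Pos]^{-1}$ used in the proof of Proposition~\ref{PosDualProp}, combined with the formula for $\hs[\SA]^{-1}$ that is implicit in the proof of Proposition~\ref{SADualProp}. Recall that, given a linear map $f\colon\Pos(H)\rightarrow\Rnn$, one first extends $f$ to a linear map $f'\colon\SA(H)\rightarrow\R$ satisfying $f'(C-D) = f(C) - f(D)$ for all $C,D\in\Pos(H)$, and then sets $\hs[\Pos]^{-1}(f) = \hs[\SA]^{-1}(f')$. From the proof of Proposition~\ref{SADualProp} one reads off, for a linear map $g\colon\SA(H)\rightarrow\R$, that
$$\hs[\SA]^{-1}(g)_{jk}
 \;=\;
 g\Big(\frac{\ket{j}\bra{k}+\ket{k}\bra{j}}{2}\Big)
  + i\, g\Big(\frac{i\ket{j}\bra{k}-i\ket{k}\bra{j}}{2}\Big),$$
since there the real and imaginary parts of the $jk$-entry are computed using precisely the self-adjoint operators $\ket{j}\bra{k}+\ket{k}\bra{j}$ and $i\ket{j}\bra{k}-i\ket{k}\bra{j}$. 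Hence it remains only to rewrite these two self-adjoint operators as differences of positive operators, and then evaluate $f'$ on them.

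For that I would use elementary polarisation identities. Every operator of the form $\ket{v}\bra{v}$ is positive (it is, up to a non-negative scalar, the projection onto $\C\ket{v}$), and a one-line expansion gives
$$(\ket{j}+\ket{k})(\bra{j}+\bra{k}) - (\ket{j}-\ket{k})(\bra{j}-\bra{k})
 \;=\; 2\ket{j}\bra{k} + 2\ket{k}\bra{j},$$
$$(i\ket{j}+\ket{k})(-i\bra{j}+\bra{k}) - (i\ket{j}-\ket{k})(-i\bra{j}-\bra{k})
 \;=\; 2i\ket{j}\bra{k} - 2i\ket{k}\bra{j}.$$
So $\frac{1}{2}(\ket{j}\bra{k}+\ket{k}\bra{j})$ and $\frac{1}{2}(i\ket{j}\bra{k}-i\ket{k}\bra{j})$ are each of the shape $\frac{1}{4}(C-D)$ with $C$ and $D$ exactly the four rank-one positive operators appearing in~(\ref{hsPosinvEqn}).

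Applying $f'$ and using that $f'$ is $\R$-linear with $f'(C-D) = f(C) - f(D)$, together with the $\Rnn$-homogeneity of $f$ (so that the scalar $\frac{1}{4}$ may be absorbed into the argument of $f$), yields
$$f'\Big(\frac{\ket{j}\bra{k}+\ket{k}\bra{j}}{2}\Big)
 = f\Big(\frac{(\ket{j}+\ket{k})(\bra{j}+\bra{k})}{4}\Big)
  - f\Big(\frac{(\ket{j}-\ket{k})(\bra{j}-\bra{k})}{4}\Big),$$
and the analogous identity for the imaginary part. Substituting both into the displayed formula for $\hs[\SA]^{-1}(f')_{jk} = \hs[\Pos]^{-1}(f)_{jk}$ produces exactly~(\ref{hsPosinvEqn}).

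The only genuine obstacle is bookkeeping: one must pick the four rank-one positive operators in the polarisation identities so that the anti-linear bra slot contributes the correct conjugate $-i$, and keep careful track of the combined factor $\frac{1}{4}$ arising from the $\frac{1}{2}$ in the $\hs[\SA]^{-1}$ formula and the $\frac{1}{2}$ hidden inside the polarisation identities. Alternatively one can bypass $\hs[\SA]^{-1}$ and verify the formula directly: let $A$ be the operator whose $jk$-entry is the right-hand side of~(\ref{hsPosinvEqn}), and check $\tr(AB) = f(B)$ for all $B\in\Pos(H)$, which by $\Rnn$-linearity reduces to the case $B = \ket{v}\bra{v}$. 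But the route through Propositions~\ref{SADualProp} and~\ref{PosDualProp} is shorter, since the linearity and uniqueness work has already been done there.
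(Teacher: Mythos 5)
Your proposal is correct and follows essentially the same route as the paper's own proof: factor $\hs[\Pos]^{-1}$ through the extension $f\mapsto f'$ and the explicit formula $\hs[\SA]^{-1}(g)_{jk} = g\big(\tfrac{\ket{j}\bra{k}+\ket{k}\bra{j}}{2}\big) + i\,g\big(\tfrac{i\ket{j}\bra{k}-i\ket{k}\bra{j}}{2}\big)$, then split those two self-adjoint operators into exactly the four positive rank-one pieces $\tfrac{1}{4}\ket{v}\bra{v}$ appearing in~(\ref{hsPosinvEqn}) and use that $f'(C-D)=f(C)-f(D)$ is independent of the chosen decomposition. The only cosmetic difference is that you justify positivity by recognising the pieces as $\ket{v}\bra{v}$ via polarisation, whereas the paper checks $\inprod{U_{p}x}{x}\geq 0$ directly; the bookkeeping of the $\tfrac14$ factors and the conjugated $-i$ in the bra slot is handled correctly in both.
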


\begin{proof}
The first step is to separate the self-adjoint matrices $U =
\frac{\ket{j}\bra{k} + \ket{k}\bra{j}}{2}$ and $V =
\frac{i\ket{j}\bra{k} - i\ket{k}\bra{j}}{2}$ from the
equation~(\ref{hsSAinvEqn}) for $\hs[\SA]^{-1}$ into positive and
negative parts. One has $U = U_{p} - U_{n}$ for positive operators:
$$\begin{array}{rclcrcl}
U_{p}
& = &
\frac{(\ket{j}+\ket{k})(\bra{j}+\bra{k})}{4}
& \qquad &
U_{n}
& = &
\frac{(\ket{j}-\ket{k})(\bra{j}-\bra{k})}{4}.
\end{array}$$

\noindent For instance, $U_p$ is positive, since for an arbitrary
element $x = \sum_{\ell}x_{\ell}\ket{\ell}$, then $U_{p}x = 
(x_{j}+x_{k})(\ket{j} + \ket{k})$ and so:
$$\begin{array}{rcl}
\inprod{U_{p}x}{x}
& = &
\frac{1}{4}\big(\overline{x_j}(x_{j}+x_{k}) + 
   \overline{x_k}(x_{j}+x_{k})\big) \\
& = &
\frac{1}{4}\overline{(x_{j}+x_{k})}(x_{j}+x_{k}) \\
& = &
\frac{1}{4}|x_{j}+x_{k}|^{2} \\
& \geq &
0.
\end{array}$$

Similarly, $U_{n}x = (x_{j}-x_{k})(\ket{j}-\ket{k})$ and so:
$$\begin{array}{rcl}
\inprod{U_{n}x}{x}
& = &
\frac{1}{4}\big(\overline{x_j}(x_{j}-x_{k}) - 
   \overline{x_k}(x_{j}+x_{k})\big) \\
& = &
\frac{1}{4}\overline{(x_{j}-x_{k})}(x_{j}-x_{k}) \\
& = &
\frac{1}{4}|x_{j}-x_{k}|^{2} \\
& \geq &
0.
\end{array}$$

\noindent Further,
$$\begin{array}{rcl}
U_{p} - U_{n}
& = &
\frac{1}{4}\Big((\ket{j}+\ket{k})(\bra{j}+\bra{k}) - 
  (\ket{j}-\ket{k})(\bra{j}-\bra{k})\Big) \\
& = &
\frac{1}{4}\Big(\ket{j}\bra{j} + \ket{j}\bra{k} + 
   \ket{k}\bra{j} + \ket{k}\bra{k} \; - \\
& & \qquad
  \ket{j}\bra{j} + \ket{j}\bra{k} + \ket{k}\bra{j} - \ket{k}\bra{k}\Big) \\
& = &
\frac{1}{2}\Big(\ket{j}\bra{k} + \ket{k}\bra{j}\Big) \\
& = &
U.
\end{array}$$

\noindent Similarly, $V = V_{p} - V_{n}$ for positive operators:
$$\begin{array}{rclcrcl}
V_{p}
& = &
\frac{(i\ket{j}+\ket{k})(-i\bra{j}+\bra{k})}{4}
& \qquad &
V_{n}
& = &
\frac{(i\ket{j}-\ket{k})(-i\bra{j}-\bra{k})}{4}.
\end{array}$$

We check, again for $x = \sum_{\ell}x_{\ell}\ket{\ell}$,
$$\begin{array}{rcl}
\inprod{V_{p}x}{x}
& = &
\frac{1}{4}\inprod{(-ix_{j}+x_{k})(i\ket{j}+\ket{k})}{x} \\
& = &
\frac{1}{4}\Big(i\overline{x_j}(-ix_{j}+x_{k}) + 
   \overline{x_k}(-ix_{j}+x_{k})\Big) \\
& = &
\frac{1}{4}\overline{(-ix_{j}+x_{k})}(-ix_{j}+x_{k}) \\
& = &
\frac{1}{4}|-ix_{j}+x_{k}|^{2} \\
& \geq &
0 \\
\inprod{V_{n}x}{x}
& = &
\frac{1}{4}\inprod{(-ix_{j}-x_{k})(i\ket{j}-\ket{k})}{x} \\
& = &
\frac{1}{4}\Big(i\overline{x_j}(-ix_{j}-x_{k}) -
   \overline{x_k}(-ix_{j}-x_{k})\Big) \\
& = &
\frac{1}{4}\overline{(-ix_{j}-x_{k})}(-ix_{j}-x_{k}) \\
& = &
\frac{1}{4}|-ix_{j}-x_{k}|^{2} \\
& \geq &
0
\end{array}$$

\noindent Further,
$$\begin{array}{rcl}
V_{p} - V_{n}
& = &
\frac{1}{4}\Big((i\ket{j}+\ket{k})(-i\bra{j}+\bra{k}) - 
  (i\ket{j}-\ket{k})(-i\bra{j}-\bra{k})\Big) \\
& = &
\frac{1}{4}\Big(\ket{j}\bra{j} + i\ket{j}\bra{k} -
   i\ket{k}\bra{j} + \ket{k}\bra{k} \; - \\
& & \qquad
  \ket{j}\bra{j} + i\ket{j}\bra{k} - i\ket{k}\bra{j} - 
   \ket{k}\bra{k}\Big) \\
& = &
\frac{1}{2}\Big(i\ket{j}\bra{k} - i\ket{k}\bra{j}\Big) \\
& = &
U.
\end{array}$$

\noindent Now we can extract the formula for $\hs[\Pos]^{-1}(f)$ from
the proof of Proposition~\ref{PosDualProp} and derive the following
matrix entries:
$$\hspace*{-.5em}\begin{array}[b]{rcl}
\hs[\Pos]^{-1}(f)
& = &
\hs[\SA]^{-1}\big(\lamin{B}{\SA(H)}{f(B_{p}) - f(B_{n})}\big)_{jk} \\
& = &
f\Big(\big(\frac{\ket{j}\bra{k} + \ket{k}\bra{j}}{2}\big)_{p}\Big) -
f\Big(\big(\frac{\ket{j}\bra{k} + \ket{k}\bra{j}}{2}\big)_{n}\Big) \\
& & \quad
if\Big(\big(\frac{i\ket{j}\bra{k} - i\ket{k}\bra{j}}{2}\big)_{p}\Big) -
if\Big(\big(\frac{i\ket{j}\bra{k} - i\ket{k}\bra{j}}{2}\big)_{n}\Big)
   \quad\mbox{by~\rlap{(\ref{hsPosinvEqn})}} \\
& = &
f\Big(\frac{(\ket{j}+\ket{k})(\bra{j}+\bra{k})}{4}\Big)
  - f\Big(\frac{(\ket{j}-\ket{k})(\bra{j}-\bra{k})}{4}\Big) \; + \\
& & \quad if\Big(\frac{(i\ket{j}+\ket{k})(-i\bra{j}+\bra{k})}{4}\Big) -
   if\Big(\frac{(i\ket{j}-\ket{k})(-i\bra{j}-\bra{k})}{4}\Big).
\end{array}\eqno{\QEDbox}$$
\end{proof}

We check the isomorphism using this formulation. First, for 
$A\in\Pos(H)$,
$$\begin{array}{rcl}
\lefteqn{\big(\hs[\Pos]^{-1} \after \hs[\Pos]\big)(A)_{jk}} \\
& = &
\hs[\Pos]^{-1}\big(\tr(A-)\big)_{jk} \\
& = &
\frac{1}{4}\Big[\tr\Big(A(\ket{j}+\ket{k})(\bra{j}+\bra{k})\Big) -
   \tr\Big(A(\ket{j}-\ket{k})(\bra{j}-\bra{k})\Big) \; + \\
& & \qquad
   i\tr\Big(A(i\ket{j}+\ket{k})(-i\bra{j}+\bra{k})\Big) -
   i\tr\Big(A(i\ket{j}-\ket{k})(-i\bra{j}-\bra{k})\Big)\Big] \\
& = &
\frac{1}{4}\Big[\tr\Big(A\big((\ket{j}+\ket{k})(\bra{j}+\bra{k}) -
   (\ket{j}-\ket{k})(\bra{j}-\bra{k})\big)\Big) \; + \\
& & \qquad
   i\tr\Big(A\big((i\ket{j}+\ket{k})(-i\bra{j}+\bra{k}) -
   (i\ket{j}-\ket{k})(-i\bra{j}-\bra{k})\big)\Big)\Big] \\
& = &
\frac{1}{4}\Big[\tr\Big(A\big(2\ket{j}\bra{k} + 
   2\ket{k}\bra{j}\big)\Big) + 
   i\tr\Big(A\big((2i\ket{j}\bra{k}) - 2i\ket{k}\bra{j}\big)\Big)\Big] \\
& = &
\frac{1}{2}\Big(\tr(\bra{k}A\ket{j}) + \tr(\bra{j}A\ket{k}) -
   \tr(\bra{k}A\ket{j}) + \tr(\bra{j}A\ket{k})\Big) \\
& = &
\frac{1}{2}\big(A_{jk} + A_{jk}\big) \\
& = &
A_{jk}.
\end{array}$$

Next assume a linear map $f\colon\Pos(H)\rightarrow\Rnn$ and a
positive operator $A\in\Pos(H)$, where we write $A_{kj} = B_{kj} +
iC_{kj}$ with $B_{kj},C_{kj}\in\R$. Since $\overline{A_{jk}} = A_{kj}$
we have $B_{jk} = B_{kj}$ and $C_{jk} = -C_{kj}$. We need to
distinguish whether these $B_{kj}, C_{kj}$ are non-negative or not,
because $f$ only preserves scalar multiplication with non-negative
scalars. We use the (non-indexed) abbreviations $U,V$ as in the
proof above.
$$\begin{array}{rcl}
\lefteqn{\big(\hs[\Pos] \after \hs[\Pos]^{-1}\big)(f)(A)} \\
& = &
\tr\big(\hs[\Pos]^{-1}(f)A\big) \\
& = &
\sum_{j}\big(\hs[\Pos]^{-1}(f)A\big)_{jj} \\
& = &
\sum_{j,k}\hs[\Pos]^{-1}(f)_{jk}A_{kj} \\
& = &
\sum_{j,k}\Big[f\Big(\frac{(\ket{j}+\ket{k})(\bra{j}+\bra{k})}{4}\Big)
  - f\Big(\frac{(\ket{j}-\ket{k})(\bra{j}-\bra{k})}{4}\Big) \; + \\
& & \quad if\Big(\frac{(i\ket{j}+\ket{k})(-i\bra{j}+\bra{k})}{4}\Big) -
   if\Big(\frac{(i\ket{j}-\ket{k})(-i\bra{j}-\bra{k})}{4}\Big)\Big]
   (B_{kj} + iC_{kj}) \\
& = &
\sum_{j,k}\Big[B_{kj}f(U_{p}) - B_{kj}f(U_{n}) + iB_{kj}f(V_{p}) 
   - iB_{kj}f(V_{n}) \; + \\
& & \qquad iC_{kj}f(U_{p}) - iC_{kj}f(U_{n}) - C_{kj}f(V_{p}) 
   + C_{kj}f(V_{n})\Big] \\
& = &
\left\{\begin{array}{ll}
\sum_{j,k}\Big[f(B_{kj}U_{p} + C_{kj}V_{n}) - f(B_{kj}U_{n} + C_{kj}V_{p}) 
   \; + \\
\qquad  if(B_{kj}V_{p} + C_{kj}U_{p}) - if(B_{kj}V_{n} + C_{kj}U_{n})\Big] 
  & \mbox{if } B_{kj} \geq 0, C_{kj} \geq 0 \\
\sum_{j,k}\Big[f(-B_{kj}U_{n} + C_{kj}V_{n}) - f(-B_{kj}U_{p} + C_{kj}V_{p}) 
   \; + \\
\qquad  if(-B_{kj}V_{n} + C_{kj}U_{p}) - if(-B_{kj}V_{p} + C_{kj}U_{n})\Big] 
  & \mbox{if } B_{kj} < 0, C_{kj} \geq 0 \\
\sum_{j,k}\Big[f(B_{kj}U_{p} - C_{kj}V_{p}) - f(B_{kj}U_{n} - C_{kj}V_{n}) 
   \; + \\
\qquad  if(B_{kj}V_{p} - C_{kj}U_{n}) - if(B_{kj}V_{n} - C_{kj}U_{p})\Big] 
  & \mbox{if } B_{kj} \geq 0, C_{kj} < 0 \\
\sum_{j,k}\Big[f(-B_{kj}U_{n} - C_{kj}V_{p}) - f(-B_{kj}U_{p} - C_{kj}V_{n}) 
   \; + \\
\qquad  if(-B_{kj}V_{n} - C_{kj}U_{n}) - if(-B_{kj}V_{p} - C_{kj}U_{p})\Big] 
  & \mbox{if } B_{kj} < 0, C_{kj} < 0
\end{array}\right. \\
& = &
\end{array}$$
}

This concludes our description of the spaces of operators $\BL(H)
\hookleftarrow \SA(H) \hookleftarrow \Pos(H)$ on a
(finite-dimensional) Hilbert space $H$, as naturally self-dual
modules. Before we proceed to density operators $\DM(H)$ and effects
$\Ef(H)$ on $H$ we wish to explore and exploit the similarities
between these modules (over $\C$, $\R$, and
$\Rnn$) in terms of algebras of a monad.

\section{Categories of modules as algebras}\label{ModuleSec}

We recall that a semiring~\cite{Golan99} is like a ring but without an
additive inverse. Modules are vector spaces except that the scalars
need only be a ring, and not a field. Here we generalise further and
will also consider modules over a semiring. In fact we have already
done in the previous section, when we talked about positive operators
forming a module over the non-negative reals $\Rnn$. As we now proceed
more systematically, we shall see that such a module over a semiring
consists of a commutative monoid of vectors, with scalar
multiplication by elements of the semiring. It will be captured as
algebra of the multiset monad.

In this section we thus start with the standard description of
categories of modules, over a semiring $S$, as categories of algebras
of a monad, namely of the multiset monad $\Mlt_S$ associated with
$S$. We shall be especially interested in the examples $S = \Rnn, \R,
\C$ giving us a uniform description of the categories of modules in
which the spaces of operators $\Pos(H)$, $\SA(H)$, $\BL(H)$ on a
Hilbert space $H$ live. The general theory of monads---see
\textit{e.g.}~\cite{MacLane71,BarrW85,Manes74,Borceux94}---gives us
certain structure for free, see Theorem~\ref{MonadAlgStructThm} below.

The main result in this section, Theorem~\ref{AlgCatAdjThm}, relates
the three spaces of operators $\Pos(H)$, $\SA(H)$, $\BL(H)$ via free
constructions between categories of modules.

To start, let $S$ be a semiring, consisting of a commutative additive
monoid $(S,+,0)$ and a multiplicative monoid $(S,\cdot,1)$, where
multiplication distributes over addition. One can define a
``multiset'' functor $\Mlt_{S}\colon\Sets\rightarrow\Sets$ by:
$$\begin{array}{rcl}
\Mlt_{S}(X)
& = &
\set{\varphi\colon X\rightarrow S}{\support(\varphi)\mbox{ is finite}},
\end{array}$$

\noindent where $\support(\varphi) = \setin{x}{X}{\varphi(x) \neq 0}$
is the support of $\varphi$. For a function $f\colon X\rightarrow Y$
one defines $\Mlt_{S}(f) \colon \Mlt_{S}(X) \rightarrow \Mlt_{S}(Y)$ by:
\begin{equation}
\label{MltEqn}
\begin{array}{rcl}
\Mlt_{S}(f)(\varphi)(y)
& = &
\sum_{x\in f^{-1}(y)}\varphi(x).
\end{array}
\end{equation}

\noindent Such a (finite) multiset $\varphi\in \Mlt_{s}(X)$ may be
written as formal sum $s_{1}\ket{x_{1}}+\cdots+s_{k}\ket{x_{k}}$ where
$\support(\varphi) = \{x_{1}, \ldots, x_{k}\}$ and $s_{i} =
\varphi(x_{i})\in S$ describes the ``multiplicity'' of the element
$x_{i}$. The ket notation $\ket{x_i}$ is justified because these
elements are vectors, and useful, because it distinguishes $x$ as
element of $X$ and as vector in $\Mlt_{S}(X)$. These formal sum are
quotiented by the usual commutativity and associativity
relations. Also, the same element $x\in X$ may be counted multiple
times, so that $s_{1}\ket{x} + s_{2}\ket{x}$ is considered to be the
same as $(s_{1}+s_{2})\ket{x}$. With this formal sum notation one can
write the application of $\Mlt_{S}$ on a map $f$ as
$\Mlt_{S}(f)(\sum_{i}s_{i}\ket{x_{i}}) = \sum_{i}s_{i}\ket{f(x_{i})}$.

This multiset functor is a monad, whose unit $\eta\colon X\rightarrow
\Mlt_{S}(X)$ is $\eta(x) = 1\ket{x}$, and multiplication $\mu\colon
\Mlt_{S}(\Mlt_{S}(X)) \rightarrow \Mlt_{S}(X)$ is
$\mu(\sum_{i}s_{i}\ket{\varphi_{i}})(x) =
\sum_{i}s_{i}\cdot\varphi_{i}(x)$, where $\cdot$ is multiplication in
$S$. 

In order to emphasise that elements of $\Mlt_{S}(X)$ are \emph{finite}
multisets, one may call $\Mlt_{S}$ the \emph{finitary} multiset monad.
In order to include non-finite multisets, one has to assume that
suitable infinite sums exist in the underlying semiring $S$. This is
less natural.

For the semiring $S=\NNO$ one gets the free commutative monoid
$\Mlt_{\NNO}(X)$ on a set $X$. The monad $\Mlt_{\NNO}$ is also known
as the `bag' monad, containing ordinary ($\NNO$-valued) multisets. If
$S=\mathbb{Z}$ one obtains the free Abelian group
$\Mlt_{\mathbb{Z}}(X)$ on $X$. The Boolean semiring $2 = \{0,1\}$
yields the finite powerset monad $\powersetfin = \Mlt_{2}$. Here we
shall be mostly interested in the cases where $S$ is $\Rnn$,
$\R$, or $\C$. 

An (Eilenberg-Moore) algebra $\alpha\colon\Mlt_{S}(X)\rightarrow X$
for the multiset monad corresponds to a monoid structure on
$X$---given by $x+y = \alpha(1\ket{x} + 1\ket{y})$---together with a
scalar multiplication $\bullet \colon S\times X\rightarrow X$ given by
$s\mathrel{\bullet} x = \alpha(s\ket{x})$. It preserves the additive
structure (of $S$ and of $X$) in each coordinate separately. This
makes $X$ a module, for the semiring $S$. Conversely, such an
$S$-module structure on a commutative monoid $M$ yields an algebra
$\Mlt_{S}(M)\rightarrow M$ by $\sum_{i}s_{i}\ket{x_{i}} \mapsto
\sum_{i}s_{i}\mathrel{\bullet}x_{i}$.  Thus the category of algebras
$\Alg(\Mlt_{S})$ is equivalent to the category $\Mod[S]$ of
$S$-modules. When $S$ happens to be a field, this category $\Mod[S]$
is the category $\Vect[S]$ of vector spaces over $S$. Thus we have a
uniform description of the three categories of relevance in the
previous section, namely:
$$\begin{array}{c}
\Alg(\Mlt_{\Rnn}) = \Mod[\Rnn] \\
\Alg(\Mlt_{\R}) = \Mod[\R] = \Vect[\R]
\qquad
\Alg(\Mlt_{\C}) = \Mod[\C] = \Vect[\C].
\end{array}$$

\noindent We continue this section with a basic result in the theory
of monads, which is stated without proof, but with a few subsequent
pointers.

\begin{theorem}
\label{MonadAlgStructThm}
Let $\cat{A}$ be a symmetric monoidal category, which is both complete
and cocomplete, and let $T\colon \cat{A} \rightarrow \cat{A}$ be a
monad on $\cat{A}$.  The category $\Alg(T)$ of algebras is:
\begin{enumerate}
\item[(a)] also complete, with limits as in $\cat{A}$;

\item[(b)] cocomplete as soon as certain special colimits exist in
$\Alg(T)$, namely colimits of reflexive pairs;

\item[(c)] symmetric monoidal closed in case these colimits exist and
  the monad $T$ is symmetric monoidal (commutative), where the free
  algebra functor $F\colon \cat{A} \rightarrow \Alg(T)$ preserves the
  monoidal structure (\textit{i.e.}~is strong monoidal). \QED
\end{enumerate}

\end{theorem}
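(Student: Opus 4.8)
The plan is to treat the three clauses in order, throughout using the forgetful functor $U\colon\Alg(T)\rightarrow\cat{A}$ together with its left adjoint $F$, the free-algebra functor $FA=(TA,\mu_{A})$; write $\eta,\varepsilon$ for the unit and counit of $F\dashv U$ and, reusing the letter, $\mu$ for the monad multiplication. Clause (a) is the routine fact that $U$ creates limits: given $D\colon\mathcal{J}\rightarrow\Alg(T)$, form $L=\lim_{\mathcal{J}}UD$ in $\cat{A}$ with legs $p_{j}\colon L\rightarrow UD_{j}$; the maps $\xi_{j}\after Tp_{j}\colon TL\rightarrow UD_{j}$ form a cone over $UD$, hence factor uniquely through a morphism $\xi\colon TL\rightarrow L$, and the two algebra laws for $\xi$ are forced by uniqueness of factorisations through the limit cone. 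Thus $(L,\xi)$ is a limit of $D$ in $\Alg(T)$, and $U$ preserves it; since $\cat{A}$ is complete, so is $\Alg(T)$.

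For clause (b) I would reconstruct the classical theorem of Linton. Two observations do the work: coproducts of \emph{free} algebras are free, $\coprod_{i}FA_{i}\cong F(\coprod_{i}A_{i})$, because $F$ is a left adjoint and $\cat{A}$ has coproducts; and every algebra $(X,\xi)$ is the coequaliser in $\Alg(T)$ of its canonical reflexive presentation
$$ FTX \;\rightrightarrows\; FX \;\xrightarrow{\ \varepsilon_{(X,\xi)}\ }\; (X,\xi), $$
with parallel pair $\varepsilon_{FX}$ and $F\xi$ and common section $F\eta_{X}$. Granting reflexive coequalisers in $\Alg(T)$, one then obtains the coproduct of arbitrary algebras $(X_{i},\xi_{i})$ as the coequaliser of the pair $F(\coprod_{i}TX_{i})\rightrightarrows F(\coprod_{i}X_{i})$ formed as the coproduct of the canonical presentations (still reflexive, the coproduct of the sections being a common section); ordinary coequalisers follow from reflexive ones together with binary coproducts; and an arbitrary colimit of shape $\mathcal{J}$ is then the coequaliser of a pair $\coprod_{(f\colon j\rightarrow j')}D(j)\rightrightarrows\coprod_{j}D(j)$ between coproducts, made reflexive by adjoining the identity-arrow components. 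I would cite \cite{BarrW85,Borceux94} for this bookkeeping.

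For clause (c) I would follow Kock \cite{Kock71a}. Given the symmetric monoidal (commutative) structure on $T$ — in particular a double strength $\dst\colon TX\otimes TY\rightarrow T(X\otimes Y)$ compatible with $\eta$, $\mu$ and the symmetry and satisfying Kock's coherence equations — define the tensor of algebras $(X,\xi)\otimes_{T}(Y,\chi)$ as the reflexive coequaliser in $\Alg(T)$ of
$$ F(TX\otimes TY) \;\rightrightarrows\; F(X\otimes Y), $$
whose two legs are the adjoint transposes of $\eta_{X\otimes Y}\after(\xi\otimes\chi)$ and of $\dst_{X,Y}$, with common section transposing $\eta\after(\eta_{X}\otimes\eta_{Y})$; it exists by (b). A short computation with representable functors out of either side yields a natural isomorphism $FA\otimes_{T}FB\cong F(A\otimes B)$, so $F$ is strong symmetric monoidal with unit $FI$ (for $I$ the unit of $\cat{A}$), and the pentagon, the symmetry and the unit coherences for $\otimes_{T}$ descend from those of $\otimes$ through these presentations. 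For closedness, $(-)\otimes_{T}(Y,\chi)$ is a composite of colimit-preserving functors on the now-cocomplete category $\Alg(T)$ and so admits a right adjoint, the internal hom; when $\cat{A}$ is moreover closed — as it is in every instance used in this paper — Kock exhibits this internal hom concretely as the equaliser of two maps $[X,Y]\rightrightarrows[TX,Y]$ cutting out the $T$-linear maps, with algebra structure induced by $\chi$.

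The content lies entirely in (b) and (c). Since $T$ need not preserve any colimits, $U$ does not create them, and one genuinely has to bootstrap every colimit out of reflexive coequalisers of free algebras; similarly, the coherence of $\otimes_{T}$ and the existence of the internal hom are precisely the verifications performed in \cite{Kock71a}, which I would invoke rather than reprove. Clause (a), by contrast, uses nothing beyond the universal property of limits.
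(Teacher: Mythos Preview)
Your sketch is sound and in fact goes well beyond what the paper does: the theorem is explicitly ``stated without proof, but with a few subsequent pointers'' --- the paper merely cites \cite{Manes74,BarrW85} for (a), Linton via \cite[\S~9.3, Prop.~4]{BarrW85} for (b), and \cite{Kock71a,Jacobs94a} for (c). Your outline of (a) via creation of limits, (b) via Linton's bootstrap from reflexive coequalisers of free presentations, and (c) via Kock's coequaliser definition of $\otimes_T$ is exactly the content behind those citations, and the paper's hidden auxiliary material (its \texttt{auxproof} block) constructs the tensor in the same way you do.

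One small wrinkle worth tightening: your sentence ``$(-)\otimes_{T}(Y,\chi)$ is a composite of colimit-preserving functors on the now-cocomplete category $\Alg(T)$ and so admits a right adjoint'' overreaches --- cocontinuity alone on a cocomplete category does not yield a right adjoint without a solution-set condition or local presentability, and the theorem as stated does not assume $\cat{A}$ is closed. You immediately rescue this by invoking Kock's explicit equaliser construction of the internal hom when $\cat{A}$ is closed, which is the honest hypothesis (and holds in every case the paper cares about, $\cat{A}=\Sets$); I would simply drop the adjoint-functor-theorem gesture and lead with Kock's construction.
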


\auxproof{
Let a monad $T$ on $\cat{C}$ be commutative via $\xi\colon T(X)
\otimes T(Y) \rightarrow T(X\otimes Y)$. Given three algebras
$\smash{TX\stackrel{a}{\rightarrow}X}$,
$\smash{TY\stackrel{b}{\rightarrow}Y}$,
$\smash{TZ\stackrel{c}{\rightarrow}Z}$, one calls a morphism
$f\colon X\otimes Y\rightarrow Z$ in $\cat{C}$ a bimorphism if the
following diagram commutes.
$$\xymatrix@R1.5pc{
T(X)\otimes T(Y)\ar[d]_{a\otimes b}\ar[r]^-{\xi} &
  T(X\otimes Y)\ar[r]^-{T(f)} & T(Z)\ar[d]^{c} \\
X\otimes Y\ar[rr]^-{f} & & Z
}$$

\noindent The aim is to obtain a tensor product $a\boxtimes b
= \smash{(TX\stackrel{a}{\rightarrow}X)} \boxtimes
\smash{(TY\stackrel{b}{\rightarrow}Y)}$ of algebras so that
algebra morphisms $a\boxtimes b\rightarrow c$ correspond
to such bimorphisms. This tensor product $a\boxtimes b$ arises
as coequaliser in the category $\Alg(T)$, of the form:
\begin{equation}
\label{AlgTensorEqn}
\xymatrix@C-.5pc{
\ensuremath{\left(\xy
(0,4)*{T^{2}(TX\otimes TY)};
(0,-4)*{T(TX\otimes TY)};
{\ar^{\mu} (0,2); (0,-2)};
\endxy\right)}\ar@<.5pc>[rr]^-{T(a\otimes b)}
\ar@<-.5pc>[rr]_-{\mu\after T(\xi)}
& &
\ensuremath{\left(\xy
(0,4)*{T^{2}(X\otimes Y)};
(0,-4)*{T(X\otimes Y)};
{\ar^{\mu} (0,2); (0,-2)};
\endxy\right)}
\ar[rr]^-{t} & & a\boxtimes b
}
\end{equation}


\noindent We abuse notation and write $X\boxtimes Y$ for the carrier
object of the algebra $a\boxtimes b$, which is thus a map
$T(X\boxtimes Y)\rightarrow X\boxtimes Y$. This coequaliser map
$t\colon T(X\otimes Y) \rightarrow X\boxtimes Y$ ensures that the map
$\sotimes \smash{\stackrel{\textrm{def}}{=}}\, t \after \eta \colon
X\otimes Y \rightarrow X\boxtimes Y$ is a bimorphism, and even in a
universal way: for an arbitrary bimorphism $f\colon X\otimes
Y\rightarrow Z$ there is a unique algebra morphism $\overline{f}
\colon a\boxtimes b \rightarrow c$ with $\overline{f} \after \,
\sotimes\, = f$. This $\sotimes$ forms a map $U(a)\otimes U(b)
\rightarrow U(a\boxtimes b)$, making the forgetful functor $U\colon
\Alg(T)\rightarrow \cat{C}$ monoidal. The free algebra $F(I)$ on the
unit $I\in\cat{C}$ is the unit for $\boxtimes$. The map $\xi\colon
T(X)\otimes T(Y) \rightarrow T(X\otimes Y)$ is by definition a
bimorphism between free algebras. Hence there is a unique map
$\xi^{F}\colon F(X)\boxtimes F(Y) \rightarrow F(X\otimes Y)$ with
$\xi^{F} \after \;\sotimes\; = \xi$. This map, makes $F$ monoidal; in
fact, this $\xi^F$ is an isomorphism, so that $F$ is strong
monoidal. \QED

\medskip

\textbf{More eleborate:}

\medskip

First, $\sotimes$ is a bimorphism, since:
$$\begin{array}{rcll}
(a\boxtimes b) \after T(\sotimes) \after \xi
& = &
(a\boxtimes b) \after T(c \after \eta) \after \xi \\
& = &
c \after \mu \after T(\eta) \after \xi 
   & \mbox{since $c$ is an algebra map} \\
& = &
c \after \mu \after \eta \after \xi \\
& = &
c \after \mu \after T(\xi) \after \eta \\
& = &
c \after T(a\otimes b) \after \eta 
   & \mbox{since $c$ is a coequaliser} \\
& = &
c \after \eta \after (a\otimes b) \\
& = &
\sotimes\, \after (a\otimes b).
\end{array}$$

\noindent Next, assume $f\colon X\otimes Y\rightarrow Z$ is a
bimorphism, so that $c \after T(f) \after \xi = f \after
(a\otimes b)$. The map $f^{\#} = c \after T(f) \colon
T(X\otimes Y) \rightarrow Z$ is then an algebra map
$\mu\rightarrow c$ coequalising the above reflexive pair:
$$\begin{array}{rcl}
c \after T(f^{\#})
& = &
c \after T(c \after T(f)) \\
& = &
c \after \mu \after T^{2}(f) \\
& = &
c \after T(f) \after \mu \\
& = &
f^{\#} \after \mu \\
f^{\#} \after \mu \after T(\xi) 
& = &
c \after T(f) \after \mu \after T(\xi) \\
& = &
c \after \mu \after T^{2}(f) \after T(\xi) \\
& = &
c \after T(c) \after T^{2}(f) \after T(\xi) \\
& = &
c \after T(c \after T(f) \after \xi) \\
& = &
c \after T(f \after T(a\otimes b)) \\
& = &
f^{\#} \after T(a\otimes b).
\end{array}$$

\noindent Hence there is a unique algebra homomorphism 
$\overline{f} \colon (a\boxtimes b) \rightarrow c$
with $\overline{f} \after t = f^{\#}$. Then: 
$$\begin{array}{rcl}
\overline{f} \after \; \sotimes
& = &
\overline{f} \after t \after \eta  \\
& = &
f^{\#} \after \eta \\
& = &
c \after T(f) \after \eta \\
& = &
c \after \eta \after f \\
& = &
f.
\end{array}$$

\noindent If also $g\colon (a\boxtimes b) \rightarrow c$
satisfies $g\after \; \sotimes\, = f$ then:
$$\begin{array}{rcl}
f^{\#}
& = &
c \after T(f) \\
& = &
c \after T(g \after \; \sotimes) \\
& = &
g \after (a\boxtimes b) \after T(t \after \eta) \\
& = &
g \after t \after \mu \after T(\eta) \\
& = &
g \after t.
\end{array}$$

\noindent Hence $g = \overline{f}$.

We show how to obtain $\rho^{\boxtimes} \colon a\boxtimes
F(I)\conglongrightarrow a$ in $\Alg(T)$, for $a\colon T(X)\rightarrow
X$ and $\smash{F(I) = (T^{2}(I) \stackrel{\mu}{\rightarrow} T(I))}$.
We first prove a slightly stronger property: for each algebra $\beta
\colon T(Y)\rightarrow Y$ and algebra map $f\colon X\rightarrow Y$
there is a unique map of algebras $\widehat{f} \colon a\boxtimes F(I)
\rightarrow b$ with:
$$\begin{array}{rcl}
f 
& = &
\widehat{f} \after \;\sotimes\; \after (\idmap{}\otimes\eta) \after \rho^{-1}
   \;\colon\; X\rightarrow X\otimes I\rightarrow X\otimes T(I)
   \rightarrow X\boxtimes T(I)\rightarrow Y.
\end{array}\eqno{(*)}$$

\noindent We then obtain $\rho^{\boxtimes}$ by taking $f=\idmap{}$.
Given such an algebra map $f\colon a\rightarrow b$ we define:
$$\xymatrix@C-.5pc{
f' \stackrel{\textrm{def}}{=} \Big(
   X\otimes T(I)\ar[r]^-{\eta\otimes\idmap{}} &
   T(X)\otimes T(I)\ar[r]^-{\xi} & T(X\otimes I)\ar[r]^-{T(\rho)}_-{\cong} &
   T(X)\ar[r]^-{a} & X\ar[r]^-{f} & Y\Big)
}$$

\noindent We first check that $f'$ is a bimorphism.
$$\begin{array}{rcl}
f' \after (a\otimes\mu)
& = &
f \after a \after T(\rho) \after \xi \after (\eta\otimes\idmap{}) 
   \after (a\otimes\mu) \\
& = &
f \after a \after T(\rho) \after \xi \after (T(a)\otimes\idmap{}) \after 
   (\eta\otimes\mu) \\
& = &
f \after a \after T(\rho) \after T(a\otimes\idmap{}) \after \xi \after 
   (\eta\otimes\mu) \\
& = &
f \after a \after T(a) \after T(\rho) \after \xi \after (\eta\otimes\mu) \\
& = &
f \after a \after \mu \after T(\rho) \after \xi \after (\eta\otimes\mu) \\
& \smash{\stackrel{(*)}{=}} &
f \after a \after \mu \after T^{2}(\rho) \after T(\xi) \after 
   T(\idmap{}\otimes\eta) \after \xi \after (\eta\otimes\mu) \\
& = &
f \after a \after T(\rho) \after \mu \after T(\xi) \after \xi \after
   (\idmap{}\otimes T(\eta)) \after (\eta\otimes\mu) \\
& = &
f \after a \after T(\rho) \after \xi \after (\mu\otimes\mu) \after
   (\idmap{}\otimes T(\eta)) \after (\eta\otimes\mu) \\
& = &
f \after a \after T(\rho) \after \xi \after (\idmap{}\otimes\mu) \\
& = &
f \after a \after T(\rho) \after \xi \after (\mu\otimes\mu) \after 
   (T(\eta)\otimes\idmap{})\\
& = &
f \after a \after T(\rho) \after \mu \after T(\xi) \after \xi \after 
   (T(\eta)\otimes\idmap{})\\
& = &
f \after a \after \mu \after T^{2}(\rho) \after T(\xi) \after 
   T(\eta\otimes\idmap{}) \after \xi \\
& = &
b \after T(f) \after T(a) \after T^{2}(\rho) \after T(\xi) \after 
   T(\eta\otimes\idmap{}) \after \xi \\
& = &
b \after T(f') \after \xi.
\end{array}$$

\noindent where the marked equation holds because:
$$\xymatrix{
T(X)\otimes I\ar[drr]_{\rho}\ar[r]^-{\idmap{}\otimes\eta} &
   T(X)\otimes T(I)\ar[r]^-{\xi} & T(X\otimes I)\ar[d]^{T(\rho)} \\
& & T(X)
}$$

\noindent We then get a unique algebra map $\widehat{f} \colon
a\boxtimes \mu_{I} \rightarrow b$ with $\widehat{f} \after \,\sotimes\,
= f'$. Further,
$$\begin{array}{rcl}
\widehat{f} \after \,\sotimes\, \after (\idmap{}\otimes\eta_{I}) \after \rho^{-1}
& = &
f' \after (\idmap{}\otimes\eta_{I}) \after \rho^{-1}\\
& = &
f \after a \after T(\rho) \after \xi \after
   (\eta\otimes\idmap{}) \after (\idmap{}\otimes\eta_{I}) \after \rho^{-1}\\
& = &
f \after a \after T(\rho) \after \eta \after \rho^{-1}\\
& = &
f \after a \after \eta \after \rho \after \rho^{-1}\\
& = &
f.
\end{array}$$

\noindent In fact, $\widehat{f}$ is the unique algebra map
with this property: if $g\colon a\boxtimes \mu_{I} \rightarrow b$
also satisfies $g \after \,\sotimes\, \after (\idmap{}\otimes\eta_{I}) 
\after \rho^{-1} = f$, then:
$$\begin{array}{rcl}
f'
& = &
f \after a \after T(\rho) \after \xi \after (\eta\otimes\idmap{}) \\
& = &
b \after T(f) \after T(\rho) \after \xi \after (\eta\otimes\idmap{}) \\
& = &
b \after T(g \after \;\sotimes\; \after (\idmap{}\otimes\eta_{I}) \after 
   \rho^{-1} \after \rho) \after \xi \after (\eta\otimes\idmap{}) \\
& = &
g \after (a\boxtimes\mu_{I}) \after T(\sotimes) \after \xi \after 
   (\idmap{}\otimes T(\eta_{I})) \after (\eta\otimes\idmap{}) \\
& = &
g \after \;\sotimes\; \after (a\otimes\mu) \after (\eta\otimes T(\eta)) 
   \qquad\mbox{since $\sotimes$ is a bimorphism} \\
& = &
g \after \,\sotimes.
\end{array}$$

\noindent Hence $g=\widehat{f}$.

We briefly check that $\rho^{\boxtimes} = \widehat{\idmap{}} \colon
X\boxtimes T(I)\rightarrow X$ is an isomorphism. The candidate inverse
is:
$$\xymatrix{
\sigma \stackrel{\textrm{def}}{=} \Big(X\ar[r]^-{\rho^{-1}} &
   X\otimes I\ar[r]^-{\idmap{}\otimes\eta} &
   X\otimes T(I)\ar[r]^-{\sotimes} & X\boxtimes T(I)\Big).
}$$

\noindent In one direction, by $(*)$ we immediately get:
$$\begin{array}{rcl}
\rho^{\boxtimes} \after \sigma
& = &
\widehat{\idmap{}} \after \;\sotimes\; \after (\idmap{}\otimes\eta) \after
   \rho^{-1} \\
& = &
\idmap{}
\end{array}$$

\noindent In the other direction we first need to see that $\sigma$
is an algebra map $a\rightarrow a\boxtimes F(I)$, in:
$$\begin{array}{rcll}
\lefteqn{(a\boxtimes F(I)) \after T(\sigma)} \\
& = &
(a\boxtimes F(I)) \after T(\sotimes) \after T(\idmap{}\otimes\eta)
   \after T(\rho^{-1}) \\
& = &
(a\boxtimes F(I)) \after T(\sotimes) \after T(\idmap{}\otimes\eta)
   \after \xi \after (\idmap{}\otimes\eta) \after \rho^{-1}
   & \mbox{see the previous triangle} \\
& = &
(a\boxtimes F(I)) \after T(\sotimes) \after \xi \after (\idmap{}\otimes T(\eta))
   (\idmap{}\otimes\eta) \after \rho^{-1} \\
& = &
\sotimes\; \after (a\otimes\mu) \after (\idmap{}\otimes T(\eta))
   (\idmap{}\otimes\eta) \after \rho^{-1}
   & \mbox{since $\sotimes$ is a bimorphism} \\
& = &
\sotimes\; \after (a\otimes\eta) \after \rho^{-1} \\
& = &
\sotimes\; \after (\idmap{}\otimes\eta) \after \rho^{-1} \after a \\
& = &
\sigma \after a.
\end{array}$$

\noindent Hence we obtain a unique $\widehat{\sigma} \colon a\boxtimes F(I)
\rightarrow a\boxtimes F(I)$ satisfying $(*)$, in:
$$\begin{array}{rcl}
\widehat{\sigma} \after \;\sotimes\; 
   \after (\idmap{}\otimes\eta) \after \rho^{-1}
& = &
\sigma \\
& = &
\sotimes\; \after (\idmap{}\otimes\eta) \after \rho^{-1}.
\end{array}$$

\noindent Hence $\widehat{\sigma} = \idmap{}$. But then also $\sigma\after
\rho^{\boxtimes} = \idmap{}$ since:
$$\begin{array}{rcl}
\sigma \after \rho^{\boxtimes} \after \;\sotimes\; \after (\idmap{}\otimes\eta) 
   \after \rho^{-1}
& = &
\sigma \after \idmap{} \\
& = &
\idmap{} \after \;\sotimes\; \after (\idmap{}\otimes\eta) \after \rho^{-1}.
\end{array}$$

In order to see that the free functor $F\colon \cat{C}\rightarrow 
\Alg(T)$ is monoidal we use the identity map $F(I)\rightarrow F(I)$
as $\zeta$. The map $\xi^{F}\colon F(X)\boxtimes F(Y)\rightarrow
F(X\otimes Y)$ is obtained in:
$$\xymatrix{
T(X)\otimes T(Y)\ar[rr]^-{\sotimes = t\after \eta}\ar[drr]_{\xi^T} & & 
   T(X)\boxtimes T(Y)\ar@{-->}[d]^{\xi^F} \\
& & T(X\otimes Y)
}$$

\noindent where $t$ is the coequaliser in $\Alg(T)$ of $\mu \after T(\xi),
T(\mu\otimes \mu) \colon T(T^{2}(X)\otimes T^{2}(Y)) \rightrightarrows
T(T(X)\otimes T(Y))$. The candidate inverse for $\xi^F$ is:
$$\xymatrix{
\sigma \stackrel{\textrm{def}}{=} \Big(T(X\otimes Y)\ar[r]^{T(\eta\otimes\eta)} &
   T(T(X)\otimes T(Y))\ar[r]^-{t} & T(X)\boxtimes T(Y)\Big).
}$$

\noindent Then:
$$\begin{array}{rcl}
\sigma \after \xi^{F} \after \;\sotimes\;
& = &
t \after T(\eta\otimes\eta) \after \xi^{T} \\
& = &
t \after \xi^{T} \after (T(\eta)\otimes T(\eta)) \\
& = &
t \after \xi^{T} \after (\mu\otimes\mu) \after (\eta\otimes\eta) \after 
   (T(\eta)\otimes T(\eta)) \\
& = &
t \after \mu \after T(\xi^{T}) \after \xi \after (\eta\otimes\eta) \after 
   (T(\eta)\otimes T(\eta)) \\
& = &
t \after T(\mu\otimes\mu) \after \xi \after (\eta\otimes\eta) \after 
   (T(\eta)\otimes T(\eta)) \quad \mbox{since $t$ is coequaliser} \\
& = &
t \after T(\mu\otimes\mu) \after \eta \after 
   (T(\eta)\otimes T(\eta)) \\
& = &
t \after \eta \after (\mu\otimes\mu) \after (T(\eta)\otimes T(\eta)) \\
& = &
t \after \eta \\
& = &
\;\sotimes \\
\xi^{F} \after \sigma
& = &
\xi^{F} \after t \after T(\eta\otimes\eta) \\
& = &
\mu \after T(\xi^{T}) \after T(\eta\otimes\eta) 
   \qquad \mbox{see the construction of maps like $\xi^F$} \\
& & \qquad \mbox{in the beginning of this ``auxproof''} \\
& = &
\mu \after T(\eta) \\
& = &
\idmap{}
\end{array}$$
}

A category of algebras is always ``as complete'' as its underlying
category, see \textit{e.g.}~\cite{Manes74,BarrW85}. Cocompleteness
always holds for algebras over \Sets and follows from a result of
Linton's, see~\cite[\S~9.3, Prop.~4]{BarrW85} using the existence of
coequalisers of reflexive pairs in $\Sets$. We shall mostly use this
result for $\cat{A} = \Sets$, so that we don't have to worry about
these special colimits; the monoidal structure on the underlying
category $\Sets$ is thus cartesian.  Monoidal structure $(I,\otimes)$
in categories of algebras goes back to~\cite{Kock71a} (see
also~\cite{Jacobs94a}). The tensor unit $I$ is simply $F(1)$, for the
free algebra functor $F\colon \Sets \rightarrow \Alg(T)$ and the final
(singleton) set $1$. The tensor $\otimes$ is obtained as a suitable
coequaliser of algebras. Algebra maps $X\otimes Y\rightarrow Z$ then
correspond to bi-homomorphisms $UX\times UY \rightarrow UZ$. In
particular, there is a universal bi-homomorphism $\sotimes \colon
UX\times UY \rightarrow U(X\otimes Y)$. The free functor preserves
these tensors.



The multiset monad $\Mlt_{S}$ is symmetric monoidal if $S$ is a
(multiplicatively) commutative semiring. In that case categories
$\Mod[S]$ are monoidal closed, with $S \cong \Mlt_{S}(1)$ as tensor
unit. Maps $M\otimes N\rightarrow K$ in $\Mod[S]$ correspond to
bilinear maps $M\times N\rightarrow K$ (linear in each argument
separately). The associated exponent is written as $\multimap$, like
before.

For modules $M,N\in \Mod[S]$ there are obvious correspondences:
$$\begin{prooftree}
\begin{prooftree}
{\xymatrix{ M\ar[r] & (N\multimap S)}} 
\Justifies
{\xymatrix{ M\otimes N\ar[r] & S}} 
\end{prooftree}
\Justifies
\begin{prooftree}
{\xymatrix{ N\otimes M\ar[r] & S}} 
\Justifies
{\xymatrix{ N\ar[r] & (M\multimap S)}} 
\end{prooftree}
\end{prooftree}$$

\noindent This means that there are adjunctions:
\begin{equation}
\label{ModuleDualDiag}
\vcenter{\xymatrix{
\Mod[S]\ar@/^2ex/[rr]^-{(-)\multimap S} & \bot & 
   \big(\Mod[S]\big)\rlap{$\op$}\ar@/^2ex/[ll]^-{(-)\multimap S}
}}
\end{equation}

\noindent as used in the previous section.

In summary, we have a sequence of categories of algebras of monads:
\begin{equation}
\label{AlgCatDiag}
\vcenter{\xymatrix@R-1.7pc{
\Alg(\Mlt_{\Rnn})\ar@{=}[d]
&
\Alg(\Mlt_{\R})\ar[l]\ar@{=}[d]
&
\Alg(\Mlt_{\C})\ar[l]\ar@{=}[d] \\
\Mod[\Rnn]
&
\Vect[\R]
&
\Vect[\C]
}}
\end{equation}

\noindent where the maps between them can be understood as arising
from maps of monads in the other direction:
$$\xymatrix@C-1pc{
\Mlt_{\Rnn}\ar@{=>}[r] &
\Mlt_{\R}\ar@{=>}[r] &
\Mlt_{\C}
& \mbox{via semiring inclusions} &
\Rnn\ar[r] &
\R\ar[r] &
\C.
}$$

\noindent This follows from the following general result.

\begin{proposition}
\label{SemiringMapProp}
A homomorphism of semirings $f\colon S\rightarrow S'$, preserving both
the additive and multiplicative monoid structures, gives rise to a map
of monads $\Mlt_{S} \Rightarrow \Mlt_{S'}$, by
$\big(\sum_{j}s_{j}\ket{x_{j}}\big) \mapsto
\big(\sum_{j}f(s_{j})\ket{x_{j}}\big)$, and thus to a functor
$\Alg(\Mlt_{S'}) \rightarrow \Alg(\Mlt_{S})$, by $\big(\Mlt_{S'}(X)
\rightarrow X\big) \longmapsto \big(\Mlt_{S}(X) \rightarrow
\Mlt_{S'}(X) \rightarrow X\big)$. This functor always has a left
adjoint. \QED
\end{proposition}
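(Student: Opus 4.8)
The plan is to construct the monad morphism $\sigma\colon\Mlt_{S}\Rightarrow\Mlt_{S'}$ explicitly, read off the functor $\Alg(\Mlt_{S'})\to\Alg(\Mlt_{S})$ from the general theory of monad morphisms, and then exhibit its left adjoint as a reflexive coequaliser of free algebras. For the first step I would set $\sigma_{X}\colon\Mlt_{S}(X)\to\Mlt_{S'}(X)$ to be postcomposition with $f$, i.e.\ $\sum_{j}s_{j}\ket{x_{j}}\mapsto\sum_{j}f(s_{j})\ket{x_{j}}$; this is well defined because $f$ preserves the additive unit, so $f(0)=0$ and $\support(f\after\varphi)\subseteq\support(\varphi)$ stays finite, and it is natural in $X$ by formula~(\ref{MltEqn}) together with additivity of $f$. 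The two monad-morphism axioms then reduce to the remaining structure on $f$: compatibility with the units amounts to $f(1)=1$, while compatibility with the multiplications, evaluated on $\sum_{i}s_{i}\ket{\varphi_{i}}$, is the identity $f\big(\sum_{i}s_{i}\cdot\varphi_{i}(x)\big)=\sum_{i}f(s_{i})\cdot f(\varphi_{i}(x))$, i.e.\ $f$ preserving sums and products. This step is pure bookkeeping.

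Granting the monad morphism, I would invoke the standard fact that $\sigma\colon T\Rightarrow T'$ induces a functor $\Alg(T')\to\Alg(T)$ by precomposition---sending a $T'$-algebra $\beta\colon T'X\to X$ to $\beta\after\sigma_{X}\colon TX\to X$, with the $T$-algebra laws following from the two axioms just checked. This yields $f^{*}\colon\Alg(\Mlt_{S'})\to\Alg(\Mlt_{S})$, which under the equivalences $\Alg(\Mlt_{S})\simeq\Mod[S]$ is nothing but restriction of scalars along $f$, sending an $S'$-module $X$ to the $S$-module with $s$-action $f(s)\cdot(-)$.

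For the left adjoint, write $F'\dashv U'$ for the free/forgetful adjunction of $\Mlt_{S'}$. Given $(A,\alpha)\in\Alg(\Mlt_{S})$ I would take $L(A,\alpha)$ to be the coequaliser in $\Alg(\Mlt_{S'})$ of the parallel pair $F'(\Mlt_{S}A)\rightrightarrows F'(A)$ whose components are $F'(\alpha)$ and the transpose $\mu'_{A}\after\Mlt_{S'}(\sigma_{A})$ of $\sigma_{A}\colon\Mlt_{S}A\to U'F'(A)$; in module language this is the extension of scalars $S'\otimes_{S}A$. This pair is reflexive, with common section $F'(\eta_{A})$ for $\eta$ the unit of $\Mlt_{S}$ (one composite is the identity by the algebra law $\alpha\after\eta_{A}=\id$, the other by the unit axiom $\sigma_{A}\after\eta_{A}=\eta'_{A}$ and a monad identity), so the coequaliser exists by part~(b) of Theorem~\ref{MonadAlgStructThm}---and unconditionally, since $\Alg(\Mlt_{S'})$, being a category of algebras over $\Sets$, is cocomplete as recalled above.

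It then remains to produce a bijection $\Alg(\Mlt_{S'})\big(L(A,\alpha),(B,\beta)\big)\cong\Alg(\Mlt_{S})\big((A,\alpha),f^{*}(B,\beta)\big)$, natural in $(B,\beta)$. By the universal property of the coequaliser, a map out of $L(A,\alpha)$ is a $\Mlt_{S'}$-algebra map $h\colon F'(A)\to(B,\beta)$ coequalising the two composites; transposing $h$ across $F'\dashv U'$ to a set map $g\colon A\to B$ and chasing through naturality of $\eta',\sigma$ and the triangle identities, the coequalising condition becomes exactly $g\after\alpha=(\beta\after\sigma_{B})\after\Mlt_{S}(g)$, i.e.\ that $g$ is a morphism $(A,\alpha)\to f^{*}(B,\beta)$ in $\Alg(\Mlt_{S})$; this gives the bijection, and naturality is immediate. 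The one point that must genuinely be \emph{verified} rather than merely computed is the existence of the coequaliser defining $L$---this is precisely why the proposition holds at this level of generality, and why the argument rests on the cocompleteness of categories of algebras over $\Sets$; everything else is bookkeeping with the semiring-homomorphism axioms and the adjunction triangles.
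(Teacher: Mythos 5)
Your proposal is correct and follows essentially the route the paper itself indicates: the paper states the monad morphism and the precomposition functor and then justifies the left adjoint exactly as you do, by cocompleteness of $\Alg(\Mlt_{S'})$ over $\Sets$ and a (reflexive) coequaliser construction of the extension-of-scalars, citing the literature for the details you have written out. Your explicit verification of the reflexive pair $F'(\Mlt_{S}A)\rightrightarrows F'(A)$ with section $F'(\eta_{A})$ and the transposition argument for the universal property correctly fills in what the paper delegates to its references.
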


The left adjoint exists because categories of modules $\Alg(\Mlt_{S})
= \Mod[S]$ are cocomplete; it can be constructed via a coequaliser,
see~\textit{e.g.}~\cite{Jacobs94a,JacobsM12b}. Thus, modules over their
semirings have the structure of a bifibration~\cite{Jacobs99a}.

The different spaces of operators $\BL(H)$, $\SA(H)$, $\Pos(H)$ on a
Hilbert space $H$ turn out to be related via free constructions.  This
was used implicitly in the proofs of Propositions~\ref{SADualProp}
and~\ref{PosDualProp} in the previous section, and also
in~\cite{Busch03}.

\begin{theorem}
\label{AlgCatAdjThm}
Write the left adjoints to the two forgetful functors
in~(\ref{AlgCatDiag}) as:
\begin{equation}
\label{AlgCatAdjDiag}
\xymatrix{
\Mod[\Rnn]\ar[r]^-{\mathcal{R}} &
   \Vect[\R]\ar[r]^-{\mathcal{C}}&
   \Vect[\C].
}
\end{equation}

\noindent For a finite-dimensional Hilbert space $H$, the canonical
inclusion morphisms $\Pos(H) \hookrightarrow \SA(H)$ in
$\Mod[\Rnn]$, and $\SA(H) \hookrightarrow \BL(H)$
in $\Vect[\R]$ yield via these adjunctions (transposed)
maps that turn out to be isomorphisms:
$$\xymatrix{
\mathcal{R}\big(\Pos(H)\big)\ar[r]^-{\cong} & \SA(H)
& \mbox{and} &
\mathcal{C}\big(\SA(H)\big)\ar[r]^-{\cong} & \BL(H).
}$$

\noindent Thus we have the following situation of triangles commuting
up-to-isomorphism.
$$\xymatrix{
& \FdHilb\ar[dl]_{\Pos}\ar[d]^(0.6){\SA}\ar[dr]^{\BL} \\
\Mod[\Rnn]\ar[r]_-{\mathcal{R}}^-{\mbox{\small free}} &
   \Vect[\R]\ar[r]_-{\mathcal{C}}^-{\mbox{\small free}} &
   \Vect[\C].
}$$
\end{theorem}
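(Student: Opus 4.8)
The plan is to make the two free functors in~(\ref{AlgCatAdjDiag}) explicit, identify the transposed inclusion maps concretely, and then verify they are isomorphisms by the same elementary facts (spectral decomposition, Cartesian decomposition, cancellativity) that were already used implicitly in the proofs of Propositions~\ref{SADualProp} and~\ref{PosDualProp}.

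First I would describe the left adjoints. The forgetful functor $\Vect[\R]\to\Mod[\Rnn]$ restricts scalars along $\Rnn\hookrightarrow\R$, so its left adjoint $\mathcal{R}$ is group completion: $\mathcal{R}(M)$ is the Grothendieck group of the underlying commutative monoid of $M$, with the $\Rnn$-action extended to an $\R$-action by $r\scalar(m-n) = (-r)\scalar n - (-r)\scalar m$ for $r<0$. Likewise $\mathcal{C} = \C\otimes_{\R}(-)$ is complexification. The unit of the first adjunction at $\Pos(H)$ is the monoid map $\Pos(H)\to\mathcal{R}(\Pos(H))$, $A\mapsto (A-0)$, and the map in the theorem is its unique $\R$-linear extension of the inclusion $\Pos(H)\hookrightarrow\SA(H)$, namely the ``interpret a formal difference as an actual difference'' map $(A-B)\mapsto A-B$. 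Similarly the second map is the $\C$-linear map $\C\otimes_{\R}\SA(H)\to\BL(H)$ determined by $z\otimes A\mapsto zA$.

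For $\mathcal{R}(\Pos(H))\xrightarrow{\cong}\SA(H)$: since $\Pos(H)$ is the positive cone of the ordered real vector space $\SA(H)$ it is cancellative, so $\mathcal{R}(\Pos(H))$ consists of formal differences $A-B$ with $A,B\in\Pos(H)$, identified by $A-B = A'-B'$ iff $A+B' = A'+B$ in $\Pos(H)$. Surjectivity of the interpretation map is precisely the spectral decomposition $B = B_p - B_n$ from~(\ref{PosNegOperatorEqn}), valid for every self-adjoint $B$. For injectivity: if $A-B = A'-B'$ as operators then $A+B' = A'+B$ as operators, and both sides are positive (sums of positive operators), hence this is an equality in $\Pos(H)$ and the two formal differences coincide. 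The map is visibly $\R$-linear, so it is an isomorphism in $\Vect[\R]$. For $\mathcal{C}(\SA(H))\xrightarrow{\cong}\BL(H)$: surjectivity is the Cartesian decomposition, every $B\in\BL(H)$ equals $\tfrac12(B+B^{\dag}) + i\cdot\tfrac{1}{2i}(B-B^{\dag})$ with both $\tfrac12(B+B^{\dag})$ and $\tfrac{1}{2i}(B-B^{\dag})$ self-adjoint (cf.~(\ref{BLtoSAEqn})). For injectivity, write an arbitrary element of $\C\otimes_{\R}\SA(H)$ uniquely as $1\otimes A_1 + i\otimes A_2$ with $A_1,A_2\in\SA(H)$ (using the $\R$-basis $\{1,i\}$ of $\C$); it maps to $A_1+iA_2$, and if $A_1+iA_2 = 0$ then applying $(-)^{\dag}$ gives $A_1-iA_2 = 0$, whence $A_1 = A_2 = 0$. (As a sanity check, $\dim_{\C}\bigl(\C\otimes_{\R}\SA(H)\bigr) = \dim_{\R}\SA(H) = n^2 = \dim_{\C}\BL(H)$.)

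Finally, both families are natural in $H$: the maps $\Pos(C),\SA(C),\BL(C)$ are all given by $A\mapsto CAC^{\dag}$, so the inclusions commute with them, and since the two isomorphisms agree with these inclusions on the images of the respective adjunction units and are linear, naturality propagates from those generators. Applying $\mathcal{C}$ to the first isomorphism and composing with the second yields $\mathcal{C}\,\mathcal{R}\,\Pos \cong \mathcal{C}\,\SA \cong \BL$, which is the stated commuting triangle. I do not expect a serious obstacle here: the only genuinely non-formal ingredients are the spectral theorem, the cancellativity of the positive cone $\Pos(H)$, and the elementary fact that an operator is a self-adjoint part plus $i$ times a self-adjoint part; the rest is bookkeeping with the universal properties collected in Section~\ref{ModuleSec}.
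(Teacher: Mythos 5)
Your proposal is correct and follows essentially the same route as the paper: identify $\mathcal{R}$ as group completion and $\mathcal{C}$ as complexification, recognise the transposed maps as $[A,B]\mapsto A-B$ and $(B_1,B_2)\mapsto B_1+iB_2$, and prove surjectivity via the spectral splitting $B=B_p-B_n$ and the Cartesian decomposition $B=\tfrac12(B+B^{\dag})+i\cdot\tfrac{1}{2i}(B-B^{\dag})$. The only difference is that you spell out the injectivity checks (cancellativity of the cone $\Pos(H)$, and the dagger argument for $A_1+iA_2=0$), which the paper leaves implicit by only exhibiting surjectivity and the evident inverses.
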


\begin{proof}
The proof uses explicit constructions of the left adjoints
$\mathcal{R}$ and $\mathcal{C}$ in~(\ref{AlgCatAdjDiag}).  A module
$X$ over $\Rnn$ can be turned into a vector space over $\R$ via the
same construction that turns a commutative monoid into a commutative
group:
$$\begin{array}{rcl}
\mathcal{R}(X)
& = &
(X\times X)/\!\sim
\qquad\mbox{where}\quad 
\begin{array}[t]{rcl}
\lefteqn{(x_{1},x_{2}) \sim (y_{1}, y_{2})} \\
& \Longleftrightarrow &
\ex{z}{x_{1}+y_{2}+z = y_{1}+x_{2}+z}.
\end{array}
\end{array}$$

\noindent Addition is done componentwise: $[x_{1},x_{2}] +
          [y_{1},y_{2}] = [x_{1}+y_{1}, x_{2}+y_{2}]$, minus by
          reversal: $-[x_{1},x_{2}] = [x_{2},x_{1}]$, and scalar
          multiplication $\scalar \colon \R \times \mathcal{R}(X)
          \rightarrow \mathcal{R}(X)$ via:
$$\begin{array}{rcl}
r\scalar [x_{1},x_{2}]
& = &
\left\{\begin{array}{ll}
{[r\scalar x_{1}, r\scalar x_{2}]} & \mbox{if } r \geq 0 \\
{[(-r)\scalar x_{2}, (-r)\scalar x_{1}]} & \mbox{if } r < 0
\end{array}\right.
\end{array}$$

\noindent (Notice the reversal of the $x_i$ in the second case.)

\auxproof{ 
It is standard that $\mathcal{R}(X)$ is a commutative group (see
\textit{e.g.}~\cite[\S\S4.7]{Jacobs99a}), with $[0,0]$ as zero
element, $-[x_{1},x_{2}] = [x_{2},x_{1}]$ as minus, and inclusion
$\eta\colon X\rightarrow \mathcal{R}(X)$ given by $\eta(x) =
[x,0]$. We have to check that scalar multiplication behaves
appropriately.  This requires some care. First, it is well-defined: if
$(x_{1},x_{2}) \sim (y_{1},y_{2})$, then $x_{1}+y_{2}+z = y_{1}+x_{2}+z$,
for some $z\in X$, and thus if $r \geq 0$,
$$r\scalar x_{1} + r \scalar y_{2} + r\scalar z
=
r\scalar (x_{1}+y_{2}+z)
=
r\scalar (y_{1}+x_{2}+z)
=
r\scalar y_{1} + r \scalar x_{2} + r\scalar z,$$

\noindent and thus $(r\scalar x_{1}, r\scalar x_{2}) \sim (r\scalar y_{1},
r\scalar y_{2})$. Similarly, if $r<0$, 
$$\begin{array}{rcl}
(-r)\scalar x_{2} + (-r) \scalar y_{1} + (-r) \scalar z
& = &
(-r)\scalar (x_{2}+y_{1}+z) \\
& = &
(-r)\scalar (y_{1}+x_{2}+z) \\
& = &
(-r)\scalar (x_{1}+y_{2}+z) \\
& = &
(-r)\scalar (y_{2}+x_{1}+z) \\
& = &
(-r)\scalar y_{2} + (-r) \scalar x_{1} + (-r) \scalar z,
\end{array}$$

\noindent so that $((-r)\scalar x_{2}, (-r)\scalar x_{1}) \sim
((-r)\scalar y_{2}, (-r)\scalar y_{1})$

Next, scalar multiplication preserves sums of vectors:
$$\begin{array}{rcl}
\lefteqn{r \scalar \big([x_{1},x_{2}] + [y_{1},y_{2}]\big)} \\
& = &
r \scalar [x_{1}+y_{1}, x_{2}+y_{2}] \\
& = &
\left\{\begin{array}{ll}
[r \scalar (x_{1}+y_{1}), r \scalar (x_{2}+y_{2})] 
   & \mbox{if } r \geq 0 \\
{[(-r) \scalar (x_{2}+y_{2}), (-r) \scalar (x_{1}+y_{1})]}
   & \mbox{if } r < 0 
\end{array}\right. \\
& = &
\left\{\begin{array}{ll}
[r \scalar x_{1}+ r \scalar y_{1}, r \scalar x_{2} + r \scalar y_{2}] 
   & \mbox{if } r \geq 0 \\
{[(-r) \scalar x_{2} + (-r) \scalar y_{2}, (-r) \scalar x_{1} + 
   (-r) \scalar y_{1}]}
   & \mbox{if } r < 0 
\end{array}\right. \\
& = &
\left\{\begin{array}{ll}
[r \scalar x_{1}, r \scalar x_{2}] + [r \scalar y_{1}, r \scalar y_{2}] 
   & \mbox{if } r \geq 0 \\
{[(-r) \scalar x_{2}, (-r) \scalar x_{1}] + [(-r) \scalar y_{2}, 
   (-r) \scalar y_{1}]}
   & \mbox{if } r < 0 
\end{array}\right. \\
& = &
\left\{\begin{array}{ll}
r \scalar [x_{1}, x_{2}] + r \scalar [y_{1}, y_{2}] 
   & \mbox{if } r \geq 0 \\
r \scalar [x_{1}, x_{2}] + r \scalar [y_{1}, y_{2}] 
   & \mbox{if } r < 0 
\end{array}\right. \\
& = &
r \scalar [x_{1}, x_{2}] + r \scalar [y_{1}, y_{2}].
\end{array}$$

\noindent Showing that $(r+s)\scalar [x_{1}, x_{2}] = r\scalar
          [x_{1},x_{2}] + s\scalar [x_{1},x_{2}]$ requires some care.
This is easy if both $r,s\geq 0$ or $r,s < 0$. We shall do the
case where $r<0$ and $s\geq 0$, and distinguish:
\begin{itemize}
\item $s+r \geq 0$. Then:
$$\begin{array}{rcl}
r\scalar [x_{1},x_{2}] + s\scalar [x_{1},x_{2}]
& = &
[(-r)\scalar x_{2}, (-r)\scalar x_{1}] + [s\scalar x_{1}, s\scalar x_{2}] \\
& = &
[(-r)\scalar x_{2} + s\scalar x_{1}, (-r)\scalar x_{1} + s\scalar x_{2}] \\
& \smash{\stackrel{(*)}{=}} &
[(r+s)\scalar x_{1}, (r+s)\scalar x_{2}] \\
& = &
(r+s)\scalar [x_{1}, x_{2}]
\end{array}$$

\noindent where $\smash{\stackrel{(*)}{=}}$ holds since:
$$\begin{array}{rcl}
\lefteqn{\Big((-r)\scalar x_{2} + s\scalar x_{1}\Big) +
   \Big((r+s)\scalar x_{2}\Big)} \\
& = &
s \scalar x_{1} + \big((r+s) - r\big) \scalar x_{2} \\
& = &
s\scalar x_{1} + s\scalar x_{2} \\
& = &
\big((r+s)-r\big)\scalar x_{1} + s\scalar x_{2} \\
& = &
\Big((r+s)\scalar x_{1}) + \Big((-r)\scalar x_{1} + s\scalar x_{2}\Big)
\end{array}$$

\item $s+r < 0$. Then:
$$\begin{array}{rcl}
r\scalar [x_{1},x_{2}] + s\scalar [x_{1},x_{2}]
& = &
[(-r)\scalar x_{2}, (-r)\scalar x_{1}] + [s\scalar x_{1}, s\scalar x_{2}] \\
& = &
[(-r)\scalar x_{2} + s\scalar x_{1}, (-r)\scalar x_{1} + s\scalar x_{2}] \\
& \smash{\stackrel{(*)}{=}} &
[(-(r+s))\scalar x_{2}, (-(r+s))\scalar x_{1}] \\
& = &
(r+s)\scalar [x_{1}, x_{2}]
\end{array}$$

\noindent where $\smash{\stackrel{(*)}{=}}$ holds since:
$$\begin{array}{rcl}
\lefteqn{\Big((-r)\scalar x_{2} + s\scalar x_{1}\Big) + 
   \Big((-(r+s))\scalar x_{1}\Big)} \\
& = &
(-r)\scalar x_{2} + (s - (r+s))\scalar x_{1} \\
& = &
(-r) \scalar x_{2} + (-r)\scalar x_{1} \\
& = &
(-(r+s) + s)\scalar x_{2} + (-r) \scalar x_{1} \\
& = &
\Big((-(r+s))\scalar x_{2}\Big) + \Big((-r)\scalar x_{1} + s\scalar x_{2}\Big).
\end{array}$$
\end{itemize}

We also have to check $r\scalar (s\scalar [x_{1},x_{2}]) = (r\cdot s)
\scalar [x_{1}, x_{2}]$. When $r,s\geq 0$ this trivially holds. Further,
\begin{itemize}
\item If $r,s<0$, then:
$$\begin{array}{rcl}
r\scalar (s\scalar [x_{1},x_{2}])
& = &
r\scalar [(-s)\scalar x_{2}, (-s)\scalar x_{1}] \\
& = &
[(-r)\scalar ((-s)\scalar x_{1}), (-r)\scalar ((-s)\scalar x_{2})] \\
& = &
[((-r)\cdot (-s))\scalar x_{1}, ((-r)\cdot (-s))\scalar x_{2}] \\
& = &
[(r\cdot s)\scalar x_{1}, (r\cdot s)\scalar x_{2}] \\
& = &
(r\cdot s) \scalar [x_{1}, x_{2}].
\end{array}$$

\item If $r<0$ and $s\geq 0$, then:
$$\begin{array}{rcl}
r\scalar (s\scalar [x_{1},x_{2}])
& = &
r\scalar [s\scalar x_{1}, s\scalar x_{2}] \\
& = &
[(-r)\scalar (s\scalar x_{2}), (-r)\scalar (s\scalar x_{1})] \\
& = &
[((-r)\cdot s)\scalar x_{2}, ((-r)\cdot s)\scalar x_{1}] \\
& = &
[(-(r\cdot s))\scalar x_{2}, (-(r\cdot s))\scalar x_{1}] \\
& = &
(r\cdot s) \scalar [x_{1}, x_{2}].
\end{array}$$
\end{itemize}

Finally we check the bijective correspondence:
$$\begin{prooftree}
{\xymatrix{ \mathcal{R}(X)\ar[r]^-{f} & Y 
   \rlap{\qquad in $\Vect[\R]$}}}
\Justifies
{\xymatrix{ X\ar[r]_-{g} & Y 
   \rlap{\qquad\quad in $\Mod[\Rnn]$}}}
\end{prooftree}$$

\noindent Given $f$ we take $\widehat{f}(x) = f(\eta(x)) = f([x,0])$.
It obviously preserves the monoid structure, and also scalar
multiplication since for $r\in\Rnn$,
$$\widehat{f}(r\scalar x)
=
f([r\scalar x, 0])
=
f([r\scalar x, r\scalar 0])
=
f(r\scalar [x, 0])
=
r\scalar f([x, 0])
=
r \scalar \widehat{f}(x).$$

\noindent Conversely, given $g$ take $\widehat{g}([x_{1},x_{2}]) =
g(x_{1}) - g(x_{2})$. It is well-defined since if $(x_{1},x_{2}) \sim
(y_{1},y_{2})$ then $x_{1}+y_{2} = y_{1} + x_{2}$, and so $g(x_{1}) +
g(y_{2}) = g(x_{1}+ y_{2}) = g(y_{1} + x_{2}) = g(y_{1}) + g(x_{2})$.
Hence $g(x_{1}) - g(x_{2}) = g(y_{1}) - g(y_{2})$. Further,
$$\begin{array}{rcl}
\widehat{g}(0)
& = &
\widehat{g}([0,0]) \\
& = &
g(0) - g(0) \\
& = &
0 \\
\widehat{g}([x_{1},x_{2}] + [y_{1},y_{2}])
& = &
\widehat{g}([x_{1} + y_{1}, x_{2} + y_{2}]) \\
& = &
g(x_{1}+y_{1}) - g(x_{2} + y_{2}) \\
& = &
g(x_{1}) + g(y_{1}) - g(x_{2}) - g(y_{2}) \\
& = &
g(x_{1}) - g(x_{2}) + g(y_{1}) - g(y_{2}) \\
& = &
\widehat{g}([x_{1},x_{2}]) + \widehat{g}([y_{1},y_{2}]) \\
\widehat{g}(r\scalar [x_{1},x_{2}])
& = &
\left\{\begin{array}{ll}
\widehat{g}([r \scalar x_{1}, r\scalar x_{2}])
   & \mbox{if }r \geq 0 \\
\widehat{g}([(-r) \scalar x_{2}, (-r)\scalar x_{1}])
   & \mbox{if }r < 0
\end{array}\right. \\
& = &
\left\{\begin{array}{ll}
g(r \scalar x_{1}) - g(r\scalar x_{2})
   & \mbox{if }r \geq 0 \\
g((-r) \scalar x_{2}) - g((-r)\scalar x_{1})
   & \mbox{if }r < 0
\end{array}\right. \\
& = &
\left\{\begin{array}{ll}
r \scalar g(x_{1}) - r \scalar g(x_{2})
   & \mbox{if }r \geq 0 \\
(-r) \scalar g(x_{2}) - (-r) \scalar g(x_{1})
   & \mbox{if }r < 0
\end{array}\right. \\
& = &
\left\{\begin{array}{ll}
r \scalar \big(g(x_{1}) - g(x_{2})\big)
   & \mbox{if }r \geq 0 \\
(-r) \scalar \big(g(x_{2}) - g(x_{1})\big)
   & \mbox{if }r < 0
\end{array}\right. \\
& = &
\left\{\begin{array}{ll}
r \scalar \widehat{g}([x_{1},x_{2}]).
   & \mbox{if }r \geq 0 \\
r \scalar \big(g(x_{1}) - g(x_{2})\big)
   & \mbox{if }r < 0
\end{array}\right. \\
& = &
r \scalar \widehat{g}([x_{1},x_{2}]).
\end{array}$$

Finally,
$$\begin{array}{rcl}
\widehat{\widehat{f}}([x_{1},x_{2}])
& = &
\widehat{f}(x_{1}) - \widehat{f}(x_{2}) \\
& = &
f([x_{1}, 0]) - f([x_{2}, 0]) \\
& = &
f([x_{1}, 0]) - [x_{2}, 0]) \\
& = &
f([x_{1}, 0]) + [0, x_{2}]) \\
& = &
f([x_{1}, x_{2}]) \\
\widehat{\widehat{g}}(x)
& = &
\widehat{g}([x, 0]) \\
& = &
g(x) - g(0) \\
& = &
g(x).
\end{array}$$

For completeness we add that on a map $f\colon X\rightarrow Y$ in
$\Mod[\Rnn]$ we get $\mathcal{R}(f) \colon
\mathcal{R}(X) \rightarrow \mathcal{R}(Y)$ in
$\Vect[\R]$ by $\mathcal{R}(f)([x_{1},x_{2}]) =
        [f(x_{1}), f(x_{2})]$. We check preservation of scalar
multiplication. For $r\in\R$,
$$\begin{array}{rcl}
\mathcal{R}(f)(r \scalar[x_{1},x_{2}])
& = &
\left\{\begin{array}{ll}
\mathcal{R}(f)([r \scalar x_{1}, r \scalar x_{2}])
   & \mbox{if }r \geq 0 \\
\mathcal{R}(f)([(-r) \scalar x_{2}, (-r) \scalar x_{1}])
   & \mbox{if }r < 0
\end{array}\right. \\
& = &
\left\{\begin{array}{ll}
[f(r \scalar x_{1}), f(r \scalar x_{2})]
   & \mbox{if }r \geq 0 \\
{[f((-r) \scalar x_{2}), f((-r) \scalar x_{1})]}
   & \mbox{if }r < 0
\end{array}\right. \\
& = &
\left\{\begin{array}{ll}
[r \scalar f(x_{1}), r \scalar f(x_{2})]
   & \mbox{if }r \geq 0 \\
{[(-r) \scalar f(x_{2}), (-r) \scalar f(x_{1})]}
   & \mbox{if }r < 0
\end{array}\right. \\
& = &
r \scalar [f(x_{1}), f(x_{2})] \\
& = &
r \scalar \mathcal{R}(f)([x_{1},x_{2}]).
\end{array}$$
}

A vector space $X$ over $\R$ can be turned into a vector space
over $\C$, simply via $\mathcal{C}(X) = X\times X$. The
additive structure is obtained pointwise, and scalar multiplication
$\scalar \colon \C\times \mathcal{C}(X) \rightarrow
\mathcal{C}(X)$ is done as follows.
$$\begin{array}{rcl}
(a+ib)\scalar (x_{1},x_{2})
& = &
(a\scalar x_{1} - b\scalar x_{2}, b\scalar x_{1} + a\scalar x_{2}).
\end{array}$$

\auxproof{
We first check that scalar multiplication is an action:
$$\begin{array}{rcl}
\lefteqn{1\scalar (x_{1},x_{2})} \\
& = &
(1+i0) \scalar (x_{1}, x_{2}) \\
& = &
(1\scalar x_{1} - 0\scalar x_{2}, 0\scalar x_{1} + 1\scalar x_{2}) \\
& = &
(x_{1}, x_{2}) \\
\lefteqn{(c+id) \scalar \big((a+ib) \scalar (x_{1}, x_{2})\big)} \\
& = &
(c+id) \scalar (a\scalar x_{1} - b\scalar x_{2}, 
   b\scalar x_{1} + a\scalar x_{2}) \\
& = &
(c\scalar (a\scalar x_{1} - b\scalar x_{2}) - 
   d\scalar (b\scalar x_{1} + a\scalar x_{2}), \\
& & \qquad
   d\scalar (a\scalar x_{1} - b\scalar x_{2}) + 
      c\scalar (b\scalar x_{1} + a\scalar x_{2})) \\
& = &
((c\cdot a)\scalar x_{1} - (c\cdot b)\scalar x_{2} - 
   (d\cdot b)\scalar x_{1} - (d\cdot a)\scalar x_{2}, \\
& & \qquad
   (d\cdot a)\scalar x_{1} - (d\cdot b)\scalar x_{2} + 
      (c\cdot b)\scalar x_{1} + (c\cdot a)\scalar x_{2}) \\
& = &
((c\cdot a - d\cdot b) \scalar x_{1} - (c\cdot b + d\cdot a) \scalar x_{2}, \\
& & \qquad
   (c\cdot b + d\cdot a) \scalar x_{1} + (c\cdot a - d\cdot b) \scalar x_{2}) \\
& = &
((c\cdot a - d\cdot b) + i(c\cdot b + d\cdot a)) \scalar (x_{1}, x_{2}) \\
& = &
((c + id) \cdot (a + ib)) \scalar (x_{1}, x_{2}).
\end{array}$$

\noindent Scalar multiplication on $\mathcal{C}(X)$ is bilinear:
$$\begin{array}{rcl}
\lefteqn{((a+ib) + (c + id)) \scalar (x_{1}, x_{2})} \\
& = &
((a+c) + i(b+d)) \scalar (x_{1}, x_{2}) \\
& = &
((a+c)\scalar x_{1} - (b+d)\scalar x_{2}, 
   (b+d)\scalar x_{1} + (a+c)\scalar x_{2}) \\
& = &
(a\scalar x_{1} + c\scalar x_{1} - b\scalar x_{2} - d\scalar x_{2},
   b\scalar x_{1} + d\scalar x_{1} + a\scalar x_{2} + c\scalar x_{2}) \\
& = &
(a\scalar x_{1} - b\scalar x_{2}, b\scalar x_{1} + a\scalar x_{2}) 
   + (c\scalar x_{1} - d\scalar x_{2}, d\scalar x_{1} + c\scalar x_{2}) \\
& = &
(a+ib) \scalar (x_{1}, x_{2}) + (c+id) \scalar (x_{1}, x_{2}) \\
\lefteqn{(a+ib) \scalar ((x_{1},x_{2}) + (y_{1}, y_{2}))} \\
& = &
(a+ib) \scalar (x_{1}+y_{1}, x_{2}+y_{2}) \\
& = &
(a\scalar (x_{1}+y_{1}) - b\scalar (x_{2}+y_{2}), 
   b\scalar (x_{1}+y_{1}) + a\scalar (x_{2}+y_{2})) \\
& = &
(a\scalar x_{1} + a\scalar y_{1} - b\scalar x_{2} - b\scalar y_{2}, 
   b\scalar x_{1} + b\scalar y_{1} + a\scalar x_{2} + a\scalar y_{2}) \\
& = &
(a\scalar x_{1} - b\scalar x_{2}, b\scalar x_{1} + a\scalar x_{2}) +
   (a\scalar y_{1} - b\scalar y_{2}, b\scalar y_{1} + a\scalar y_{2}) \\
& = &
(a+ib) \scalar (x_{1},x_{2}) + (a+ib) \scalar (y_{1},y_{2}).
\end{array}$$

Finally we check the bijective correspondence:
$$\begin{prooftree}
{\xymatrix{ 
\mathcal{C}(X)\ar[r]^-{f} & Y 
   \rlap{\qquad in $\Vect[\C]$}}}
\Justifies
{\xymatrix{ X\ar[r]_-{g} & Y 
   \rlap{\qquad\quad in $\Vect[\R]$}}}
\end{prooftree}$$

\noindent Given $f$ take $\widehat{f}(x) = f(x,0)$. This $\widehat{f}$
preserves scalar multiplication:
$$\begin{array}{rcl}
\widehat{f}(r \scalar x)
& = &
f(r\scalar x, 0) \\
& = &
f(r\scalar x - 0 \scalar 0, 0 \scalar x + r \scalar 0) \\
& = &
f((r+i0) \scalar (x, 0)) \\
& = &
(r+i0) \scalar f(x, 0) \\
& = &
r \scalar \widehat{f}(x).
\end{array}$$

\noindent In the other direction, given $g$ take $\widehat{g}(x_{1}, x_{2})
= g(x_{1}) + i\scalar g(x_{2})$. This $\widehat{g}$ preserves the
additive structure and scalar multiplication:
$$\begin{array}{rcl}
\widehat{g}((x_{1},x_{2}) + (y_{1}, y_{2}))
& = &
\widehat{g}(x_{1}+y_{1}, x_{2}+y_{2}) \\
& = &
g(x_{1}+y_{1}) + i\scalar g(x_{2} + y_{2}) \\
& = &
g(x_{1}) + g(y_{1}) + i\scalar g(x_{2}) + i\scalar g(y_{2}) \\
& = &
g(x_{1}) + i\scalar g(x_{2}) + g(y_{1}) + i\scalar g(y_{2}) \\
& = &
\widehat{g}(x_{1},x_{2}) + \widehat{g}(y_{1},y_{2}) \\
\widehat{g}((a+ib) \scalar (x_{1},x_{2}))
& = &
\widehat{g}(a\scalar x_{1} - b\scalar x_{2}, b\scalar x_{1} + a\scalar x_{2}) \\
& = &
g(a\scalar x_{1} - b\scalar x_{2}) + 
   i\scalar g(b\scalar x_{1} + a\scalar x_{2}) \\
& = &
a\scalar g(x_{1}) - b\scalar g(x_{2}) + 
   i\scalar (b\scalar g(x_{1})) + i\scalar (a\scalar g(x_{2})) \\
& = &
(a + ib) \scalar g(x_{1}) + (a+ib) \scalar (i \scalar g(x_{2})) \\
& = &
(a + ib) \scalar \widehat{g}(x_{1}, x_{2}).
\end{array}$$

Finally we have:
$$\begin{array}{rcl}
\widehat{\widehat{f}}(x_{1}, x_{2})
& = &
\widehat{f}(x_{1}) + i\scalar \widehat{f}(x_{2}) \\
& = &
f(x_{1}, 0) + i \scalar f(x_{2}, 0) \\
& = &
f(x_{1}, 0) + i \scalar (x_{2}, 0)) \\
& = &
f((x_{1}, 0) + (0\scalar x_{2} - 1 \scalar 0, 1 \scalar x_{2} + 0 \scalar 0)) \\
& = &
f((x_{1}, 0) + (0, x_{2})) \\
& = &
f(x_{1}, x_{2}) \\
\widehat{\widehat{g}}(x)
& = &
\widehat{g}(x, 0) \\
& = &
g(x) + i\scalar g(0) \\
& = &
g(x) + i \scalar 0 \\
& = &
g(x).
\end{array}$$

For completeness we add that on a map $f\colon X\rightarrow Y$ in
$\Mod[\R]$ we get $\mathcal{C}(f) \colon \mathcal{C}(X)
\rightarrow \mathcal{C}(Y)$ in $\Vect[\C]$ by
$\mathcal{C}(f)(x_{1},x_{2}) = (f(x_{1}), f(x_{2}))$. We check
preservation of scalar multiplication. For $z = (a+ib)\in\C$,
$$\begin{array}{rcl}
\mathcal{C}(f)((a+ib) \scalar (x_{1},x_{2}))
& = &
\mathcal{C}(f)(a\scalar x_{1} - b\scalar x_{2}, 
   b\scalar x_{1} + a\scalar x_{2}) \\
& = &
(f(a\scalar x_{1} - b\scalar x_{2}), f(b\scalar x_{1} + a\scalar x_{2})) \\
& = &
(a\scalar f(x_{1}) - b\scalar f(x_{2}), b\scalar f(x_{1}) + a\scalar f(x_{2})) \\
& = &
(a+ib) \scalar (f(x_{1}), f(x_{2})) \\
& = &
(a+ib) \scalar \mathcal{C}(f)(x_{1},x_{2}).
\end{array}$$
}

The inclusion morphism $\Pos(H) \hookrightarrow \SA(H)$ in
$\Mod[\Rnn]$ yields as transpose the map $\varphi
\colon \mathcal{R}(\Pos(H)) \rightarrow \SA(H)$ in
$\Vect[\R]$ given by $\varphi([B_{1}, B_{2}]) = B_{1} -
B_{2}$. It is surjective since each $A\in\SA(H)$ can be written as $A
= A_{p} - A_{n}$ for $A_{p},A_{n}\in\Pos(H)$ as
in~(\ref{PosNegOperatorEqn}). Thus $A = \varphi([A_{p}, A_{n}])$.

Similarly, the inclusion $\SA(H) \hookrightarrow \BL(H)$ in
$\Vect[\R]$ gives rise to a transpose $\psi \colon
\mathcal{C}(\SA(H)) \rightarrow \BL(H)$ in $\Vect[\C]$,
given by $\psi(B_{1},B_{2}) = B_{1} + iB_{2}$. Also this map is
surjective since each $A\in\BL(H)$ can be written as $A =
\frac{1}{2}(A+A^{\dag}) + \frac{1}{2}i(-iA + iA^{\dag})$, where
$A+A^{\dag}$ and $-iA + A^{\dag}$ are self-adjoints. Hence $A =
\psi(\frac{1}{2}(A+A^{\dag}), \frac{1}{2}(-iA + iA^{\dag})).$ \QED

\auxproof{
Some explicit (but unnecessary) calculations from an earlier stage:
$$\begin{array}{rclcrcl}
\psi(B_{1}, B_{2})
& = &
B_{1} + iB_{2}.
& \qquad &
\psi^{-1}(A)
& = &
(\frac{1}{2}(A+A^{\dag}), \frac{1}{2}(-iA + iA^{\dag}))
\end{array}$$

\noindent We have an isomorphism indeed:
$$\begin{array}{rcl}
\lefteqn{\psi(\psi^{-1}(A))} \\
& = &
\psi^{-1}_{1}(A) + i\psi^{-1}_{2}(A) \\
& = &
\frac{1}{2}(A+A^{\dag}) + i\frac{1}{2}(-iA + iA^{\dag}) \\
& = &
\frac{1}{2}(A + A^{\dag} + A - A^{\dag}) \\
& = &
A. \\
\lefteqn{\psi^{-1}(\psi(B_{1}, B_{2}))} \\
& = &
\psi^{-1}(B_{1} + iB_{2}) \\
& = &
(\frac{1}{2}((B_{1} + iB_{2})+(B_{1} + iB_{2})^{\dag}), 
   \frac{1}{2}(-i(B_{1} + iB_{2}) + i(B_{1} + iB_{2})^{\dag})) \\
& = &
(\frac{1}{2}(B_{1} + iB_{2}+ B_{1}^{\dag} - iB_{2}^{\dag}), 
   \frac{1}{2}(-iB_{1} + B_{2} + iB_{1}^{\dag} + B_{2}^{\dag})) \\
& = &
(\frac{1}{2}(B_{1} + iB_{2}+ B_{1} - iB_{2}), 
   \frac{1}{2}(-iB_{1} + B_{2} + iB_{1} + B_{2})) \\
& = &
(B_{1}, B_{2})
\end{array}$$

\noindent We check that $\psi^{-1}, \psi$ preserve scalar
multiplication:
$$\begin{array}{rcl}
\lefteqn{\psi^{-1}((a+ib) \scalar A)} \\
& = &
(\frac{1}{2}(((a+ib) \scalar A)+((a+ib) \scalar A)^{\dag}), 
    \frac{1}{2}(-i((a+ib) \scalar A) + i((a+ib) \scalar A)^{\dag})) \\
& = &
(\frac{1}{2}(aA + ibA + aA^{\dag} - ibA^{\dag}), 
    \frac{1}{2}(-iaA + bA + iaA^{\dag} + bA^{\dag})) \\
& = &
(a\frac{1}{2}(A+A^{\dag}) - b\frac{1}{2}(-iA + iA^{\dag}), 
   b\frac{1}{2}(A+A^{\dag}) + a\frac{1}{2}(-iA + iA^{\dag})) \\
& = &
(a+ib) \scalar (\frac{1}{2}(A+A^{\dag}), \frac{1}{2}(-iA + iA^{\dag})) \\
& = &
(a+ib) \scalar \psi^{-1}(A) \\
\lefteqn{\psi\big((a+ib) \scalar (B_{1},B_{2})\big)} \\
& = &
\psi(aB_{1} - bB_{2}, bB_{1} + aB_{2}) \\
& = &
aB_{1} - bB_{2} + i(bB_{1} + aB_{2}) \\
& = &
(a+ib)B_{1} + (a+ib)iB_{2} \\
& = &
(a+ib) \scalar \psi(B_{1},B_{2}).
\end{array}$$
}
\end{proof}

\section{Convex sets and effect modules}\label{EffectSec}

In the previous section we have seen how the spaces of operators
$\Pos(H)$, $\SA(H)$, $\BL(H)$ fit in the context of modules. The
spaces $\DM(H)$ of density operators (states) and $\Ef(H)$ of effects
(statements/predicates) require more subtle structures that will be
introduced in this section, namely convex sets and effect modules.  We
show that they are related by a dual adjunction, and that there exists
a map of adjunctions from Hilbert spaces, like in
Section~\ref{FunctOperSec}.

First we recall the definition of the two sets of operators that
are relevant in this section.
$$\begin{array}{rcl}
\DM(H)
& = &
\setin{A}{\Pos(H)}{\tr(A)=1} \\
\Ef(H)
& = &
\setin{A}{\Pos(H)}{A\leq I},
\end{array}$$

\noindent where $I$ is the identity map $H\rightarrow H$ and $\leq$ is
the L\"owner order (described before Proposition~\ref{PosDualProp}).
A further subset of $\Ef(H)$ is the set of projections, given as:
$$\begin{array}{rcl}
\Proj(H)
& = &
\setin{A}{\BL(H)}{A^{\dag} = A = AA}.
\end{array}$$

\noindent For a projection $A\in\Proj(H)$ there is an orthosupplement
$A^{\perp}\in\Proj(H)$ with $A + A^{\perp} = I$. This shows $A \leq
I$, since $I - A = A^{\perp}$ is positive.

Before we investigate the algebraic structure of these sets of
operators we briefly mention the following alternative formulation of
effects. It is used for instance in~\cite{dHondtP06a}, where these
effects $A$ are called predicates; they give a ``quantum expectation
value'' $\tr(AB)$ for a density matrix $B$ (see the map $\hs[\Ef]$ in
Theorem~\ref{ConvEModAdjMapThm} below, elaborated in
Remark~\ref{WPRem}).

\begin{lemma}
\label{EfAltLem}
A positive operator $A\in\Pos(H)$ is an effect if and only if all of
its eigenvalues are in $[0,1]$.
\end{lemma}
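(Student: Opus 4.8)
The plan is to combine the spectral decomposition of $A$ with two facts already recorded in the excerpt: first, that a self-adjoint operator is positive exactly when all of its eigenvalues are non-negative reals (stated at the beginning of the ``Positive operators'' paragraph); and second, the description of the L\"owner order, $A \leq I$ iff $I - A \geq 0$ iff $\inprod{(I-A)x}{x} \geq 0$ for all $x \in H$ (recalled before Proposition~\ref{PosDualProp}). Everything then reduces to a one-line computation on eigenvectors.

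Concretely, I would fix a spectral decomposition $A = \sum_{j}\lambda_{j}\ket{j}\bra{j}$ with respect to an orthonormal basis $\ket{1},\ldots,\ket{n}$ of eigenvectors of $A$. Since $A$ is assumed positive, each $\lambda_{j}$ is already a non-negative real, so the claim ``$A$ is an effect'' $\iff$ ``all eigenvalues lie in $[0,1]$'' collapses to: $A \leq I$ if and only if $\lambda_{j} \leq 1$ for all $j$.

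For the ``only if'' direction, assume $A$ is an effect, so $I - A \geq 0$. Evaluating positivity on the eigenvector $\ket{j}$ gives $0 \leq \inprod{(I-A)\ket{j}}{\ket{j}} = (1-\lambda_{j})\inprod{j}{j} = 1 - \lambda_{j}$, whence $\lambda_{j}\leq 1$; together with $\lambda_{j}\geq 0$ this places every eigenvalue in $[0,1]$. Conversely, if all $\lambda_{j}\in[0,1]$, then, using the resolution of the identity $I = \sum_{j}\ket{j}\bra{j}$ for the orthonormal basis of eigenvectors, we have $I - A = \sum_{j}(1-\lambda_{j})\ket{j}\bra{j}$, which is self-adjoint with eigenvalues $1-\lambda_{j}\geq 0$; hence $I - A \geq 0$, i.e.\ $A \leq I$, so $A$ is an effect.

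There is no real obstacle here; the only thing to be careful about is to handle positivity through its eigenvalue characterization (rather than only through the inequality $\inprod{Ax}{x}\geq 0$), so that a single spectral decomposition of $A$ serves both directions of the equivalence.
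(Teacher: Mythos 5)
Your proof is correct and follows essentially the same route as the paper: both fix a spectral decomposition with an orthonormal eigenbasis, obtain $\lambda_{j}\leq 1$ by evaluating the positivity of $I-A$ on the eigenvector $\ket{j}$ (the paper phrases this as $\lambda_{j}=\bra{j}A\ket{j}\leq\bra{j}I\ket{j}=1$), and get the converse from the resolution of the identity $I=\sum_{j}\ket{j}\bra{j}$, writing $A\leq I$ because $I-A=\sum_{j}(1-\lambda_{j})\ket{j}\bra{j}\geq 0$. No gap; the differences are purely cosmetic.
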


\begin{proof}
Suppose $A$ is an effect with spectral decomposition $A =
\sum_{j}\lambda_{j}\ket{j}\bra{j}$, where we may assume that the
eigenvectors $\ket{j}$ form an orthonormal basis. The eigenvalues
$\lambda_{j}$ are necessarily real and positive. They satisfy:
$$\lambda_{j}
=
\lambda_{j}\inprod{j}{j}
=
\bra{j}\lambda_{j}\ket{j}
=
\bra{j}A\ket{j}
\leq
\bra{j}I\ket{j}
=
\inprod{j}{j}
=
1.$$

\noindent Conversely, assume a positive operator $A$ with spectral
decomposition $A = \sum_{j}\lambda_{j}\ket{j}\bra{j}$ where the
$\ket{j}$ form an orthonormal basis and $\lambda_{j}\in[0,1]$. Then:
$A = \sum_{j}\lambda_{j}\ket{j}\bra{j} \leq \sum_{j}\ket{j}\bra{j} =
I$. \QED
\end{proof}

\subsection{Convex sets}

We start with convex sets, and (conveniently) describe them via a
monad, so that we can benefit from general results like in
Theorem~\ref{MonadAlgStructThm}. Analogously to the multiset monad one
defines the distribution monad $\Dstr\colon\Sets\rightarrow\Sets$ as:
\begin{equation}
\label{DstrUnitEqn}
\begin{array}{rcl}
\Dstr(X)
& = &
\set{\varphi\colon X\rightarrow \unitR}{\support(\varphi)
   \mbox{ is finite and }\sum_{x\in X}\varphi(x) = 1}.
\end{array}
\end{equation}

\noindent Elements of $\Dstr(X)$ are convex combinations
$s_{1}\ket{x_{1}}+\cdots+s_{k}\ket{x_{k}}$, where the probabilities
$s_{i}\in\unitR$ satisfy $\sum_{i}s_{i} = 1$.  Unit and multiplication
making $\Dstr$ a monad can be defined as for the multiset monad
$\Mlt_S$. This multiplication $\mu$ is well-defined since:
$$\textstyle
\sum_{x}\mu(\sum_{i}s_{i}\varphi_{i})(x)
=
\sum_{x}\sum_{i}s_{i}\cdot \varphi_{i}(x)
=
\sum_{i}s_{i}\cdot \big(\sum_{x}\varphi_{i}(x)\big)
=
\sum_{i}s_{i}
=
1.$$

\noindent The distribution monad $\Dstr$ is always symmetric monoidal
(commutative). Here it is defined for probabilities in the unit
interval $[0,1]$, but the more general structure of an ``effect
monoid'' may be used instead, see~\cite{Jacobs11c}.

The following result goes back to~\cite{Swirszcz74}, see
also~\cite{Keimel08,Doberkat06,Jacobs10e}.

\begin{theorem}
The category $\Alg(\Dstr)$ of algebras of the monad $\Dstr$ is the
category \Conv of convex sets with affine maps between them. \QED
\end{theorem}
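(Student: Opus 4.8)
The plan is to build an isomorphism of categories $\Alg(\Dstr)\cong\Conv$ by extracting a convex‑set structure from each Eilenberg--Moore algebra and, conversely, reassembling the algebra map from the convex combinations, then checking the two passages are mutually inverse and functorial. I would work with the equational presentation of $\Conv$: an object is a set $X$ carrying, for each $r\in\unitR$, a binary operation $+_{r}$ satisfying the unit law $x +_{1} y = x$, idempotence $x +_{r} x = x$, commutativity $x +_{r} y = y +_{1-r} x$, and the reassociation law $(x +_{r} y) +_{s} z = x +_{rs}\bigl(y +_{t} z\bigr)$ with $t = \frac{s(1-r)}{1-rs}$ for $rs\neq 1$; a morphism is a function preserving all the $+_{r}$. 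The theorem then asserts precisely that $\Dstr$ is the monad presented by this equational theory.

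For the direction from algebras to convex sets: given $\alpha\colon\Dstr(X)\to X$, put $x +_{r} y := \alpha\bigl(r\ket{x}+(1-r)\ket{y}\bigr)$. When this formal convex sum collapses to a point mass (for $r\in\{0,1\}$, or $x=y$) the unit law $\alpha\after\eta=\idmap$ gives $x +_{1} y = x$ and $x +_{r} x = x$; commutativity is immediate since $r\ket{x}+(1-r)\ket{y}$ and $(1-r)\ket{y}+r\ket{x}$ denote the same element of $\Dstr(X)$. For reassociation I would apply the multiplication law $\alpha\after\mu=\alpha\after\Dstr(\alpha)$ to the two‑layer distribution $s\ket{r\ket{x}+(1-r)\ket{y}}+(1-s)\ket{1\ket{z}}$, whose image under $\Dstr(\alpha)$ and then $\alpha$ is $(x +_{r} y) +_{s} z$, and to $rs\ket{1\ket{x}}+(1-rs)\ket{t\ket{y}+(1-t)\ket{z}}$, whose image is $x +_{rs}(y +_{t} z)$; both flatten under $\mu$ to the one‑layer distribution $rs\ket{x}+s(1-r)\ket{y}+(1-s)\ket{z}$ (a direct weight check using $(1-rs)t=s(1-r)$), so the multiplication law equates the two groupings. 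Thus $(X,(+_{r})_{r})\in\Conv$. For morphisms, a function $f$ between carriers is an algebra homomorphism iff $f\bigl(\alpha(\sum_i s_i\ket{x_i})\bigr)=\beta\bigl(\sum_i s_i\ket{f(x_i)}\bigr)$ for all convex combinations; restricting to two‑point sums shows $f$ is affine, and conversely an affine map respects every finite convex combination by induction on support size, hence is a homomorphism.

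Conversely, from a convex set I would define $\alpha\colon\Dstr(X)\to X$ by iterated binary mixing: $\alpha(1\ket{x}):=x$, and $\alpha\bigl(\sum_{i=1}^{n}s_i\ket{x_i}\bigr):=x_1 +_{s_1}\alpha\bigl(\sum_{i=2}^{n}\tfrac{s_i}{1-s_1}\ket{x_i}\bigr)$ when $s_1<1$. The crux of the whole argument, and the step I expect to be the main obstacle, is a \emph{well‑definedness lemma}: this value depends neither on the chosen enumeration of the support nor on the order of mixing. I would prove it by induction on support size, using commutativity to transpose the leading two summands, the reassociation law to shift parentheses, and idempotence $x +_{r} x = x$ to merge summands that happen to coincide (such coincidences arise, in particular, after applying $\Dstr(\alpha)$ to a nested distribution). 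Granting the lemma, $\alpha\after\eta=\idmap$ is the base case, and $\alpha\after\mu=\alpha\after\Dstr(\alpha)$ holds because both sides evaluate the \emph{same} finite convex combination — the one whose weights come from multiplying out the two layers — of the support elements of a nested distribution.

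Finally I would check the two assignments are mutually inverse. Starting from a convex set, the $+_{r}$ read off the assembled $\alpha$ is, by construction, the original operation. Starting from an algebra $\alpha$, rebuilding from its derived $+_{r}$ returns $\alpha$, because the monad laws already pin $\alpha$ down on an arbitrary distribution via exactly this iterated binary recipe (the same computation as in the previous paragraph, run with $\mu$ against the two‑layer distribution $s_1\ket{1\ket{x_1}}+(1-s_1)\ket{\sum_{i\ge 2}\tfrac{s_i}{1-s_1}\ket{x_i}}$). On morphisms both passages are the identity on underlying functions, and functoriality is then automatic, yielding $\Alg(\Dstr)\cong\Conv$. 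As indicated, essentially all the content sits in the well‑definedness lemma — equivalently, in the assertion that $\Dstr$ is the free‑convex‑set (barycentric algebra) monad — which is the result of~\cite{Swirszcz74}.
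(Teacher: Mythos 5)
The paper itself contains no proof of this statement: it is recorded as a known result going back to~\cite{Swirszcz74} (see also~\cite{Keimel08,Doberkat06,Jacobs10e}), and the surrounding text simply identifies a convex set with a $\Dstr$-algebra from then on. So your proposal cannot be compared with an argument in the paper; what it does is reconstruct the classical proof, and it does so along the standard lines, provided one reads ``convex set'' as the paper intends, namely as an abstract barycentric algebra in Stone's sense rather than a convex subset of a vector space (with the latter reading the statement would be false). Your direction from algebras to convex sets is correct as written: defining $x+_{r}y=\alpha(r\ket{x}+(1-r)\ket{y})$, the unit law of the algebra gives the unit and idempotence axioms, and your two-layer distributions with the weight check $(1-rs)t=s(1-r)$ is exactly how the reassociation law falls out of $\alpha\after\mu=\alpha\after\Dstr(\alpha)$. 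In the converse direction you correctly isolate well-definedness of iterated binary mixing under reordering of the support as the crux. One step is stated too glibly, though: the claim that the reconstructed $\alpha$ satisfies $\alpha\after\mu=\alpha\after\Dstr(\alpha)$ ``because both sides evaluate the same finite convex combination'' is not literally true --- the side $\alpha\after\Dstr(\alpha)$ evaluates a combination whose support points are the barycenters $\alpha(\varphi_{i})$, not the original elements. What is needed is a generalized associativity (flattening) lemma, that mixing barycenters equals the barycenter of the multiplied-out weights, proved by induction from the binary axioms in the same spirit as, and usually together with, your well-definedness lemma; the same lemma also gives the converse half of your morphism comparison (binary-preserving implies preservation of all finite combinations). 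With that refinement your outline is correct and is, as you yourself note, exactly Swirszcz's theorem, which the paper deliberately leaves to the literature.
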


Here we shall identify such a convex set simply with an algebra
$a\colon\Dstr(X)\rightarrow X$ of the monad $\Dstr$. It thus consists
of a set $X$ in which there is an interpretation
$a(\sum_{j}s_{j}\ket{x_{j}})\in X$ for each formal convex combination
$\sum_{j}s_{j}\ket{x_{j}}\in\Dstr(X)$. In particular, for each
$r\in[0,1]$ and $x,y\in X$ there is an interpretation of the convex
sum $rx + (r-1)y$, namely as $a(r\ket{x} + (1-r)\ket{y})\in X$. The
unit interval $[0,1]$ of real numbers is an obvious example of a
convex set. Actually, it is a free one since $[0,1] \cong
\Dstr(\{0,1\})$. Affine maps preserve such interpretations of convex
combinations. We recall that in the present context all such convex
combinations involve only finitely many elements $x_j$.

\begin{lemma}
\label{DMFunLem}
Let $\FdHilbUn$ be the category of finite-dimensional Hilbert spaces
with unitary maps between them.  Taking density operators yields a
functor $\DM\colon \FdHilbUn \rightarrow \Conv = \Alg(\Dstr)$.
\end{lemma}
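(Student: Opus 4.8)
The plan is to define $\DM$ on objects by equipping $\DM(H)$ with the convex structure it inherits as a subset of the real vector space $\SA(H)$, and on morphisms by the conjugation formula $\DM(C)(A) = CAC^{\dag}$ already used for $\BL$, $\SA$, $\Pos$ in~(\ref{BLVectEqn}); the only genuinely new ingredient is that unitarity of $C$ is exactly what makes this preserve the trace-one condition. First, for fixed $H$, I would check that $\DM(H)$ is closed under the formal convex combinations of $\Dstr(\DM(H))$: interpreting $\sum_j s_j\ket{A_j}$ as the operator $\sum_j s_j A_j\in\BL(H)$, positivity is preserved because $\Pos(H)$ is a module over $\Rnn$ (closed under sums and non-negative scalar multiples), and normalisation is preserved because $\tr$ is linear, so $\tr(\sum_j s_j A_j) = \sum_j s_j\tr(A_j) = \sum_j s_j = 1$. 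Thus $a_H(\sum_j s_j\ket{A_j}) := \sum_j s_j A_j$ is a well-defined map $\Dstr(\DM(H))\to\DM(H)$, and the Eilenberg--Moore laws $a_H\after\eta = \idmap[\DM(H)]$ and $a_H\after\mu = a_H\after\Dstr(a_H)$ are inherited from the corresponding identities for the affine-combination operations on the vector space $\BL(H)$. In short, $\DM(H)$ is a convex subset of a vector space, the prototypical $\Dstr$-algebra.

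Next, for a unitary $C\colon H\to K$ in $\FdHilbUn$ I would set $\DM(C)(A) = CAC^{\dag}$, as for $\Pos$. Positivity of $CAC^{\dag}$ is the same computation as there, namely $\inprod{CAC^{\dag}x}{x} = \inprod{AC^{\dag}x}{C^{\dag}x}\geq 0$. The trace condition is where the hypothesis enters: by the cyclic property of the trace together with $C^{\dag}C = \idmap[H]$,
\[
\tr(CAC^{\dag}) \;=\; \tr(C^{\dag}CA) \;=\; \tr(A) \;=\; 1,
\]
so $\DM(C)(A)\in\DM(K)$ and $\DM(C)$ is well-defined. It is affine because conjugation is linear: $C\big(\sum_j s_j A_j\big)C^{\dag} = \sum_j s_j CA_jC^{\dag}$, which is precisely the statement $\DM(C)\after a_H = a_K\after\Dstr(\DM(C))$ that $\DM(C)$ is an algebra morphism.

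Finally, functoriality is the same short verification as for $\BL$, $\SA$ and $\Pos$: $\DM(\idmap[H])(A) = IAI = A$, and for composable unitaries $C\colon H\to K$ and $D\colon K\to L$ one has $(D\after C)\,A\,(D\after C)^{\dag} = DCAC^{\dag}D^{\dag} = D\big(\DM(C)(A)\big)D^{\dag} = \DM(D)\big(\DM(C)(A)\big)$, using $(DC)^{\dag} = C^{\dag}D^{\dag}$. There is no genuine obstacle here; the one point worth emphasising is that the lemma is stated over $\FdHilbUn$ rather than $\FdHilb$ precisely so that the displayed cyclic-trace computation may use $C^{\dag}C = I$ --- a non-isometric map need not preserve the trace (and, e.g., the zero map destroys normalisation altogether), so this is the unique place the unitarity assumption is used.
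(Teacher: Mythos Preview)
Your proof is correct and follows essentially the same approach as the paper: verify that $\DM(H)$ is closed under convex combinations via linearity of the trace, and that conjugation by a unitary preserves both positivity and the trace-one condition via the cyclic property and $C^{\dag}C = I$. You are in fact more thorough than the paper, which leaves the affineness of $\DM(C)$ and the functoriality equations implicit.
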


\begin{proof}
As is well-known, the set $\DM(H)$ of densitity operators is convex:
given finitely many $A_{j}\in\DM(H)$ and $r_{j}\in[0,1]$ with
$\sum_{j}r_{j} = 1$, the operator $A = \sum_{j}r_{j}A_{j}$ is positive
and has trace 1, since:
$$\textstyle\tr(A)
=
\tr(\sum_{j}r_{j}A_{j})
=
\sum_{j}r_{j}\tr(A_{j})
=
\sum_{j}r_{j}
=
1.$$

\noindent Moreover, if $U\colon H\rightarrow K$ is
unitary---\textit{i.e.}~$UU^{\dag} = I$ and (thus) $U^{\dag}U = I$, so
that $U^{\dag} = U^{-1}$---then $\DM(U)(A) = UAU^{\dag} \colon K
\rightarrow K$ is in $\DM(K)$, if $A\in\DM(H)$, since:
$$\tr\big(\DM(U)(A)\big)
=
\tr(UAU^{\dag})
=
\tr(U^{\dag}UA)
=
\tr(IA)
=
\tr(A)
=
1.\eqno{\QEDbox}$$
\end{proof}

The three example categories $\Alg(\Mlt_{S})$ of interest here---that
arise from multiset monads $\Mlt_S$ for $S=\Rnn,
\R, \C$---are different from the category
$\Alg(\Dstr)$ of convex sets in at least three aspects:
\begin{itemize}
\item These categories $\Alg(\Mlt_{S})$ are dually self-adjoint via
the functor $(-)\multimap S$ as in~(\ref{ModuleDualDiag}).

\item They have biproducts, because the monads $\Mlt_S$ are
  `additive', see~\cite{CoumansJ12}.

\item The tensor unit in $\Alg(\Dstr) = \Conv$ is the singleton
  set $1$, since $\Dstr(1) \cong 1$, so that tensors $\otimes$ in
  \Conv have projections (see~\cite{Jacobs94a}).
\end{itemize}

\noindent The mapping $X\mapsto \Conv(X, [0,1])$, for $X$ a convex
set, does not yield an adjunction as in~(\ref{ModuleDualDiag}), but
does lead to an interesting dual adjunction with a category of `effect
modules'. This will be the described in the next subsection.

\auxproof{
The left adjoint $F$ is described in~\cite{Jacobs10e},
namely on $X = \smash{(\Dstr(X)\stackrel{a}{\rightarrow} X)}
\in\Alg(\Dstr)$ as:
$$\begin{array}{rcl}
F(X)
& = &
\{0\} + \R_{>0}\times X,
\end{array}$$

\noindent with addition for $u,v\in F(X)$, in trivial
cases given by $u+0 = u = 0+u$ and:
$$\begin{array}{rcl}
(s,x)+(t,y)
& = &
(s+t, a(\frac{s}{s+t}x + \frac{t}{s+t}y)) 
\end{array}$$

\noindent A scalar multiplication $\bullet\colon \Rnn\times
F(X)\rightarrow F(X)$ is defined as:
$$\begin{array}{rcl}
s\scalar u
& = &
\left\{\begin{array}{ll}
0 & \mbox{if $u=0$ or $s=0$} \\
(s\cdot t, x) & \mbox{if }u = (t,x)\mbox{ and }s\neq 0.
\end{array}\right.
\end{array}$$
}

But first we conclude this part on convex sets with an observation
like in Theorem~\ref{AlgCatAdjThm}.  There is an obvious map of monads
$\Dstr \Rightarrow \Mlt_{\Rnn}$, that gives rise to an inclusion
functor $\Mod[\Rnn] = \Alg(\Mlt_{\Rnn}) \rightarrow \Alg(\Dstr) =
\Conv$, saying that modules over non-negative reals are convex
sets---in a trivial manner. For general reasons, this functor has a
left adjoint, that can be described explicitly in terms of a
representation contruction that goes back to~\cite{Stone49} (see
also~\cite{Jacobs10e}). This left adjoint $\mathcal{S} \colon \Conv
\rightarrow \Mod[\Rnn]$ is given on $X\in\Conv$ by:
$$\begin{array}{rcl}
\mathcal{S}(X)
& = &
\{0\} + \R_{>0}\times X,
\end{array}$$

\noindent with addition for $u,v\in \mathcal{S}(X)$, in trivial
cases given by $u+0 = u = 0+u$ and:
$$\begin{array}{rcl}
(s,x)+(t,y)
& = &
(s+t, \; \frac{s}{s+t}x + \frac{t}{s+t}y).
\end{array}$$

\noindent A scalar multiplication $\scalar\colon \Rnn\times
\mathcal{S}(X)\rightarrow \mathcal{S}(X)$ is defined as:
$$\begin{array}{rcl}
s\scalar u
& = &
\left\{\begin{array}{ll}
0 & \mbox{if $u=0$ or $s=0$} \\
(s\cdot t, x)\quad & \mbox{if }u = (t,x)\mbox{ and }s\neq 0.
\end{array}\right.
\end{array}$$

\noindent This makes $\mathcal{S}(X)$ a module over $\Rnn$.

\begin{theorem}
\label{ConvModCatAdjThm}
For a finite-dimensional Hilbert space $H$, transposing the inclusion
$\DM(H) \hookrightarrow \Pos(H)$ in \Conv gives an isomorphism
$\mathcal{S}(\DM(H)) \conglongrightarrow \Pos(H)$ in
$\Mod[\Rnn]$. In this way one obtains a triangle
commuting up-to-isomorphism:
$$\xymatrix@C-1.5pc@R-.5pc{
& \FdHilbUn\ar[dl]_{\DM}\ar[dr]^{\Pos} \\
\llap{$\Alg(\Dstr)=\;$}\Conv\ar[rr]_-{\mathcal{S}}^-{\mbox{\small free}} 
   & & \Mod[\Rnn]\rlap{$\;=\Alg(\Mlt_{\Rnn})$}
}$$
\end{theorem}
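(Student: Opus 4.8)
The plan is to make the left adjoint $\mathcal{S}\colon\Conv\to\Mod[\Rnn]$ and its unit completely explicit (as already described just above the theorem), identify the transposed map concretely, and then check that it is a bijective module homomorphism. This runs exactly parallel to the proof of Theorem~\ref{AlgCatAdjThm}.

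First I would recall that the unit of the adjunction $\mathcal{S}\dashv\bigl(\Mod[\Rnn]\hookrightarrow\Conv\bigr)$ at a convex set $X$ is the affine map $\eta_X\colon X\to\mathcal{S}(X)=\{0\}+\R_{>0}\times X$, $x\mapsto(1,x)$. Hence for a convex set $X$, an $\Rnn$-module $M$, and an affine map $g\colon X\to M$, the transpose $\widehat{g}\colon\mathcal{S}(X)\to M$ is forced to be $\widehat{g}(0)=0$ and $\widehat{g}(s,x)=s\scalar g(x)$; being a transpose under the adjunction, it is automatically a morphism of $\Rnn$-modules, so no separate check of additivity or homogeneity is needed. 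Applying this to the (manifestly affine) inclusion $\iota_H\colon\DM(H)\hookrightarrow\Pos(H)$ in $\Conv$ yields the map $\varphi_H\colon\mathcal{S}(\DM(H))\to\Pos(H)$ in $\Mod[\Rnn]$ with $\varphi_H(0)=0$ and $\varphi_H(s,A)=s\scalar A=sA$.

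Next I would prove $\varphi_H$ is bijective, the one substantive ingredient being that the trace is faithful on positive operators: $\tr(P)>0$ whenever $0\neq P\in\Pos(H)$, since a nonzero positive operator has a strictly positive eigenvalue. For injectivity: if $sA=tB$ with $A,B\in\DM(H)$ and $s,t>0$, then $s=s\tr(A)=\tr(sA)=\tr(tB)=t\tr(B)=t$, and dividing by $s>0$ gives $A=B$; moreover $\tr(sA)=s>0$, so no element of the form $(s,A)$ maps to $0$. For surjectivity: given $0\neq P\in\Pos(H)$, set $A=\tr(P)^{-1}P\in\DM(H)$, so that $P=\tr(P)\scalar A=\varphi_H(\tr(P),A)$, while $0=\varphi_H(0)$. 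Thus $\varphi_H$ is a bijective homomorphism of $\Rnn$-modules, hence an isomorphism $\mathcal{S}(\DM(H))\conglongrightarrow\Pos(H)$.

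Finally, for the triangle, I would note that $\iota\colon\DM\Rightarrow\bigl(\Mod[\Rnn]\hookrightarrow\Conv\bigr)\after\Pos$ is a natural transformation of functors $\FdHilbUn\to\Conv$ — for unitary $U\colon H\to K$ the inclusions commute with $A\mapsto UAU^{\dag}$ — so its transpose $\varphi\colon\mathcal{S}\after\DM\Rightarrow\Pos$ is natural too, and since it is pointwise an isomorphism it is a natural isomorphism, which is precisely the asserted up‑to‑isomorphism commutation. (A direct check also works: both composites send $(s,A)$ to $s\,UAU^{\dag}$.) The ``main obstacle'' here is conceptual rather than computational — correctly reading off the transpose of the adjunction, and isolating faithfulness of $\tr$ on $\Pos(H)$ as the single fact that powers both injectivity and surjectivity; the remainder is routine bookkeeping.
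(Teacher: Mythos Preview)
Your proof is correct and follows essentially the same approach as the paper: identify the transposed map as $(r,A)\mapsto rA$, prove injectivity via the trace computation $r=\tr(rA)=\tr(sB)=s$, and prove surjectivity by normalizing a nonzero positive operator to $\tr(B)^{-1}B$. You add a bit more care than the paper (the unit of the adjunction, the case $(s,A)\mapsto 0$, and the naturality of $\varphi$), but the substance is the same.
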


\begin{proof}
The induced map $\smash{\mathcal{S}(\DM(H)) = \{0\} + \Rnn\times
  \DM(H) \stackrel{\varphi}{\longrightarrow} \Pos(H)}$ is given by
$0\mapsto 0$ and $(r,A) \mapsto rA$. It is injective, since if $rA =
sB$ for $A,B\in\DM(H)$, then $r = r\cdot \tr(A) = \tr(rA) = \tr(sB) =
s\cdot \tr(B) = s$, and thus $A=B$. It is also surjective: since each
non-zero $B\in\Pos(H)$ can be written as $B = \tr(B)
(\frac{B}{\tr(B)}) = \varphi(\tr(B), \frac{B}{\tr(B)})$, where the
operator $\frac{B}{\tr(B)}$ has trace 1 by construction. \QED
\end{proof}

By combining this result with Theorem~\ref{AlgCatAdjThm} we see that
each of the spaces of operators $\BL(H)$, $\SA(H)$, $\Pos(H)$ can be
obtained from the space $\DM(H)$ of density operators via free
constructions. As we will see in Theorem~\ref{ConvEModAdjMapThm}
below, density matrices and effects can be translated back and forth:
$\Ef(H) \cong \Conv(\DM(H), [0,1])$ and $\DM(H) \cong \EMod(\Ef(H),
[0,1])$. Hence these density operators and effects are in a sense most
fundamental among the operators on a Hilbert space.

\subsection{Effect modules}

Effect modules are structurally like modules over a semiring. But
instead of a semiring of scalars one uses an effect monoid, such as
the unit interval $[0,1]$. Such an effect monoid is a monoid in the
category of effect algebras, just like a semiring is a monoid in the
category of commutative monoids. Thus, in order to define an effect
module, we need the notion of effect algebra and of monoid in effect
algebras. This will be introduced first.

But in order to define an effect algebra, we need the notion of
partial commutative monoid (PCM). Before reading the definition of
PCM, think of the unit interval $[0,1]$ with addition $+$. This $+$ is
obviously only a partial operation, which is commutative and
associative in a suitable sense. This will be formalised next.

A partial commutative monoid (PCM) consists of a set $M$ with a zero
element $0\in M$ and a partial binary operation $\ovee\colon M\times
M\rightarrow M$ satisfying the three requirements below. They involve
the notation $x\orthogonal y$ for: $x\ovee y$ is defined; in that case
$x,y$ are called orthogonal.
\begin{enumerate}
\item Commutativity: $x\orthogonal y$ implies $y\orthogonal x$ and
$x\ovee y = y\ovee x$;

\item Associativity: $y\orthogonal z$ and $x \orthogonal (y\ovee z)$
implies $x\orthogonal y$ and $(x\ovee y) \orthogonal z$ and also
$x \ovee (y\ovee z) = (x\ovee y)\ovee z$;

\item Zero: $0\orthogonal x$ and $0\ovee x = x$;
\end{enumerate}

\noindent For each set $X$ the lift $\{0\}+X$ of $X$, obtained by
adjoining a new element $0$, is an example of a PCM, with $u\ovee 0 =
u = 0 \ovee u$, and $\ovee$ undefined otherwise. Such structures are
also studied under the name `partially additive monoid',
see~\cite{ArbibM86}.

The notion of effect algebra is due to~\cite{FoulisB94}, see
also~\cite{DvurecenskijP00} for an overview.

\begin{definition}
\label{EffAlgDef}
An effect algebra is a PCM $(E, 0, \ovee)$ with an
orthosupplement. The latter is a unary operation $(-)^{\perp} \colon
E\rightarrow E$ satisfying:
\begin{enumerate}
\item $x^{\perp}\in E$ is the unique element in $E$ with $x\ovee
  x^{\perp} = 1$, where $1 = 0^\perp$;

\item $x\orthogonal 1 \Rightarrow x=0$.
\end{enumerate}

A homomorphism $E\rightarrow D$ of effect algebras is given by a
function $f\colon E\rightarrow D$ between the underlying sets
satisfying $f(1) = 1$, and if $x\orthogonal x'$ in $E$ then both $f(x)
\orthogonal f(x')$ in $D$ and $f(x\ovee x') = f(x) \ovee f(x')$.

Effect algebras and their homomorphisms form a category, called
\EA.
\end{definition}

The unit interval $[0,1]$ is a PCM with sum of $r,s\in[0,1]$ defined
if $r+s\leq 1$, and in that case $r\ovee s=r+s$. The unit interval is
also an effect algebra with $r^{\perp} = 1 - r$. Each orthomodular
lattice is an effect algebra, see~\cite{DvurecenskijP00,Foulis07} for
more information and examples. In particular, the projections
$\Proj(H)$ of a Hilbert space form an effect algebra, with $P
\orthogonal Q$ iff $P \leq Q^{\perp}$. In~\cite{Jacobs11c} a notion of
`convex category' is introduced in which homsets $\Hom(X,2)$ are
effect algebras (where $2 = 1+1$ and $1$ is final). Most importantly
in the current setting, the set of effects $\Ef(H)$, consisting of
positive operators $A \leq I$ is an effect algebra, with $A
\orthogonal B$ iff $A+B \leq I$, and in that case $A\ovee B = A+B$;
further, $A^{\perp} = I - A$. This yields a functor $\Ef \colon
\FdHilbUn \rightarrow \EA$.

\auxproof{
Assume a unitary map $U \colon H\rightarrow K$. We have $\Ef(U)(A) =
UAU^{\dag} \colon K \rightarrow K$. It is again a positive operator, which
is below the identity, since:
$$\Ef(U)(A)
=
UAU^{\dag}
\leq
UIU^{\dag}
=
UU^{\dag}
=
I.$$

\noindent Further, if $A+B\leq I$, then 
$$\Ef(U)(A) + \Ef(U)(B)
=
UAU^{\dag} + UBU^{\dag}
=
U(A+B)U^{\dag}
\leq
UIU^{\dag}
=
I.$$

\noindent Thus: $\Ef(U)(A) \ovee \Ef(U)(B) = UAU^{\dag} + UBU^{\dag} =
U(A+B)U^{\dag} = \Ef(U)(A\ovee B)$. Further, $\Ef(U)(I) = UIU^{\dag} = I$.
}

In~\cite{JacobsM12a} it is shown that the category \EA is
symmetric monoidal, where morphisms $E\otimes D\rightarrow C$ in
\EA correspond to `bimorphisms' $f\colon E\times D \rightarrow C$,
satisfying $f(1,1) = 1$, and for all $x,x'\in E$ and $y,y'\in D$,
$$\left\{\begin{array}{rcl}
x\orthogonal x'
& \Longrightarrow &
f(x,y) \orthogonal f(x',y) \;\mbox{ and }\;
f(x\ovee x',y)
\hspace*{\arraycolsep} = \hspace*{\arraycolsep}
f(x,y) \ovee f(x',y) \\
y\orthogonal y'
& \Longrightarrow &
f(x,y) \orthogonal f(x,y') \;\mbox{ and }\;
f(x,y \ovee y')
\hspace*{\arraycolsep} = \hspace*{\arraycolsep}
f(x,y) \ovee f(x,y').
\end{array}\right.$$

\noindent The tensor unit is the two-element effect algebra $2 =
\{0,1\}$. Since 2 is at the same time initial in \EA we have a `tensor
with coprojections' (see~\cite{Jacobs94a} for `tensors with
projections'). One can think of elements of the tensor $E\otimes D$ as
finite sums $\ovee_{j}\,x_{j}\sotimes y_{j}$, where one identifies:
$$\begin{array}{rclcrcl}
0\sotimes y
& = & 
0
& \qquad &
x\sotimes 0
& = &
0 \\
(x\ovee x')\sotimes y
& = &
(x\sotimes y) \ovee (x'\sotimes y)
& &
x\sotimes (y\ovee y')
& = &
(x\sotimes y) \ovee (x\sotimes y'),
\end{array}$$

\noindent when $x\orthogonal x'$ and $y\orthogonal y'$.

\begin{example}
\label{EATensorEx}
For an arbitrary set $X$ the powerset $\Pow(X)$ is a Boolean algebra,
and so an orthomodular lattice, and thus an effect algebra. For
$U,V\in\Pow(X)$ one has $U\orthogonal V$ iff $U\cap V = \emptyset$ and
in that case $U\ovee V = U\cup V$. The tensor product
$[0,1]\otimes\Pow(X)$ of effect algebras is then given by the set of
step functions $f\colon X\rightarrow [0,1]$; such functions have only
finitely many output values.  When $X$ is a finite set, say with $n$
elements, then $[0,1]\otimes \Pow(X) \cong [0,1]^{n}$,
see~\cite{Gudder96}.

As special case we have $[0,1]\otimes\{0,1\} \cong [0,1]$, since
$\{0,1\}$ is the tensor unit. One writes $\textit{MO}(n)$ for the
orthomodular lattice with $2n+2$ elements, namely $0,1,i,i^{\perp}$,
for $1\leq i\leq n$, with only minimal equations. Thus $\textit{MO}(0)
= \{0,1\}$ and $\textit{MO}(1) \cong \Pow(\{0,1\})$, so that
$[0,1]\otimes \textit{MO}(1) \cong [0,1]^{2}$. It can be shown that
$[0,1]\otimes\textit{MO}(2)$ is an octahedron.
\end{example}

Using this symmetric monoidal structure $(\otimes,2)$ on $\EA$ we can
consider, in a standard way, the category $\Mon(\EA)$ of monoids in
the category $\EA$ of effect algebras. Such monoids are similar to
semirings, which are monoids in the category of commutative monoids,
\textit{i.e.}~objects of $\Mon(\Cat{CMon})$. A monoid $S\in\Mon(\EA)$
consists of a set $S$ carrying effect algebra structure
$(0,\ovee,(-)^{\perp})$ and a monoid structure, written
multiplicatively, as in: $\smash{S\otimes S
  \stackrel{\cdot}{\rightarrow} S \leftarrow 2}$. Since $2$ is
initial, the latter map $S\leftarrow 2$ does not add any
structure. The monoid structure on $S$ is thus determined by a
bimorphism $\cdot \colon S \times S \rightarrow S$ that preserves
$\ovee$ in each variable separately and satisfies $1\cdot x = x =
x\cdot 1$.

For such a monoid $S\in\Mon(\EA)$ we can consider the category
$\Act_{S}(\EA) = \EMod_{S}$ of $S$-monoid actions (scalar
multiplications), or `effect modules' over $S$
(see~\cite[VII,\S4]{MacLane71}). Again this is similar to the
situation in Section~\ref{ModuleSec} where the category
$\Mod[S]$ of modules over a semiring $S$ may be described as the
category $\Act_{S}(\Cat{CMon})$ of commutative monoids with $S$-scalar
multiplication. In this section an effect module $X\in\Act_{S}(\EA)$
thus consists of an effect algebra $X$ together with an action (or
scalar multiplication) $\scalar \colon S\otimes X \rightarrow X$,
corresponding to a bimorphism $S\times X\rightarrow X$. A homomorphism
of effect modules $X\rightarrow Y$ consists of a map of effect
algebras $f\colon X\rightarrow Y$ preserving scalar multiplication
$f(s\scalar x) = s\scalar f(x)$ for all $s\in S$ and $x\in X$.

By completely general reasoning the forgetful functor $\EMod_{S}
\rightarrow \EA$ has a left adjoint, given by tensoring with
$S$, as in:
\begin{equation}
\label{FreeActEADiag}
\vcenter{\xymatrix@R-1pc{
\EMod_{S}\rlap{$\;=\Act_{S}(\EA)$}\ar[d]_{\dashv} \\
\EA\ar@/^3ex/[u]^{S\otimes(-)}
}}
\end{equation}

\noindent See~\cite[VII,\S4]{MacLane71} for details.

The main example of a (commutative) monoid in \EA is the unit interval
$[0,1] \in \EA$ via ordinary multiplication. If $r_{1}+r_{2}\leq 1$,
then we have the familiar distributivity in each variable, as in:
$$s\cdot (r_{1}\ovee r_{2})
=
s\cdot (r_{1}+r_{2})
=
(s\cdot r_{1}) + (s\cdot r_{2})
=
(s \cdot r_{1}) \ovee (s \cdot r_{2}).$$

\noindent We shall be most interested in the associated category
$\EMod_{[0,1]} = \Act_{[0,1]}(\EA)$. In the sequel `effect
module' will mean `effect module over $[0,1]$'. In particular, we
shall write $\EMod$ for $\EMod_{[0,1]}$. These effect
modules have been studied earlier under the name `convex effect
algebras', see~\cite{PulmannovaG98}. We prefer the name `effect
module' to emphasise the similarity with ordinary modules.

The effects $\Ef(H)$ of a Hilbert space form an example of an effect
module, with the usual scalar multiplication $[0,1] \times \Ef(H)
\rightarrow \Ef(H)$. It is not hard to see that this mapping $H\mapsto
\Ef(H)$ yields a functor $\FdHilbUn \rightarrow \EMod$.

\auxproof{
Given a unitary map $U\colon H\rightarrow K$ we have to show
that $\Ef(U)(A) = UAU^{\dag} \in \Ef(K)$ for $A\in\Ef(H)$. 
But $A \leq I$ yields $\Ef(U)(A) = UAU^{\dag} \leq UIU^{\dag} = I$.
Further, $\Ef(U) \colon \Ef(H) \rightarrow \Ef(K)$ preserves
scalar multiplication since:
$$\Ef(U)(r\scalar A)
=
U(rA)U^{-1}
=
rUAU^{-1}
=
r\scalar \Ef(U)(A).$$
}

A (dual) adjunction between convex sets and effect algebras is
described in~\cite{Jacobs10e}. Here it is strengthened to an
adjunction between convex sets and effect modules.

\begin{proposition}
\label{ConvEModAdjProp}
By ``homming into $[0,1]$'' one obtains an adjunction:
$$\xymatrix{
\Conv\ar@/^1.5ex/[rr]^-{\Conv(-,[0,1])} 
   & \bot & 
   \EMod\rlap{$\op$}
   \ar@/^1.5ex/[ll]^-{\EMod(-,[0,1])}
}$$
\end{proposition}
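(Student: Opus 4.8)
The plan is to set up the two functors and the unit/counit, then check naturality and the triangle identities. First I would verify that both functors are well-defined. For a convex set $X \in \Conv$, the homset $\Conv(X,[0,1])$ carries a pointwise effect-algebra structure: given affine $f,g\colon X \to [0,1]$, declare $f \orthogonal g$ iff $f(x)+g(x) \leq 1$ for all $x$, and in that case $(f\ovee g)(x) = f(x)+g(x)$; set $f^{\perp}(x) = 1 - f(x)$. One checks $f\ovee g$ and $f^\perp$ are again affine (affine combinations commute with $+$ and with $r \mapsto 1-r$ in $[0,1]$), and the PCM and orthosupplement axioms are inherited from $[0,1]$ pointwise. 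Moreover $\Conv(X,[0,1])$ is an effect module over $[0,1]$ via $(s\scalar f)(x) = s\cdot f(x)$; this is a well-defined bimorphism because multiplication on $[0,1]$ distributes over $\ovee$ in each variable (as already noted in the excerpt for $[0,1]\in\Mon(\EA)$). Symmetrically, for an effect module $M \in \EMod$, the homset $\EMod(M,[0,1])$ is a \emph{convex} set: given affine-like $\omega_j \colon M \to [0,1]$ (effect-module maps) and $r_j \in [0,1]$ with $\sum_j r_j = 1$, the pointwise combination $\sum_j r_j \omega_j$ again preserves $\ovee$, $1$, and scalar multiplication, hence lies in $\EMod(M,[0,1])$; this defines the $\Dstr$-algebra structure. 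On morphisms both functors act by precomposition, which is clearly functorial and lands in the right category.

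Next I would define the unit and counit, both given by ``double-dual'' evaluation. For $X \in \Conv$, let $\eta_X \colon X \to \EMod(\Conv(X,[0,1]),[0,1])$ send $x$ to the evaluation map $\mathrm{ev}_x = \lam{f}{f(x)}$. This $\mathrm{ev}_x$ is an effect-module homomorphism because $\ovee$, $1$, and scalar multiplication in $\Conv(X,[0,1])$ are all defined pointwise, so evaluation at a fixed $x$ preserves them; and $\eta_X$ itself is affine since $\mathrm{ev}$ is linear in $x$ in the same pointwise sense, i.e. $\mathrm{ev}_{\sum_j r_j x_j}(f) = f(\sum_j r_j x_j) = \sum_j r_j f(x_j) = \sum_j r_j \mathrm{ev}_{x_j}(f)$ using affineness of $f$. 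Dually, for $M \in \EMod$ define $\eps_M \colon M \to \Conv(\EMod(M,[0,1]),[0,1])$ (a morphism in $\EMod\op$, hence going the other way as a map of convex sets... more precisely $\eps_M$ is the component in $\EMod$ of the counit of the dual adjunction) by $\eps_M(m) = \lam{\omega}{\omega(m)}$; one checks this evaluation map is affine on $\EMod(M,[0,1])$ and that $\eps_M$ preserves effect-module structure, again because everything on $\EMod(M,[0,1])$ is pointwise.

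Then I would check naturality of $\eta$ and $\eps$ — routine diagram chases, since all maps in sight are precompositions or evaluations, and the relevant squares commute essentially by associativity of function composition. Finally, the triangle identities: for $X \in \Conv$ one must show that the composite
$$\Conv(X,[0,1]) \xrightarrow{\;\Conv(\eta_X,[0,1])\;} \Conv(\EMod(\Conv(X,[0,1]),[0,1]),[0,1]) \xrightarrow{\;\eps_{\Conv(X,[0,1])}\;} \Conv(X,[0,1])$$
is the identity, and symmetrically for $M \in \EMod$. Unwinding, this amounts to the identity $\eval_{\eval_x}(f) = f(x)$, i.e. $\bigl(f \after \eta_X\bigr)(x)$ traced through the evaluations returns $f(x)$, which holds by definition. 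I do not expect any real obstacle here; the content is entirely formal once the two functors are seen to be well-defined. The one point requiring genuine (if still routine) care is the very first step: confirming that $\ovee$ and $(-)^\perp$ computed pointwise actually stay \emph{inside} the affine maps (resp. the effect-module maps), i.e. that the effect-algebra operations and convex combinations are compatible with affineness and with module structure — this is where one uses, concretely, that multiplication on $[0,1]$ is a bimorphism of effect algebras and that $r \mapsto 1-r$ interacts correctly with convex sums. Everything downstream is bookkeeping.
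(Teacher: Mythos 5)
Your proposal is correct, and it rests on exactly the same substantive facts as the paper's proof, but it packages the adjunction differently. The paper establishes the adjunction directly through the hom-set bijection
$$\Conv\big(X,\, \EMod(Y,[0,1])\big) \;\cong\; \EMod\big(Y,\, \Conv(X,[0,1])\big),$$
realised by swapping arguments, $f \mapsto \widehat{f} = \lam{y}{\lam{x}{f(x)(y)}}$; once the two hom-functors are seen to be well defined, the only remaining check is that the swapped map is a morphism in the target category, and naturality is automatic because the bijection is given by a single uniform formula. You instead exhibit the unit and counit as double-dual evaluation maps and verify naturality and the triangle identities. The two presentations are equivalent, and both hinge on the one point you correctly isolate as carrying the real content: all structure on the hom-sets is pointwise, so evaluation at a fixed element is a morphism of the appropriate kind, and $\ovee$, $(-)^{\perp}$ and convex combination stay inside the affine maps (respectively the effect-module maps) --- using that multiplication on $[0,1]$ is an effect-algebra bimorphism and that convex coefficients sum to $1$, which is what makes $f\mapsto 1-f$ pointwise affine. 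Your route costs some extra bookkeeping the paper never writes down, but it buys explicit access to the evaluation maps $\eta_X$ and $\eps_M$, which is genuinely useful here: these are precisely the maps that reappear when the isomorphisms $\hs[\Ef]$ and $\hs[\DM]$ are assembled into a map of adjunctions in Theorem~\ref{ConvEModAdjMapThm}. One cosmetic slip: in your displayed triangle identity the two arrow labels are transposed --- the first arrow should be $\eps_{\Conv(X,[0,1])}$ and the second $\Conv(\eta_X,[0,1])$, since precomposition with $\eta_X$ goes \emph{from} the double dual \emph{to} $\Conv(X,[0,1])$; the composite itself and your verification $\mathrm{ev}_{\mathrm{ev}_x}(f)=f(x)$ are nonetheless right.
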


\begin{proof}
Given a convex set, the homset $\Conv(X,[0,1])$ of affine maps is an
effect module, with $f\orthogonal g$ iff $\allin{x}{X}{f(x) + g(x)
  \leq 1}$. In that case one defines $f\ovee g =
\lamin{x}{X}{f(x)+g(x)}$.  It is easy to see that this is again an
affine function. Similarly, the pointwise scalar product $r\scalar f =
\lamin{x}{X}{r\cdot f(x)}$ yields an affine function. This mapping $X
\mapsto \Conv(X,[0,1])$ gives a contravariant functor since for
$h\colon X\rightarrow X'$ in \Conv pre-composition with $h$ yields a
map $(-) \after h \colon \Conv(X', [0,1]) \rightarrow \Conv(X, [0,1])$
of effect modules.

\auxproof{
For a formal convex sum $\sum_{j}s_{j}x_{j}$ one has:
$$\begin{array}{rcl}
(f \ovee g)(\sum_{j}s_{j}x_{j})
& = &
f(\sum_{j}s_{j}x_{j}) + g(\sum_{j}s_{j}x_{j}) \\
& = &
\sum_{j}s_{j}f(x_{j}) + \sum_{j}s_{j}g(x_{j}) \\
& = &
\sum_{j}s_{j}(f(x_{j}) + g(x_{j})) \\
& = &
\sum_{j}s_{j}(f\ovee g)(x_{j}) \\
(r\scalar f)(\sum_{j}s_{j}x_{j})
& = &
r\cdot f(\sum_{j}s_{j}x_{j}) \\
& = &
r\cdot (\sum_{j}s_{j}f(x_{j})) \\
& = &
\sum_{j}s_{j} r\cdot f(x_{j}) \\
& = &
\sum_{j}s_{j}(r\scalar f)(x_{j}) \\
r \scalar (f \ovee g)
& = &
\lam{x}{r\cdot (f \ovee g)(x)} \\
& = &
\lam{x}{r\cdot (f(x) + g(x))} \\
& = &
\lam{x}{r\cdot f(x) + r \cdot g(x))} \\
& = &
\lam{x}{(r\scalar f)(x) + (r\scalar g)(x)} \\
& = &
(r \scalar f) + (r \scalar g) \\
(f \ovee g) \after h
& = &
\lam{x}{f(h(x)) + g(h(x))} \\
& = &
\lam{x}{(f \after h)(x) + (g \after h)(x)} \\
& = &
(f \after h) \ovee (f \after h) \\
(r \scalar f) \after h
& = &
\lam{x}{r \cdot f(h(x))} \\
& = &
\lam{x}{r \cdot (f \after h)(x)} \\
& = &
r \scalar (f \after h).
\end{array}$$
}

In the other direction, given an effect module $Y$, the homset
$\EMod(Y, [0,1])$ of effect module maps yields a convex set: for a
formal convex sum $\sum_{j}r_{j}\ket{f_{j}}$, where $f_{j} \colon Y
\rightarrow [0,1]$ in $\EMod$, we can define an actual sum $f\colon
Y\rightarrow [0,1]$ by $f(y) = \sum_{j} r_{j}\cdot f_{j}(y)$. This $f$
forms a map of effect modules. Again, functoriality is obtained via
pre-composition.

\auxproof{
$$\begin{array}{rcl}
f(y \ovee z)
& = &
\sum_{j}r_{j}\cdot f_{j}(y\ovee z) \\
& = &
\sum_{j}r_{j}\cdot (f_{j}(y) \ovee f_{j}(z)) \\
& = &
\sum_{j}r_{j}\cdot (f_{j}(y) + f_{j}(z)) \\
& = &
\sum_{j}r_{j}\cdot f_{j}(y) + r\cdot f_{j}(z)) \\
& = &
\sum_{j}r_{j}\cdot f_{j}(y) + \sum_{j}r\cdot f_{j}(z)) \\
& = &
f(y) + f(z) \\
& = &
f(y) \ovee f(z) \\
f(1)
& = &
\sum_{j}r_{j}\cdot f_{j}(1) \\
& = &
\sum_{j}r_{j} \cdot 1 \\
& = &
\sum_{j}r_{j} \\
& = &
1.
\end{array}$$

For a map $h\colon Y \rightarrow Y'$ of effect modules precomposition
with $h$ gives an affine map $(-) \after h \colon \EMod(Y', [0,1])
\rightarrow \EMod(Y, [0,1])$ since for a convex sum
$\sum_{j}r_{j}f_{j}$ of effect algebra maps $f_{j} \colon
Y'\rightarrow [0,1]$ we get:
$$\begin{array}{rcl}
(\sum_{j}r_{j}f_{j}) \after h
& = &
\lam{y}{(\sum_{j}r_{j}f_{j})(h(y))} \\
& = &
\lam{y}{\sum_{j}r_{j}\cdot f_{j}(h(y))} \\
& = &
\lam{y}{\sum_{j}r_{j}\cdot (f_{j} \after h)(y))} \\
& = &
\sum_{j}r_{j}\cdot (f_{j} \after h).
\end{array}$$
}

The dual adjunction between \Conv and $\EMod$ involves a
bijective correspondence that is obtained by swapping arguments, like
in~(\ref{ModuleDualDiag}). For $X\in\Conv$ and $Y\in\EMod$,
we have:
$$\begin{prooftree}
{\xymatrix{ 
X\ar[r]^-{f} & \EMod(Y,[0,1]) 
   \rlap{\hspace*{2em} in $\Conv$}}}
\Justifies
{\xymatrix{ Y\ar[r]_-{g} & \Conv(X, [0,1])
   \rlap{\hspace*{2.3em} in $\EMod$}}}
\end{prooftree}$$

\noindent What needs to be checked is that for a map $f$ of convex
sets as indicated, the swapped version $\widehat{f} =
\lamin{y}{Y}{\lamin{x}{X}{f(x)(y)}} \colon Y \rightarrow
\Conv(X,[0,1])$ is a map of effect modules---and similarly for $g$.
This is straightforward. \QED

\auxproof{
We show that $\widehat{f}(y) \colon X\rightarrow [0,1]$ is
a map of convex sets:
$$\begin{array}{rcl}
\widehat{f}(y)(\sum_{j}r_{j}x_{j})
& = &
f(\sum_{j}r_{j}x_{j})(y) \\
& = &
\big(\sum_{j}r_{j}f(x_{j})\big)(y) \\
& = &
\sum_{j}r_{j}f(x_{j})(y) \\
& = &
\sum_{j}r_{j}\widehat{f}(y)(x_{j}).
\end{array}$$

\noindent Next we have to check that $\widehat{f}$ is a map
of effect modules.
$$\begin{array}{rcl}
\widehat{f}(y_{1} \ovee y_{2})
& = &
\lam{x}{f(x)(y_{1} \ovee y_{2})} \\
& = &
\lam{x}{f(x)(y_{1}) \ovee f(x)(y_{2})} \\
& = &
\lam{x}{\widehat{f}(y_{1})(x) \ovee \widehat{f}(y_{2})(x)} \\
& = &
\widehat{f}(y_{1}) \ovee \widehat{f}(y_{2}) \\
\widehat{f}(1)
& = &
\lam{x}{f(x)(1)} \\
& = &
\lam{x}{1} \\
& = &
1
\end{array}$$

Next, starting from $g$ we take $\widehat{g} =
\lamin{x}{X}{\lamin{y}{Y}{g(y)(x)}}$. First, $\widehat{g}(x)$ is a
map of effect modules.
$$\begin{array}{rcl}
\widehat{g}(x)(y_{1} \ovee y_{2})
& = &
g(y_{1} \ovee y_{2})(x) \\
& = &
(g(y_{1}) \ovee g(y_{2}))(x) \\
& = &
g(y_{1})(x) \ovee g(y_{2})(x) \\
& = &
\widehat{g}(x)(y_{1}) \ovee \widehat{g}(x)(y_{2}) \\
\widehat{g}(x)(1)
& = &
g(1)(x) \\
& = &
(\lam{z}{1})(x) \\
& = &
1.
\end{array}$$

\noindent Also, $\widehat{g}$ is affine:
$$\begin{array}{rcl}
\widehat{g}(\sum_{j}r_{j}x_{j})
& = &
\lam{y}{g(y)(\sum_{j}r_{j}x_{j})} \\
& = &
\lam{y}{\sum_{j}r_{j}g(y)(x_{j})} \\
& = &
\lam{y}{\sum_{j}r_{j}\widehat{g}(x_{j})(y)} \\
& = &
\sum_{j}r_{j}\widehat{g}(x_{j}).
\end{array}$$
}
\end{proof}

With this adjunction in place we can give a clearer picture of density
matrices and effects, forming a map of adjunctions (like in
Section~\ref{FunctOperSec}. The isomorphisms involved are well-known,
see~\textit{e.g.}~\cite{Busch03}, but the framing of the relevant
structure in terms of maps of adjunctions is new.

\begin{theorem}
\label{ConvEModAdjMapThm}
There is a `dual adjunction' between convex sets and effect modules
as in the lower part of the diagram below. Further, there are 
natural isomorphisms:
\begin{equation}
\label{hsEfDMisoEqn}
\hspace*{-.7em}\vcenter{\xymatrix@C-.4pc@R-2pc{
\Ef(H)\ar[r]^-{\hs[\Ef]}_-{\cong} & \Conv\big(\DM(H), [0,1]\big)
\hspace*{-.5em} & \hspace*{-.5em}
\DM(H)\ar[r]^-{\hs[\DM]}_-{\cong} & \EMod\big(\Ef(H), [0,1]\big) \\
A\ar@{|->}[r] & \tr(A-)
\hspace*{-.5em} & \hspace*{-.5em}
B\ar@{|->}[r] & \tr(B-)
}}
\end{equation}

\noindent that give rise to a map of adjunctions given by states $\DM$
and statements (effects) $\Ef$ in:
$$\xymatrix{
\FdHilbUn\ar@/^1.2ex/[rr]^-{(-)^{\dag}}\ar[d]_{\DM} & \bot & 
   \FdHilbUn\rlap{$\op$}\ar@/^1.2ex/[ll]^-{(-)^{\dag}}\ar[d]^{\Ef} \\
\Conv\ar@/^1.2ex/[rr]^-{\Conv(-,[0,1])} 
   & \bot & 
   \EMod\rlap{$\op$}
   \ar@/^1.2ex/[ll]^-{\EMod(-,[0,1])}
}$$
\end{theorem}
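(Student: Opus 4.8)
The dual adjunction between $\Conv$ and $\EMod$ is exactly Proposition~\ref{ConvEModAdjProp}, so the work lies in constructing the two trace-pairing isomorphisms, checking naturality in $H\in\FdHilbUn$, and verifying the map-of-adjunctions conditions of~\cite[IV,7]{MacLane71}. The plan is to follow the template of Propositions~\ref{BLDualProp}--\ref{PosDualProp} and to reduce both new isomorphisms to the self-duality $\hs[\Pos]\colon\Pos(H)\congrightarrow\Pos(H)\multimap\Rnn$ of Proposition~\ref{PosDualProp}, using that $\Pos(H)$ is the free $\Rnn$-module on $\DM(H)$ (Theorem~\ref{ConvModCatAdjThm}) and that effects are precisely the positive operators with spectrum in $[0,1]$ (Lemma~\ref{EfAltLem}). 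The guiding observation is that $\hs[\Pos]$ is an order isomorphism between the L\"owner order on $\Pos(H)$ and the pointwise order on $\Pos(H)\multimap\Rnn$ --- indeed $\tr((A'-A)B)\geq0$ for all positive $B$ iff $\bra{v}(A'-A)\ket{v}\geq 0$ for all $v$ iff $A\leq A'$ --- so $\hs[\Pos]$ carries $\Ef(H)=\{A\mid 0\leq A\leq I\}$ onto $\{\phi\mid 0\leq\phi\leq\hs[\Pos](I)=\tr(-)\}$, and restriction along $\DM(H)\hookrightarrow\Pos(H)$ identifies the latter set with $\Conv(\DM(H),[0,1])$.

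Concretely, for $\hs[\Ef]$ I would: (i) check $0\leq\tr(AB)\leq 1$ for $A\in\Ef(H)$, $B\in\DM(H)$ --- positivity as in the proof of Proposition~\ref{PosDualProp}, and $\tr(AB)\leq\tr(B)=1$ because $\tr((I-A)B)\geq0$ --- so that $\hs[\Ef](A)=\tr(A-)$ is a well-defined affine map $\DM(H)\to[0,1]$, and verify that $A\mapsto\hs[\Ef](A)$ preserves top, $\orthogonal$, $\ovee$ and scalar multiplication, i.e.\ is an $\EMod$-map; (ii) deduce injectivity from Proposition~\ref{PosDualProp}, since any nonzero $B\in\Pos(H)$ is $\tr(B)\cdot(B/\tr(B))$ with $B/\tr(B)\in\DM(H)$; (iii) for surjectivity, given affine $f\colon\DM(H)\to[0,1]$, extend it (by freeness, or explicitly by $\overline{f}(0)=0$ and $\overline{f}(B)=\tr(B)\cdot f(B/\tr(B))$) to a module map $\overline{f}\colon\Pos(H)\to\Rnn$, put $A=\hs[\Pos]^{-1}(\overline{f})\in\Pos(H)$, and observe that $\tr(AB)=\overline{f}(B)\leq\tr(B)$ forces $\bra{v}(I-A)\ket{v}\geq 0$ for all $v$, hence $A\leq I$, so $A\in\Ef(H)$ with $\hs[\Ef](A)=f$; (iv) conclude the inverse is also an $\EMod$-homomorphism using injectivity of $\hs[\Ef]$ (e.g.\ $\hs[\Ef]^{-1}(f\ovee f')$ and $\hs[\Ef]^{-1}(f)\ovee\hs[\Ef]^{-1}(f')$ have equal images under $\hs[\Ef]$).

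The isomorphism $\hs[\DM]$ is symmetric: $\tr(B-)$ maps $\Ef(H)$ into $[0,1]$ and is an $\EMod$-map for each $B\in\DM(H)$, and $B\mapsto\hs[\DM](B)$ is affine; injectivity again follows from Proposition~\ref{PosDualProp} because every positive operator is a non-negative scalar multiple of an effect (Lemma~\ref{EfAltLem}). For surjectivity, given $g\in\EMod(\Ef(H),[0,1])$, define $\overline{g}\colon\Pos(H)\to\Rnn$ by $\overline{g}(0)=0$ and $\overline{g}(P)=\lambda\cdot g(P/\lambda)$ for any $\lambda\geq\|P\|$; this is well-defined and additive because for large enough $\lambda$ the operators $P/\lambda$, $P'/\lambda$ and $(P+P')/\lambda$ are effects with $P/\lambda\ovee P'/\lambda=(P+P')/\lambda$, so that the facts that $g$ preserves $\ovee$ and scalars suffice. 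Then $B=\hs[\Pos]^{-1}(\overline{g})\in\Pos(H)$ has $\tr(B)=\overline{g}(I)=g(I)=1$, so $B\in\DM(H)$ and $\hs[\DM](B)=g$; the inverse is an affine map by injectivity.

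Finally, naturality in $H$ of both isomorphisms reduces --- just as for $\hs[\BL]$ in Proposition~\ref{BLDualProp} --- to the cyclic property of the trace together with $U^{\dag}=U^{-1}$ for unitary $U$, via $\tr\big(A\,UBU^{\dag}\big)=\tr\big(U^{\dag}A U\,B\big)$; and the map-of-adjunctions conditions are the compatibility of $\hs[\Ef]$ and $\hs[\DM]$ with the units and counits of the $(-)^{\dag}\dashv(-)^{\dag}$ adjunction on $\FdHilbUn$ and of the $\Conv$--$\EMod$ adjunction, which follow from one further trace computation of the same type and may be left to the reader, as in Propositions~\ref{BLDualProp}--\ref{PosDualProp}. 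I expect the main obstacle to be the two surjectivity arguments --- specifically, verifying that $\overline{f}$ and $\overline{g}$ are well-defined module maps and that the operators produced by Proposition~\ref{PosDualProp} actually satisfy $A\leq I$ respectively $\tr(B)=1$ --- everything else being a routine unwinding of the definitions of effect algebra and effect module.
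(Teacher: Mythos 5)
Your proposal is correct and follows essentially the same route as the paper: both isomorphisms are obtained by rescaling/extending the given map on $\DM(H)$ (resp.\ $\Ef(H)$) to an $\Rnn$-module map on $\Pos(H)$ and then invoking the self-duality $\hs[\Pos]$ of Proposition~\ref{PosDualProp}, exactly as you describe. The only cosmetic differences are that the paper establishes $A\leq I$ via the spectral decomposition and Lemma~\ref{EfAltLem} (each $\ket{j}\bra{j}$ is a density matrix, so $\lambda_{j}=h(\ket{j}\bra{j})\leq 1$) where you use the equivalent L\"owner-order argument $\tr((I-A)B)\geq 0$, and that it proves injectivity by a direct inner-product computation rather than by reduction to the injectivity of $\hs[\Pos]$.
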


\begin{proof}
This map of adjunctions involves natural
isomorphisms~(\ref{hsEfDMisoEqn}), in the categories $\EMod$ and
$\Conv$. We start with the first one, labeled $\hs[\Ef]$
in~(\ref{hsEfDMisoEqn}), and note that it is well-defined: for
$A\in\Ef(H)$ and $B\in\DM(H)$ one has:
$$\hs[\Ef](A)(B)
=
\tr(AB)
\leq
\tr(IB)
=
\tr(B)
=
1.$$

\noindent Injectivity of $\hs[\Ef]$ is obtained as follows. Assume
$A_{1},A_{2}\in\Ef(H)$ satisfy $\hs[\Ef](A_{1}) = \hs[\Ef](A_{2})$,
\textit{i.e.}~$\tr(A_{1}-) = \tr(A_{2}-) \colon \DM(H) \rightarrow
       [0,1]$. for an arbitrary non-zero element $x\in H$ there is a
       density matrix $B_{x} = \frac{\ket{x}\bra{x}}{|x|^{2}} \colon H
       \rightarrow H$. Thus $\tr(A_{1}B_{x}) = \tr(A_{2}B_{x})$. Then:
$$\begin{array}{rcl}
\inprod{(A_{1}-A_{2})x}{x}
& = &
\inprod{x}{A_{1}x} - \inprod{x}{A_{2}x} \\
& = &
\tr(\bra{x}A_{1}\ket{x}) - \tr(\bra{x}A_{2}\ket{x}) \\
& = &
\tr(A_{1}\ket{x}\bra{x}) - \tr(A_{2}\ket{x}\bra{x}) \\
& = &
|x|^{2}\big(\tr(A_{1}B_{x}) - \tr(A_{2}B_{x})\big) \\
& = &
0.
\end{array}$$

\noindent Since this equation holds for all $x\in H$, including $x=0$,
we get $A_{1} - A_{2} \geq 0$, and thus $A_{2} \leq A_{1}$. Similarly
$A_{1} \leq A_{2}$, and thus $A_{1} = A_{2}$.

For surjectivity of $\hs[\Ef]$ assume a morphism of convex sets
$h\colon \DM(H) \rightarrow [0,1]$. We turn it into a linear map
$h'\colon \Pos(H) \rightarrow \Rnn$ in the category $\Mod[\Rnn]$ of
modules over $\Rnn$ via:
$$\begin{array}{rcl}
h'(B)
& = &
\left\{\begin{array}{ll}
0 & \mbox{if $B=0$, or equivalently, $\tr(B) = 0$} \\
\tr(B) \cdot h\big(\frac{B}{\tr(B)}\big) & \mbox{otherwise.}
\end{array}\right.
\end{array}$$

\auxproof{
Brief check. Obviously $B=0 \Rightarrow \tr(B)=0$. Conversely,
assuming $B\in\Pos(H)$ with spectral decomposition $B =
\sum_{j}\lambda_{j}\ket{j}\bra{j}$ with $\lambda_{j} \in \Rnn$.
Then $\tr(B) = \sum_{j}\lambda_{j}=0$ implies $\lambda_{j}=0$
for each $j$, and thus $B=0$.
}

\noindent This is well-defined since $\tr(\frac{B}{\tr(B)}) =
\frac{\tr(B)}{\tr(B)} = 1$. We check linearity of $h'$. It is easy
to see that $h'(rB) = rh'(B)$, for $r\in\Rnn$, and for
non-zero $B,C\in\Pos(H)$ we have:
$$\begin{array}{rcl}
h'(B)+h'(C)
& = &
\tr(B)\cdot h\big(\frac{B}{\tr(B)}\big) + 
   \tr(C) \cdot h\big(\frac{C}{\tr(C)}\big) \\
& = &
\tr(B+C)\cdot \Big(\frac{\tr(B)}{\tr(B+C)}\cdot h\big(\frac{B}{\tr(B)}\big)
  + \frac{\tr(C)}{\tr(B+C)}\cdot h\big(\frac{C}{\tr(C)}\big)\Big) \\
& = &
\tr(B+C)\cdot \Big(h\big(\frac{\tr(B)}{\tr(B+C)}\cdot \frac{B}{\tr(B)} + 
    \frac{\tr(C)}{\tr(B+C)}\cdot \frac{C}{\tr(C)}\big)\Big) \\
& & \qquad \mbox{since $h$ preserves convex sums and:} \\
& & \qquad
  \frac{\tr(B)}{\tr(B+C)} + \frac{\tr(C)}{\tr(B+C)} =
  \frac{\tr(B)}{\tr(B)+\tr(C)} + \frac{\tr(C)}{\tr(B)+\tr(C)} =
  1 \\
& = &
\tr(B+C)\cdot h\big(\frac{B + C}{\tr(B+C)}\big) \\
& = &
h'(B+C).
\end{array}$$

\auxproof{
$$\begin{array}{rcl}
h'(rB)
& = &
\tr(rB)\cdot h\big(\frac{rB}{\tr(rB)}\big) \\
& = &
r\tr(B)\cdot h\big(\frac{rB}{r\tr(B)}\big) \\
& = &
r\tr(B)\cdot h\big(\frac{B}{\tr(B)}\big) \\
& = &
rh'(B).
\end{array}$$
}

\noindent By Proposition~\ref{PosDualProp} there is a unique
$A=\hs[\Pos]^{-1}(h')\in\Pos(H)$ with $h' = \tr(A-) \colon \Pos(H)
\rightarrow \Rnn$. For a density operator $B\in\DM(H) \hookrightarrow
\Pos(H)$ we get $\tr(AB) = h'(B) = h(B)\in [0,1]$. We claim that $A$
is an effect, \textit{i.e.}~is in $\Ef(H) \hookrightarrow
\Pos(H)$. Write $A = \sum_{j}\lambda_{j}\ket{j}\bra{j}$ as spectral
decomposition, where the $\ket{j}$ form an orthonormal basis. By
Lemma~\ref{EfAltLem} we need to prove $\lambda_{j} \leq 1$. Each
operator $\ket{j}\bra{j}$ is a density matrix, and thus $\lambda_{j} =
\tr(A\ket{j}\bra{j}) = h'(\ket{j}\bra{j}) = h(\ket{j}\bra{j}) \leq 1$.

We turn to the second map $\hs[\DM]$
in~(\ref{hsEfDMisoEqn}). Injectivity is obtained like for $\hs[\Ef]$,
using that each operator $\ket{x}\bra{x}$ is a projection and thus an
effect. For surjectivity assume a map of effect modules $g\colon
\Ef(H) \rightarrow [0,1]$, we extend it to a linear map $g'\colon
\Pos(H)\rightarrow \Rnn$ by:
$$\begin{array}{rcl}
g'(B)
& = &
n \cdot g(\frac{1}{n}B)
\qquad \mbox{where $n\in\NNO$ is such that $\frac{1}{n}B\in\Ef(H)$.}
\end{array}$$

\noindent Such an $n$ can be found in the following way. Take the
spectral decomposition $B = \sum_{j}\lambda_{j}\ket{j}\bra{j}$, where
$\lambda_{j}\geq 0$, because $B$ is positive, and the $\ket{j}$ form
an orthonormal basis. We can find an $n\in\NNO$ with $\lambda_{j} \leq
n$ for each $j$. Then $\frac{1}{n}B =
\sum_{j}\frac{1}{n}\lambda_{j}\ket{j}\bra{j}$ is an effect by
Lemma~\ref{EfAltLem}.  We also have to check that the definition of
$g'$ is independent of the choice of $n$: if also $\frac{1}{m}B \in
\Ef(H)$, assume, without loss of generality $m\leq n$; then we use
that $g$ is a map of $[0,1]$-actions:
$$\textstyle n\cdot g(\frac{1}{n}B)
=
n\cdot g(\frac{m}{n}\cdot \frac{1}{m}B)
=
n\cdot \frac{m}{n}\cdot g(\frac{1}{m}B)
=
m\cdot g(\frac{1}{m}B).$$

\noindent It is easy to see that the map $g'$ is linear.  Hence by
Proposition~\ref{PosDualProp} there is a (unique)
$B=\hs[\Pos]^{-1}(g')\in\Pos(H)$ with $g' = \tr(B-) \colon \Pos(H)
\rightarrow \Rnn$. Then for $A\in\Ef(H)$ we have $g(A) = g'(A) =
\tr(BA) \in [0,1]$. In particular $1 = g(I) = \tr(BI) = \tr(B)$, so
that $B\in\DM(H)$.

\auxproof{
Linearity of $g'$:
$$\begin{array}{rcl}
g'(0)
& = &
g(0) \\
& = &
0 \\
g'(B+C)
& = &
n\cdot g(\frac{1}{n}(B+C)) \qquad \mbox{where } \frac{1}{n}(B+C)\in\Ef(H) \\
& = &
n\cdot g(\frac{1}{n}B+\frac{1}{n}C) \\
& = &
n\cdot (g(\frac{1}{n}B) + g(\frac{1}{n}C)) \\
& = &
n\cdot g(\frac{1}{n}B) + n\cdot g(\frac{1}{n}C) \\
& = &
g'(B) + g'(C).
\end{array}$$

\noindent For preservation of scalar multiplication assume
$\frac{1}{n}B\in\Ef(H)$ and $r\in\R$. If $r\leq 1$ we are done,
so assume $r > 1$ with $\frac{r}{m} \leq 1$. Then $\frac{1}{nm}rB \in
\Ef(H)$, so:
$$\begin{array}{rcl}
g'(rB)
& = &
nm \cdot g(\frac{1}{nm}rB) \\
& = &
nm \frac{r}{m} g(\frac{1}{n}B) \\
& = &
rng(\frac{1}{n}B) \\
& = &
rg'(B).
\end{array}$$
}

One of the equations that $\hs[\Ef]$ and $\hs[\DM]$ should satisfy to
ensure that we have a map of adjunctions is the following;
the other one is similar and left to the reader.
$$\xymatrix@R-1pc{
\DM(H)\ar[rr]^-{\eta=\lam{B}{\lam{h}{h(B)}}}\ar@{=}[ddrr] & &
   \EMod\big(\Conv(\DM(H), [0,1]), [0,1]\big)\ar[d]^-{(-) \after \hs[\Ef]} \\
& & \EMod(\Ef(H), [0,1])\ar[d]^{\hs[\DM]^{-1}} \\
& & \DM(H)
}$$

\noindent This triangle commutes since for $B\in\DM(H)$,
$$\begin{array}[b]{rcl}
\big(\hs[\DM]^{-1} \after \big((-) \after \hs[\Ef]\big) \after \eta\big)(B)
& = &
\hs[\DM]^{-1} \after \eta(B) \after \hs[\Ef] \\
& = &
\hs[\DM]^{-1} \after \lam{A}{\eta(B)(\hs[\Ef](A))} \\
& = &
\hs[\DM]^{-1} \after \lam{A}{\hs[\Ef](A)(B)} \\
& = &
\hs[\DM]^{-1} \after \lam{A}{\tr(AB)} \\
& = &
\hs[\DM]^{-1} \after \lam{A}{\tr(BA)} \\
& = &
\hs[\DM]^{-1} \after \tr(B-) \\
& = &
B.
\end{array}\eqno{\QEDbox}$$
\end{proof}

\auxproof{
Like before we like to explicitly describe the inverses of the
Hilbert-Schmidt maps $\hs[\Ef]$ and $\hs[\DM]$ from~(\ref{hsEfDMisoEqn})
via matrix entries.

\begin{lemma}
\label{hsEfDMinvLem}
Assume our Hilbert space $H$ has an orthonormal basis $\ket{1}$,
\ldots, $\ket{n}$.  For a map of convex sets $h\colon
\DM(H)\rightarrow [0,1]$ and a map of effect modules $g\colon\Ef(H)
\rightarrow [0,1]$ we have matrix entries:
$$\begin{array}{rcl}
\hs[\Ef]^{-1}(h)_{jk}
& = &
\left\{\begin{array}{l}
h(\ket{j}\bra{j}) \qquad \mbox{if $k=j$, and otherwise:} \\[.5em]
\frac{1}{2}\Big[h\Big(\frac{(\ket{j}+\ket{k})(\bra{j}+\bra{k})}{2}\Big)
  - h\Big(\frac{(\ket{j}-\ket{k})(\bra{j}-\bra{k})}{2}\Big) \; + \\
\quad ih\Big(\frac{(i\ket{j}+\ket{k})(-i\bra{j}+\bra{k})}{2}\Big) -
   ih\Big(\frac{(i\ket{j}-\ket{k})(-i\bra{j}-\bra{k})}{2}\Big)\Big]
\end{array}\right.
\\[3em]
\hs[\DM]^{-1}(g)_{jk}
& = &
f\Big(\frac{(\ket{j}+\ket{k})(\bra{j}+\bra{k})}{4}\Big)
  - f\Big(\frac{(\ket{j}-\ket{k})(\bra{j}-\bra{k})}{4}\Big) \; + \\
& & \quad if\Big(\frac{(i\ket{j}+\ket{k})(-i\bra{j}+\bra{k})}{4}\Big) -
   if\Big(\frac{(i\ket{j}-\ket{k})(-i\bra{j}-\bra{k})}{4}\Big).
\end{array}$$
\end{lemma}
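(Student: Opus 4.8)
The plan is to reduce both displayed formulas to the closed form for $\hs[\Pos]^{-1}$ already obtained in Lemma~\ref{hsPosinvLem}, and then to unravel the two extension constructions used in the proof of Theorem~\ref{ConvEModAdjMapThm}. Recall from that proof that the inverse Hilbert--Schmidt maps factor as $\hs[\Ef]^{-1}(h) = \hs[\Pos]^{-1}(h')$ and $\hs[\DM]^{-1}(g) = \hs[\Pos]^{-1}(g')$, where $h'\colon \Pos(H)\to\Rnn$ is defined by $h'(0)=0$ and $h'(B) = \tr(B)\cdot h\big(B/\tr(B)\big)$ for $B\neq 0$, and $g'\colon\Pos(H)\to\Rnn$ is defined by $g'(B) = n\cdot g\big(\tfrac1n B\big)$ for any $n\in\NNO$ with $\tfrac1n B\in\Ef(H)$ (well-definedness of both was checked there). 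Lemma~\ref{hsPosinvLem} expresses $\hs[\Pos]^{-1}(\varphi)_{jk}$ as $\varphi(U_p)-\varphi(U_n)+i\varphi(V_p)-i\varphi(V_n)$, where
$$\begin{array}{ll}
U_p = \tfrac{(\ket{j}+\ket{k})(\bra{j}+\bra{k})}{4}, &
U_n = \tfrac{(\ket{j}-\ket{k})(\bra{j}-\bra{k})}{4}, \\
V_p = \tfrac{(i\ket{j}+\ket{k})(-i\bra{j}+\bra{k})}{4}, &
V_n = \tfrac{(i\ket{j}-\ket{k})(-i\bra{j}-\bra{k})}{4}
\end{array}$$
are the positive and negative parts of $\tfrac{\ket{j}\bra{k}+\ket{k}\bra{j}}{2}$ and of $\tfrac{i\ket{j}\bra{k}-i\ket{k}\bra{j}}{2}$. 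So the whole proof amounts to evaluating $h'$, resp.\ $g'$, on these four rank-one positive operators.

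The remaining work is purely trace bookkeeping. Since the $\ket{j}$ are orthonormal, for $j\neq k$ each of $U_p,U_n,V_p,V_n$ has the form $\tfrac14\ket{v}\bra{v}$ with $\|v\|^{2}=2$, hence trace $\tfrac12$ and spectrum $\{0,\tfrac12\}$, so each lies in $\Ef(H)$ by Lemma~\ref{EfAltLem}. For $\hs[\DM]^{-1}$ this is all that is needed: taking $n=1$ in the definition of $g'$ gives $g'(U_p)=g(U_p)$, and likewise for $U_n,V_p,V_n$, so Lemma~\ref{hsPosinvLem} yields the stated second formula verbatim (the ``$f$'' in its displayed statement being $g$). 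I would also note that for $j=k$ the four operators degenerate, $U_p=\ket{j}\bra{j}$, $U_n=0$, $V_p=V_n=\tfrac12\ket{j}\bra{j}$, so the $V$-terms cancel and the formula still collapses correctly to $g(\ket{j}\bra{j})$; this is why the $\hs[\DM]^{-1}$ formula requires no case split.

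For $\hs[\Ef]^{-1}$ the subtlety, and the one place forcing a case distinction, is that $h$ is defined only on trace-one operators. Using $h'(B)=\tr(B)\cdot h\big(B/\tr(B)\big)$: when $j\neq k$ one gets $h'(U_p)=\tfrac12\,h\big(\tfrac{(\ket{j}+\ket{k})(\bra{j}+\bra{k})}{2}\big)$, whose argument is now a genuine density operator of trace $1$, and analogously for $U_n,V_p,V_n$; factoring out the common $\tfrac12$ reproduces the displayed off-diagonal case. When $j=k$ one cannot rescale by $\tfrac12$ (doing so would feed $h$ the trace-two operator $2\ket{j}\bra{j}$): here $U_p=\ket{j}\bra{j}$ has trace $1$, so $h'(U_p)=h(\ket{j}\bra{j})$, while $h'(U_n)=h'(0)=0$ and $V_p=V_n$ gives two equal cancelling contributions, leaving $\hs[\Ef]^{-1}(h)_{jj}=h(\ket{j}\bra{j})$, the first case. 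If one prefers not to invoke the auxiliary Lemma~\ref{hsPosinvLem}, the same identities follow directly from $A_{jk}=\tr(A\ket{k}\bra{j})$ by writing $\ket{k}\bra{j}=\tfrac{\ket{j}\bra{k}+\ket{k}\bra{j}}{2}+i\cdot\tfrac{i\ket{j}\bra{k}-i\ket{k}\bra{j}}{2}$, taking traces against the positive operator $A=\hs[\Pos]^{-1}(h')$ (resp.\ $\hs[\Pos]^{-1}(g')$), and splitting each self-adjoint summand into its positive and negative parts, which are exactly the $U_p,U_n,V_p,V_n$ above. The anticipated ``hard part'' is just this trace/rescaling bookkeeping together with the $j=k$ degeneracy; conceptually nothing goes beyond Theorem~\ref{ConvEModAdjMapThm} and Lemma~\ref{hsPosinvLem}.
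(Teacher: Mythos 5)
Your proof is correct and follows essentially the same route as the paper's: both reduce the two formulas to the closed form for $\hs[\Pos]^{-1}$ from Lemma~\ref{hsPosinvLem} via the extensions $h'(B)=\tr(B)\,h(B/\tr(B))$ and $g'(B)=n\,g(\tfrac1n B)$ from the proof of Theorem~\ref{ConvEModAdjMapThm}, compute the traces of the four rank-one operators (each $\tfrac12$ for $j\neq k$), and treat the diagonal case of $\hs[\Ef]^{-1}$ separately via $U_n=0$ and $V_p=V_n$. Your justification that the four operators lie in $\Ef(H)$ by exhibiting their spectrum $\{0,\tfrac12\}$ is in fact cleaner than the paper's comparison with $\tfrac{\ket{v}\bra{v}}{\sqrt2}$ (whose claimed idempotence contains a normalisation slip), and you correctly flag the $f$/$g$ typo in the second displayed formula.
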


\begin{proof}
Omitting the zero case, the proof of Theorem~\ref{ConvEModAdjMapThm}
gives:
$$\begin{array}{rcl}
\hs[\Ef]^{-1}(h)_{jk}
& = &
\hs[\Pos]^{-1}\Big(\lamin{B}{\Pos(H)}{\tr(B)}
   {h\big(\frac{B}{\tr(B)}\big)}\Big)_{jk}
\end{array}$$

\noindent Lemma~\ref{hsPosinvLem} provides a formulation of the
inverse $\hs[\Pos]^{-1}$ on the right, via a sum of four function
applications involving four positive operators.  Each of these
operators has trace $\frac{1}{2}$ if $j\neq k$. For instance:
$$\begin{array}{rcl}
\tr\Big(\frac{(\ket{j}+\ket{k})(\bra{j}+\bra{k})}{4}\Big)
& = &
\frac{1}{4}\tr\big(\ket{j}\bra{j} + \ket{j}\bra{k} + \ket{k}\bra{j} +
   \ket{k}\bra{k}\big) \\
& = &
\frac{1}{4}\Big(\tr(\inprod{j}{j}) + \tr(\inprod{k}{j}) + 
  \tr(\inprod{j}{k}) + \tr(\inprod{k}{k})\Big) \\
& = &
\left\{\begin{array}{ll}
1 \quad & \mbox{if }j=k \\
\frac{1}{2} & \mbox{otherwise}
\end{array}\right.
\end{array}$$

\noindent The case $j=k$ needs to be handled separately.

First we compute the three remaining traces:
$$\begin{array}{rcl}
\tr\Big(\frac{(\ket{j}-\ket{k})(\bra{j}-\bra{k})}{4}\Big)
& = &
\frac{1}{4}\tr\big(\ket{j}\bra{j} - \ket{j}\bra{k} - \ket{k}\bra{j} +
   \ket{k}\bra{k}\big) \\
& = &
\frac{1}{4}\Big(\tr(\inprod{j}{j}) - \tr(\inprod{k}{j}) -
  \tr(\inprod{j}{k}) + \tr(\inprod{k}{k})\Big) \\
& = &
\left\{\begin{array}{ll}
0 \quad & \mbox{if }j=k \\
\frac{1}{2} & \mbox{otherwise}
\end{array}\right. \\
\tr\Big(\frac{(i\ket{j}+\ket{k})(-i\bra{j}+\bra{k})}{4}\Big)
& = &
\frac{1}{4}\tr\big(\ket{j}\bra{j} + i\ket{j}\bra{k} - i\ket{k}\bra{j} +
   \ket{k}\bra{k}\big) \\
& = &
\frac{1}{4}\Big(\tr(\inprod{j}{j}) + i\tr(\inprod{k}{j}) -
  i\tr(\inprod{j}{k}) + \tr(\inprod{k}{k})\Big) \\
& = &
\frac{1}{2} \\
\tr\Big(\frac{(i\ket{j}-\ket{k})(-i\bra{j}-\bra{k})}{4}\Big)
& = &
\frac{1}{4}\tr\big(\ket{j}\bra{j} - i\ket{j}\bra{k} + i\ket{k}\bra{j} +
   \ket{k}\bra{k}\big) \\
& = &
\frac{1}{4}\Big(\tr(\inprod{j}{j}) - i\tr(\inprod{k}{j}) +
  i\tr(\inprod{j}{k}) + \tr(\inprod{k}{k})\Big) \\
& = &
\frac{1}{2}.
\end{array}$$

\noindent Thus:
$$\begin{array}{rcl}
\hs[\Ef]^{-1}(h)_{jj}
& = &
h\Big(\frac{(\ket{j}+\ket{j})(\bra{j}+\bra{j})}{4}\Big)
  - 0 \; + \\
& & \quad \frac{1}{2}ih\Big(\frac{(i\ket{j}+\ket{j})
      (-i\bra{j}+\bra{j})}{2}\Big) -
   \frac{1}{2}ih\Big(\frac{(i\ket{j}-\ket{j})
      (-i\bra{j}-\bra{j})}{2}\Big) \\
& = &
h\Big(\frac{4\ket{j}\bra{j}}{4}\Big) +
   \frac{1}{2}ih\Big(\frac{(i+1)(1-i)\ket{j}\bra{j}}{2}\Big) -
   \frac{1}{2}ih\Big(\frac{(i-1)(-i-1)\ket{j}\bra{j}}{2}\Big) \\
& = &
h(\ket{j}\bra{j}) +
   \frac{1}{2}ih\Big(\frac{2\ket{j}\bra{j}}{2}\Big) -
   \frac{1}{2}ih\Big(\frac{2\ket{j}\bra{j}}{2}\Big) \\
& = &
h(\ket{j}\bra{j}) +
   \frac{1}{2}ih(\ket{j}\bra{j}) - \frac{1}{2}ih(\ket{j}\bra{j}) \\
& = &
h(\ket{j}\bra{j}).
\end{array}$$

For the second equation in the lemma, involving $\hs[\DM]^{-1}$,
we can apply Lemma~\ref{hsPosinvLem} directly, since the operators
used there are all effects. This can be seen as follows. Clearly
we have
$$\begin{array}{rcl}
\frac{(\ket{j}+\ket{k})(\bra{j}+\bra{k})}{4}
& \leq &
\frac{(\ket{j}+\ket{k})(\bra{j}+\bra{k})}{\sqrt{2}}.
\end{array}$$

\noindent Now we are done since the operator on the right is a
projection (which can be verified easily) and thus an effect.
The same approach works in the other three cases. \QED

Write $V = \frac{(\ket{j}+\ket{k})(\bra{j}+\bra{k})}{\sqrt{2}}$. Then:
$$\begin{array}{rcl}
VV
& = &
\frac{1}{2}(V\ket{j}+V\ket{k})(\bra{j}+\bra{k}) \\
& = &
\frac{1}{2}\Big((\ket{j}+\ket{k}) + (\ket{j}+\ket{k})\Big)
   (\bra{j}+\bra{k}) \\
& = &
\frac{1}{2}2(\ket{j}+\ket{k})(\bra{j}+\bra{k}) \\
& = &
V.
\end{array}$$

Similarly for $V =
\frac{(\ket{j}-\ket{k})(\bra{j}-\bra{k})}{\sqrt{2}}$. Then:
$$\begin{array}{rcl}
VV
& = &
\frac{1}{2}(V\ket{j}-V\ket{k})(\bra{j}-\bra{k}) \\
& = &
\frac{1}{2}\Big((\ket{j}-\ket{k}) + (\ket{j}-\ket{k})\Big)
   (\bra{j}-\bra{k}) \\
& = &
\frac{1}{2}2(\ket{j}-\ket{k})(\bra{j}-\bra{k}) \\
& = &
V.
\end{array}$$

And for $V =
\frac{(i\ket{j}+\ket{k})(-i\bra{j}+\bra{k})}{\sqrt{2}}$. Then:
$$\begin{array}{rcl}
VV
& = &
\frac{1}{2}(iV\ket{j}+V\ket{k})(-i\bra{j}+\bra{k}) \\
& = &
\frac{1}{2}\Big((i\ket{j}+\ket{k}) + (i\ket{j}+\ket{k})\Big)
   (-i\bra{j}+\bra{k}) \\
& = &
\frac{1}{2}2(i\ket{j}+\ket{k})(-i\bra{j}+\bra{k}) \\
& = &
V.
\end{array}$$ 

Finally for $V =
\frac{(i\ket{j}-\ket{k})(-i\bra{j}-\bra{k})}{\sqrt{2}}$. Then:
$$\begin{array}{rcl}
VV
& = &
\frac{1}{2}(iV\ket{j}-V\ket{k})(-i\bra{j}-\bra{k}) \\
& = &
\frac{1}{2}\Big((i\ket{j}-\ket{k}) + (i\ket{j}-\ket{k})\Big)
   (-i\bra{j}-\bra{k}) \\
& = &
\frac{1}{2}2(i\ket{j}-\ket{k})(-i\bra{j}-\bra{k}) \\
& = &
V.
\end{array}$$ 
\end{proof}
}

\begin{remark}
\label{WPRem}
In~\cite{dHondtP06a} a quantum weaket precondition calculus is
developed using effects on a finite-dimensional Hilbert space as
predicates and density matrices as states. The underlying duality can
be made explicit in the current setting. Programs act on states and
are thus modeled as ``state transformer'' maps $\DM(H) \rightarrow
\DM(K)$. Here we ignore complete positivity aspects and simply
consider these state transformers as affine maps, \textit{i.e.}~as
maps in the category \Conv. Corresponding to such programs there are
``predicate transformers'' $\Ef(K) \rightarrow \Ef(H)$ going in the
opposite direction. Naturally we consider them to be maps of effect
modules. The (dual) correspondence between state transformers and
predicate transformers can then be derived using the adjunction $\Conv
\leftrightarrows \EMod\op$ from Proposition~\ref{ConvEModAdjProp} and
the isomorphisms~(\ref{hsEfDMisoEqn}) from
Theorem~\ref{ConvEModAdjMapThm}:
$$\begin{prooftree}
\begin{prooftree}
\xymatrix{\DM(H)\ar[r] & \DM(K)}\rlap{\hspace*{6em}in \Conv}
\Justifies
\xymatrix{\DM(H)\ar[r] & \EMod(\Ef(K),[0,1])}
\using{\rlap{\em\small(\ref{hsEfDMisoEqn})}}
\end{prooftree}
\Justifies
\begin{prooftree}
\xymatrix{\Ef(K)\ar[r] & \Conv(\DM(H),[0,1])}
\Justifies
\xymatrix{\Ef(K)\ar[r] & \Ef(H)}\rlap{\hspace*{6.8em}in \EMod}
\using{\rlap{\em\small(\ref{hsEfDMisoEqn})}}
\end{prooftree}
\using{\rlap{\em\small(Prop.~\ref{ConvEModAdjProp})}}
\end{prooftree}\qquad\qquad$$

\noindent Such correspondences form the basis of Dijkstra's seminal
work on program correctness, see \textit{e.g.}~\cite{DijkstraS90}.
For a state transformer $f\colon \DM(H)\rightarrow \DM(K)$ the
corresponding predicate transformer $\weakprec(f, -) \colon \Ef(K)
\rightarrow \Ef(H)$ is the ``weakest precondition operation''. It is
given by:
$$\begin{array}{rcl}
\weakprec(f,A)
& = &
\hs[\Ef]^{-1}\Big(\lamin{B}{\DM(H)}{\hs[\DM]\big(f(B)\big)(A)}\Big) \\
& = &
\hs[\Ef]^{-1}\Big(\lamin{B}{\DM(H)}{\tr\big(f(B)A\big)}\Big),
\end{array}$$

\noindent where we use the isomorphisms $\hs[\DM]\colon \DM(K)
\conglongrightarrow \EMod(\Ef(K),[0,1])$ and $\hs[\Ef]^{-1} \colon
\Conv(\DM(H),[0,1]) \conglongrightarrow \Ef(H)$
from~(\ref{hsEfDMisoEqn}).  By elaborating the formulas for the matrix
entries $\weakprec(f,A)_{jk}$, the weakest precondition can be
computed explicitly (for instance, by a computer algebra tool).
\end{remark}

The dual adjunction $\Conv \leftrightarrows \EMod\op$ from
Proposition~\ref{ConvEModAdjProp} can be restricted to a (dual)
equivalence of categories, giving a probabilistic version of Gelfand
duality, see~\cite{JacobsM12b}. One obtains an equivalence $\CCHobs
\simeq \BEMod\op$ between `observable' convex compact Hausdorff spaces
and Banach effect modules. The latter are suitably complete with
respect to a definable norm. The map of adjunctions from
Theorem~\ref{ConvEModAdjMapThm} then restricts to a map of
equivalences:
$$\xymatrix{
\FdHilbUn\ar@/^1.2ex/[rr]^-{(-)^{\dag}}\ar[d]_{\DM} & \simeq & 
   \FdHilbUn\rlap{$\op$}\ar@/^1.2ex/[ll]^-{(-)^{\dag}}\ar[d]^{\Ef} \\
\CCHobs\ar@/^1.2ex/[rr]^-{\mathit{Hom}(-,[0,1])} 
   & \simeq & 
   \BEMod\rlap{$\op$}
   \ar@/^1.2ex/[ll]^-{\mathit{Hom}(-,[0,1])}
}$$

\noindent We refer to~\cite{JacobsM12b} for further details. This
equivalence leads to a reformulation of Gleason's
Theorem~\cite{Gleason57}. In original form it says that projections on
a Hilbert space $H$ (of dimension at least 3) correspond to measures:
$$\begin{array}{rcl}
\DM(H) & \cong & \EA\big(\Proj(H),\; [0,1]\big).
\end{array}$$

\noindent In~\cite{JacobsM12b} it is shown that Gleason's theorem
is equivalent to:
$$\begin{array}{rcl}
\Ef(H)
& \cong &
[0,1]\otimes \Proj(H).
\end{array}$$

\noindent This says that effects form the free effect module on
projections. We can now summarise how the whole edifice of operators
on a Hilbert space $H$ can be obtained from its projections
$\Proj(H)$, see Figure~\ref{FreeConstructsFig}.

\begin{figure}
\begin{center}
\begin{tabular}{c||c|c}
\textbf{Operators} & \textbf{Formula} & \textbf{Description} \\
\hline\hline 
effects & 
   $\Ef(H) \cong [0,1]\otimes \Proj(H)$ &
   Gleason's Theorem 
   \vrule height5mm depth3mm width0mm \\
\hline
density matrices &
   $\DM(H) \cong \EMod(\Ef(H),[0,1])$ &
   Theorem~\ref{ConvEModAdjMapThm}
   \vrule height5mm depth3mm width0mm \\
\hline
positive operators &
   $\Pos(H) \cong \mathcal{S}(\DM(H))$ &
   Theorem~\ref{ConvModCatAdjThm}
   \vrule height5mm depth3mm width0mm \\
\hline
self-adjoint operators &
   $\SA(H) \cong \mathcal{R}(\Pos(H))$ &
   Theorem~\ref{AlgCatAdjThm}
   \vrule height5mm depth3mm width0mm \\
\hline
bounded operators &
   $\BL(H) \cong \mathcal{C}(\SA(H))$ &
   Theorem~\ref{AlgCatAdjThm}
   \vrule height5mm depth3mm width0mm
\end{tabular}
\end{center}
\caption{Various operators on a Hilbert space $H$, constructed from
  the projections $\Proj(H)$}
\label{FreeConstructsFig}
\end{figure}

This concludes our overview of the categorical structure of the
various operators on a (finite dimensional) Hilbert space.

\auxproof{
\section{Starting from the effects}

In the previous section we saw that the spaces of operators $\BL(H)$,
$\SA(H)$ and $\Pos(H)$ can be obtained from density matrices $\DM(H)$
via free constructions. Now we show that they can also be obtained
from effects $\Ef(H)$. We do this by transforming
$[0,1]$-(effect)modules into $\Rnn$-modules, via an extension of the
partially defined sum $\ovee$ to a totally defined sum $+$. This
totalisation process will be sketched below; for more details and
applications see~\cite{JacobsM12a}.

We start with effect algebras. First we need to describe what our
totalised effect algebras will look like. For this purpose we define
the category $\BCM$ of barred commutative monoids as follows. The
objects are commutative monoids $M$ that are positive,
\textit{i.e.}~satisfy $a+b=0$ implies $a=b=0$, along with a special
element $u\in M$ called the unit of M, such that $a+b=a+c=u$ implies
$b=c$.  The morphisms between such BCMs are monoid homomorphisms that
preserve the unit (or `bar').

\begin{definition}
\label{TotalisationFunDef}
There is a totalisation functor $\partot\colon\EA\to\BCM$ given on
$E\in EA$ as a quotient of the free commutative monoid on $E$ in:
$$\begin{array}{rcl}
\partot(E)
& = &
(\, \Mlt_{\NNO}(E)/\!\sim, \; [1\cdot 1]_{\sim} \, ),
\end{array}$$

\noindent where $\sim$ is the smallest congruence such that $1\cdot
x+1\cdot y = 1\cdot(x\ojoin y)$ for all $x\bot y$, and $1\cdot 1$ is
the singleton multiset containing $1\in E$. We will usually not write
the square brackets denoting $\sim$ equivalence classes.
\end{definition}

As examples of this totalization process we have $\partot(\{0,1\})
\cong (\NNO,1)$ and $\partot([0,1]) \cong (\Rnn,1)$, and, as we shall
see later on, $\partot(\Ef(H))\cong \Pos(H)$.

Recall that all monoids come equiped with a preorder $\preceq$ given
by $a\preceq b$ if{f} there exists some $c$ such that $a+c=b$. We use
this to define the partialisation operation, opposite to totalisation.

\begin{definition}
\label{PartialisationFunDef}
Define a functor $\totpar\colon\BCM\to\EA$ by:
$$\begin{array}{rcl}
\totpar(M,u)
& = &
\setin{a}{M}{0\preceq a\preceq u}.
\end{array}$$

\noindent On this unit interval $a\ojoin b$ is defined as $a+b$
whenever this sum is below $u$.
\end{definition}

In~\cite{JacobsM12a} it is shown that the functor $\partot$ is a left
adjoint to $\totpar$ and that this adjunction is a coreflection,
\textit{i.e.}~its unit $\idmap\Rightarrow\totpar\partot$ is a natural
isomorphism. The adjunction involves straightforward bijective
correspondences:
$$\begin{prooftree}
{\xymatrix{(\partot(E),[11])\ar[r]^-{f} & (M,u)}}
\Justifies
{\xymatrix{E\ar[r]_-{g} & \totpar(M,u)}}
\end{prooftree}$$

\noindent given by $\overline{f}(x) = f([1x])$ and 
$\overline{g}([\sum_{i}n_{i}x_{i}]) = \sum_{i}n_{i}\cdot g(x_{i})$.

Here we are interested in effect modules rather than just effect
algebras. The functor $\partot$ is strongly symmatric monoidal,
essentially due to the construction of the tensor product. Hence it
forms a functor $\partot\colon\Mon(\EA) \rightarrow \Mon(\BCM)$,
mapping an effect monoid to a internal monoid in $\BCM$---which could
be called a ``barred semiring'', with $\Rnn$ as our main example. In
fact we get a coreflection between $\Mon(\EA)$ and $\Mon(\BCM)$ in
this way. We continue to write $\partot$ and $\totpar$ for this
restricted adjunction.

The same thing works for actions of an effect monoid $S$: if
$E\in\Act_{S}(\EA) = \EMod_{S}$ then we can equip the monoid
$\partot(E)$ with a $\partot(S)$ module structure in a canonical
way. The scalar multiplication is:
\begin{equation}
\label{FreeBModScalarEqn}
\begin{array}{rcl}
\big[\sum_{i}n_{i}s_{i}\big] \scalar \big[\sum_{j}m_{j}e_{j}\big]
& = &
\big[\sum_{i,j} (n_{i}\cdot m_{j})(s_{i}\scalar e_{j})\,\big].
\end{array}
\end{equation}

\noindent Again this leads to a coreflection between the categories of
actions $\Act_{S}(\EA)$ and $\Act_{\partot(S)}(\BCM)$, which we shall
also write as $\partot\dashv\totpar$. For a barred semi\-ring
$R\in\Mon(\BCM)$ we will write $\cat{BMod}_{R}$ instead of
$\Act_{R}(\BCM)$ and just $\cat{BMod}$ for $\cat{BMod}_{\Rnn}$ in the
special case when $R=\Rnn$.

Right now we are interested in the case $S=\unitR$, so we will
describe the category $\cat{BMod}$ of modules over $\partot(S) \cong
\Rnn$ in a bit more detail. The objects of $\cat{BMod}$ are BCMs
$(M,u)$ with a scalar multiplication $\scalar\colon \Rnn \times M\to
M$ such that the following conditions hold.
$$\begin{array}{rclcrcl}
\alpha\scalar (m+n)
& = &
(\alpha\scalar m) + (\alpha\scalar n)
& \quad &
(\alpha+\beta)\scalar m
& = &
(\alpha\scalar m)+(\beta\scalar m)\\
(\alpha\beta)\scalar m
& = &
\alpha\scalar(\beta\scalar m) 
& &
1\scalar m 
& = &
m.
\end{array}$$

\noindent Morphisms in $\cat{BMod}$ are BCM maps that preserve the
scalar multiplication.

If $E\in\cat{EMod}$ then the scalar product on $\partot(E)$ is defined
as follows. For $\alpha\in \Rnn$ let $a\in\NNO$ and $b\in\unitR$ be
its integral and decimal part respectively; then $\alpha \in\Rnn$
corresponds via $\Rnn \cong \partot(\unitR)$ to $[a1 +
  1b]\in\partot(\unitR)$, so that the scalar
multiplication~\eqref{FreeBModScalarEqn} yields:
$$\begin{array}{rcl}
\alpha\scalar(\sum_{i} n_{i}x_{i})
& = &
\sum_{i} (an_{i})x_{i} + \sum_{i} n_i(b\scalar x_{i}).
\end{array}$$

Note that for any $[0,1]$-module $M$ the unit $u$ of $\partot M$ is
strong, in the sense that for each $x\in\partot M$ there is an
$n\in\NNO$ such that $x\preceq nu$. Indeed, if $x =
\sum_{i}n_{i}y_{i}$, then $y_{i}\leq 1\in M$ so that $x \leq
\sum_{i}n_{i}1 = (\sum_{i}n_{i})1 = (\sum_{i}n_{i})u$, where $u =
    [11]\in\partot M$. Because of this property we can restrict the
    $\partot\dashv\totpar$ adjuction to the category $\Bmodu$ of
    $\Rnn$-modules with a strong unit.


\begin{lemma} 
\label{L:BmoduCancel}
If $M\in\Bmodu$ then the cancellation law holds in $M$.
\end{lemma}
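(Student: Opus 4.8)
The plan is to push the cancellation problem down into the ``unit interval'' $\totpar(M,u)$, which is an effect algebra (Definition~\ref{PartialisationFunDef}), and to exploit the fact that \emph{effect algebras are cancellative}: if $p\ovee r$ and $q\ovee r$ are both defined and $p\ovee r=q\ovee r$, then $p=q$. (This is standard, see~\cite{DvurecenskijP00}; concretely it follows from the identity $p^{\perp}=(p\ovee r)^{\perp}\ovee r$, valid whenever $p\orthogonal r$, which shows that $p$ is completely determined by $p\ovee r$ and $r$. Note this is genuinely stronger than the bare $\BCM$ axiom ``$a+b=a+c=u\Rightarrow b=c$'', which only cancels at the unit, so effect-algebra structure really is doing work here.) The strong-unit hypothesis is what lets us bring arbitrary elements of $M$ into $\totpar(M,u)$ after rescaling; without it the statement fails (e.g.\ $\Rnn\cup\{\infty\}$ with unit $1$).

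\textbf{Key steps.} First I would take $a,b,c\in M$ with $a+c=b+c$ and set $z:=a+c=b+c$. Since $u$ is a strong unit, there is $N\in\NNO$, which we may assume $N\geq 1$ (replacing $N$ by $N+1$ if necessary, using $Nu\preceq (N+1)u$), together with $w\in M$ such that $z+w=Nu$. Second, using that scalar multiplication is additive and that $(\tfrac1N\cdot N)\scalar x=1\scalar x=x$, I would verify that $\tfrac1N a,\tfrac1N b,\tfrac1N c$ and $\tfrac1N z$ all lie below $u$: from $z+w=Nu$ we get $\tfrac1N z+\tfrac1N w=(\tfrac1N N)\scalar u=u$, so $\tfrac1N z\preceq u$, and since $a,b,c\preceq z$ and scaling preserves $\preceq$, also $\tfrac1N a,\tfrac1N b,\tfrac1N c\preceq\tfrac1N z\preceq u$. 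Hence these elements live in the effect algebra $E:=\totpar(M,u)$, where $\ovee$ is the restriction of $+$ to pairs summing below $u$. Third, since $\tfrac1N a+\tfrac1N c=\tfrac1N z\preceq u$ we have $\tfrac1N a\ovee\tfrac1N c=\tfrac1N z=\tfrac1N b\ovee\tfrac1N c$ in $E$, so cancellativity of effect algebras gives $\tfrac1N a=\tfrac1N b$. Finally, applying the scalar $N$ and using $(N\cdot\tfrac1N)\scalar x=x$ yields $a=N\scalar(\tfrac1N a)=N\scalar(\tfrac1N b)=b$.

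\textbf{Main obstacle.} The proof is short once the right picture is fixed; the only real care needed is the bookkeeping in the second step---checking that every rescaled element genuinely lands in $\totpar(M,u)$ (this is exactly where the strong unit is indispensable) and that composing the scalars $\tfrac1N$ and $N$ returns the original element, which is immediate from the module axioms $(\alpha\beta)\scalar x=\alpha\scalar(\beta\scalar x)$ and $1\scalar x=x$ listed before the lemma. If one wished to avoid citing cancellativity of effect algebras, one could alternatively reduce first to cancelling the unit $u$ (from $a+c=b+c$ and $c+d=nu$ deduce $a+nu=b+nu$ and peel off one $u$ at a time), but this still needs more than the raw $\BCM$ axiom, and the rescaling argument above subsumes it.
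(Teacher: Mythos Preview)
Your proof is correct and follows essentially the same route as the paper: rescale by $\tfrac1N$ (using the strong unit) so that all elements land in the interval below $u$, cancel there, then scale back up. The paper phrases the middle step as ``the cancellation law holds below the unit in a BCM'' rather than invoking effect-algebra cancellativity, but since $\totpar(M,u)$ is precisely the interval $[0,u]$ with $\ovee$ the restriction of $+$, these are the same statement.

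One small correction to your side-remark: cancellation below $u$ is \emph{not} genuinely stronger than the bare $\BCM$ axiom. If $p+r=q+r\preceq u$, choose $d$ with $(p+r)+d=u$; then $(r+d)+p=u=(r+d)+q$, and the $\BCM$ axiom gives $p=q$ directly. So the effect-algebra detour, while perfectly valid, is not doing extra work here---the $\BCM$ axiom already suffices, which is how the paper argues.
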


\begin{proof}
Let $x,y,z\in M$ and suppose $x+y=x+z$. Since $u$ is a strong unit we
can find an $n$ such that $x+y\preceq nu$. Therefore:
$$\begin{array}{rcccl}
\frac{1}{n}\scalar x+ \frac{1}{n} \scalar y
& = &
\frac{1}{n}\scalar x+\frac{1}{n}\scalar z
& \preceq &
u,
\end{array}$$

\noindent and since the cancellation law holds below the unit in a BCM
we can conclude $\frac{1}{n}\scalar y =\frac{1}{n}\scalar z$. But we
have $y = \sum_{i=1}^n \frac{1}{n}\scalar y$ and therefore $y=z$. \QED
\end{proof}

An immediate consequence is that the preorder $\preceq$ is a partial
order so we shall write $\leq$ instead of $\preceq$. Further, if $x
\leq y$, there is a unique element $z$ with $x+z = y$. We shall write
$y-x$ for this $z$.

\auxproof{
Suppose $x \preceq y$ and $y\preceq x$, say via $x+u = y$ and
$y + v = x$. Then $x+(u+v) = y+v = x = x+0$, so $u+v=0$ by
cancellation, and thus $u=v=0$ by positivity. Hence $x=y$.
}

\begin{lemma} 
\label{L:[0,1]equiv1}
The coreflection $\partot\dashv\totpar$ between $\cat{EMod}$ and
$\Bmodu$ is an equivalence of categories.
\end{lemma}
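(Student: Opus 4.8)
The plan is to invoke what is already known from~\cite{JacobsM12a}: the restricted adjunction $\partot\dashv\totpar$ between $\cat{EMod}$ and $\Bmodu$ is a coreflection, i.e.\ its unit $\idmap\Rightarrow\totpar\partot$ is a natural isomorphism. Hence it suffices to show that the counit $\varepsilon_{M}\colon\partot(\totpar(M))\to M$ is an isomorphism for every $M\in\Bmodu$; and since $\varepsilon_{M}$ is a morphism of $\Bmodu$ by construction, it is enough to prove it bijective. Unwinding the bijective correspondence recorded above, $\varepsilon_{M}$ sends a class $[\sum_{i}n_{i}x_{i}]$, with $x_{i}\in\totpar(M)\subseteq M$ and $n_{i}\in\NNO$, to the honest sum $\sum_{i}n_{i}x_{i}$ computed in $M$.

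Before doing that I would record two auxiliary facts. Using cancellation (Lemma~\ref{L:BmoduCancel}) one sees that scalar multiplication in $M$ is order preserving, so $x\le nu$ implies $\tfrac1n x\le\tfrac1n(nu)=u$, i.e.\ $\tfrac1n x\in\totpar(M)$. Secondly, there is a ``division'' identity inside $\partot(\totpar(M))$: for $a\in\totpar(M)$ and $m\ge1$ with $ma\le u$, all partial sums $\tfrac km a$ (for $k\le m$) lie below $u$, so $a$ is the $\ovee$-sum in $\totpar(M)$ of $m$ copies of $\tfrac1m a$, and $m$ applications of the defining relation $1\cdot x+1\cdot y=1\cdot(x\ovee y)$ give $[1\cdot a]=m\cdot[1\cdot\tfrac1m a]$; in particular $[1\cdot x]=N\cdot[1\cdot\tfrac1N x]$ for every $x\in\totpar(M)$ and every $N\ge1$. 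Surjectivity of $\varepsilon_{M}$ is then immediate: given $x\in M$, pick $n$ with $x\le nu$ (possible because the unit of $M$ is strong); then $x=n\cdot\tfrac1n x=\varepsilon_{M}\big([n\cdot\tfrac1n x]\big)$.

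For injectivity I would exhibit an explicit two-sided inverse $\theta_{M}\colon M\to\partot(\totpar(M))$ by $\theta_{M}(x)=[n\cdot\tfrac1n x]$ for any $n$ with $x\le nu$; independence of the choice of $n$ follows by passing to a common multiple and applying the division identity. One checks $\varepsilon_{M}\after\theta_{M}=\id{M}$ directly, since $n\cdot\tfrac1n x=x$ in $M$. The substantive point is $\theta_{M}\after\varepsilon_{M}=\idmap$: given $\xi=[\sum_{i}n_{i}x_{i}]$, set $N=\sum_{i}n_{i}$ and $s=\sum_{i}n_{i}x_{i}$, so $s\le Nu$; then $\tfrac1N s=\sum_{i}n_{i}\cdot\tfrac1N x_{i}$ in $M$, and since every partial sum of the family ``$n_{i}$ copies of $\tfrac1N x_{i}$'' stays below $u$, the defining relation yields $[1\cdot\tfrac1N s]=\big[\sum_{i}n_{i}\cdot(1\cdot\tfrac1N x_{i})\big]$; combining this with $\theta_{M}(s)=[N\cdot\tfrac1N s]=N\cdot[1\cdot\tfrac1N s]$ and with $[1\cdot x_{i}]=N\cdot[1\cdot\tfrac1N x_{i}]$ gives $\theta_{M}(s)=\big[\sum_{i}n_{i}x_{i}\big]=\xi$. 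Thus $\varepsilon_{M}$ is a bijective morphism of $\Bmodu$, hence an isomorphism, and together with the already-known invertibility of the unit this shows $\partot\dashv\totpar$ is an equivalence of categories.

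I expect the main obstacle to be the identity $\theta_{M}\after\varepsilon_{M}=\idmap$: there is no conceptual difficulty, but one must carefully juggle formal multisets and repeatedly apply the single generating relation of $\partot$, all the while checking the side condition that the relevant partial sums remain below the unit $u$ --- and it is precisely the strong-unit hypothesis defining $\Bmodu$ that guarantees this side condition can always be met.
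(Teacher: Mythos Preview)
Your proof is correct and follows essentially the same approach as the paper's: both reduce to showing the counit $\varepsilon_M$ is bijective, and both hinge on the same key idea of scaling by $\tfrac{1}{N}$ (using the strong-unit hypothesis) to bring a formal sum below $u$, applying the defining relation of $\partot$ there, and then scaling back up. The only packaging difference is that the paper proves injectivity directly---assuming $\varepsilon_M(\sum_i n_i x_i)=\varepsilon_M(\sum_j k_j y_j)$, it takes $N=\sum_i n_i+\sum_j k_j$, observes that both scaled-down sums $\ovee_i\,n_i(\tfrac1N x_i)$ and $\ovee_j\,k_j(\tfrac1N y_j)$ are defined and equal in $\totpar M$, hence equal in $\partot\totpar M$, and then multiplies back by $N$---whereas you construct the explicit inverse $\theta_M$ and verify it two-sidedly. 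Your route costs you the extra well-definedness check for $\theta_M$, but gains an explicit formula; the paper's route is marginally shorter. Substantively the arguments coincide.
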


\begin{proof}
We only need to show that the counit $\varepsilon \colon
\partot\totpar \Rightarrow\idmap$ of the adjunction is an isomorphism.
So let $M\in\Bmodu$, a typical element of $\partot\totpar M$ is an
equivalence class of formal sums like $\sum n_ix_i$ where $n_i\in
\mathbb{N}$ and $M\ni x_i\leq u$. The counit $\eps$ sends the class
represented by this formal sum to its interpretation as an actual sum
in $M$.

To show that $\eps$ is surjective suppose $m\in M$. We can find a
natural number $n$ such that $m\leq nu$ so that $\frac{1}{n}\scalar m
\leq u$. This gives us in $M$:
$$\begin{array}{rcccl}
m
& = &
n\cdot (\frac{1}{n}\scalar m)
& = &
\eps(n(\frac{1}{n}\scalar m)).
\end{array}$$

\noindent To prove injectivity suppose that $\eps(\sum_{i} n_ix_i) =
\eps(\sum_{j} k_jy_j)$. Define $N=\sum n_i+\sum k_j\in\NNO$ then we
also have in $M$:
$$\begin{array}{rcl}
\sum n_i\cdot (\frac{1}{N}\scalar x_i)
& = &
\eps (\sum n_i(\frac{1}{N}\scalar x_i)) \\
& = &
\eps(\frac{1}{N}\scalar(\sum n_ix_i)) \\
& = &
\frac{1}{N}\scalar\eps(\sum n_ix_i) \\
& = &
\frac{1}{N}\scalar\eps(\sum k_jy_j) \\
& = &
\sum k_j(\frac{1}{N}\scalar y_j)
\end{array}$$

\noindent Because $N$ is big enough the terms $\ovee_{i}\, n_i\cdot
(\frac{1}{N}\scalar x_i)$ and $\ovee_{j}\, k_j\cdot(\frac{1}{N}\scalar
y_j)$ are both defined in the effect module $\totpar M$ and by the
previous calculation they are equal. This means that $\sum_{i}
n_i(\frac{1}{N}\scalar x_i)$ and $\sum_{j} k_j(\frac{1}{N}\scalar y_k)$
represent equal elements of $\partot\totpar M$ and therefore the
equation:
$$\begin{array}{rccccccl}
\sum_{i} n_ix_i
& = &
N\scalar (\sum_{i} n_i(\frac{1}{N}\scalar x_i)) 
& = &
N\scalar (\sum_{j} k_j(\frac{1}{N}\scalar y_j))
& = &
\sum_{j} k_jy_j
\end{array}$$

\noindent holds in $\partot\totpar M$. \QED
\end{proof}

We will take this equivalence one step further. Define a category
\poVectu as follows; the objects are partially ordered vector spaces
over $\reals$ with a strong order unit $u$, \textit{i.e.}~a positive
element $u\geq 0$ such that for any $x\in V$ there is a natural number
$n$ with $x\leq nu$. The morphisms in \poVectu are linear maps that
preserve the order and the unit.


\begin{theorem}
\label{T:[0,1]equiv2}
The category $\cat{EMod}$ of effect modules is equivalent to
the category $\poVectu$ of ordered vector spaces with strong unit.
\end{theorem}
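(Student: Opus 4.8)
The plan is to reduce everything to the equivalence $\EMod \simeq \Bmodu$ already obtained in Lemma~\ref{L:[0,1]equiv1}, and then to set up an equivalence $\Bmodu \simeq \poVectu$ by the classical ``Grothendieck group with positive cone'' construction --- the very construction used for the functor $\mathcal{R}$ in the proof of Theorem~\ref{AlgCatAdjThm}. First I would define a functor $\mathcal{R}\colon \Bmodu \to \poVectu$ by $\mathcal{R}(M) = (M\times M)/\!\sim$ exactly as in Theorem~\ref{AlgCatAdjThm}, so that $\mathcal{R}(M)$ is a vector space over $\R$, with unit $[u,0]$ and with the order declared by $[x_{1},x_{2}]\geq 0 \Leftrightarrow x_{2}\leq x_{1}$ in $M$. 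The point that makes this land in $\poVectu$ rather than merely in $\Vect[\R]$ is that, by the cancellation law of Lemma~\ref{L:BmoduCancel}, the canonical map $\eta\colon M\to\mathcal{R}(M)$, $x\mapsto[x,0]$, is injective, so $M$ sits inside $\mathcal{R}(M)$ as its cone of positive elements; antisymmetry of $\leq$ on $\mathcal{R}(M)$ then follows from antisymmetry of $\leq$ on $M$, and $[u,0]$ is a strong order unit because $[x_{1},x_{2}]\leq n[u,0]$ holds as soon as $x_{1}\leq nu+x_{2}$, which is guaranteed by the $n$ witnessing $x_{1}\leq nu$ (strongness of $u$ in $M$). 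On morphisms one puts $\mathcal{R}(f)([x_{1},x_{2}]) = [f(x_{1}),f(x_{2})]$, which is $\R$-linear, unital and positive.

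In the other direction I would define $(-)_{+}\colon \poVectu \to \Bmodu$ by sending $V$ to its positive cone $V_{+} = \{x\in V\mid x\geq 0\}$, with the addition from $V$ and the restricted scalar action $\Rnn\times V_{+}\to V_{+}$, and with unit $u$. This is an $\Rnn$-module; it is positive as a monoid and satisfies the unit-cancellation condition because $V$ is a group; and $u$ is a strong unit of $V_{+}$ precisely because $u$ is a strong order unit of $V$ (if $x\leq nu$ in $V$ then $nu-x\in V_{+}$). A unital positive linear map restricts to the cones, so $(-)_{+}$ is functorial.

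Then I would exhibit the two natural isomorphisms. For $M\in\Bmodu$ the map $\eta^{-1}\colon (\mathcal{R}(M))_{+} \to M$, $[x,0]\mapsto x$, is an isomorphism of $\Rnn$-modules preserving the unit and --- by the definition of the order on $\mathcal{R}(M)$ --- the order; naturality is immediate. For $V\in\poVectu$ the map $\mathcal{R}(V_{+})\to V$, $[x_{1},x_{2}]\mapsto x_{1}-x_{2}$, is well defined and linear (cancellation in $V$), unital, and order-reflecting both ways since $[x_{1},x_{2}]\geq 0 \Leftrightarrow x_{2}\leq x_{1} \Leftrightarrow x_{1}-x_{2}\geq 0$; it is surjective because a strong order unit forces $-nu\leq v\leq nu$ for a suitable $n$, whence $v=(v+nu)-nu$ with $v+nu, nu\in V_{+}$, and injective by cancellation in $V$. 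Composing the resulting equivalence $\Bmodu\simeq\poVectu$ with Lemma~\ref{L:[0,1]equiv1} yields $\EMod\simeq\poVectu$.

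The main obstacle --- really the only non-bookkeeping point --- is making sure the order on $\mathcal{R}(M)$ behaves correctly: one must check that $\{[x,0]\mid x\in M\}$ is a genuine cone (closed under addition and non-negative scalars, which is routine) and, crucially, that it is pointed, i.e.\ $[x_{1},x_{2}]\geq 0$ and $[x_{2},x_{1}]\geq 0$ imply $[x_{1},x_{2}]=0$. This is exactly where injectivity of $\eta$ via Lemma~\ref{L:BmoduCancel} and antisymmetry of $\leq$ on $M$ are used, and it is the reason the equivalence holds for $\Bmodu$ but not for arbitrary $\Rnn$-modules. Everything else is a direct transcription of the commutative-monoid / Grothendieck-group computations already carried out for $\mathcal{R}$ in the proof of Theorem~\ref{AlgCatAdjThm}.
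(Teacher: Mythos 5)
Your proposal is correct and takes essentially the same route as the paper: reduce to $\Bmodu \simeq \poVectu$ via Lemma~\ref{L:[0,1]equiv1}, then pair the positive-cone functor with the (cancellative, hence simplified) Grothendieck-group construction $\mathcal{R}$ from Theorem~\ref{AlgCatAdjThm}, ordered by declaring $[x_1,x_2]\geq 0$ iff $x_2\leq x_1$ and with strong unit $[u,0]$, relying on Lemma~\ref{L:BmoduCancel} exactly where the paper does. The paper dismisses the verification that the two functors form an equivalence as ``easy to see''; your write-up merely supplies those routine details.
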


\begin{proof}
We will prove that $\Bmodu$ is equivalent to $\poVectu$ the result
then follows from lemma~\ref{L:[0,1]equiv1}.

There is a functor $F\colon\poVectu\to\Bmodu$ by taking the positive
cone $F(V) = \setin{x}{V}{x \geq 0}$ of a partially ordered vector
space $V$. This is clearly functorial. In the reverse direction there
is $G\colon\Bmodu\to\poVectu$, which is essentially just the usual
construction of turning a cancellative monoid into a group. Thus,
$G(M) = (M\times M)/\!\!\sim$, where $(x_{1},x_{2}) \sim
(y_{1},y_{2})$ iff $x_{1}+y_{2} = y_{1}+x_{2}$. We can use a slightly
simplified version of the functor $\mathcal{R}\colon \Mod[\Rnn]
\rightarrow \Vect[\R]$ in Theorem~\ref{AlgCatAdjThm}, since
cancellation holds in the current context
(Lemma~\ref{L:BmoduCancel}). We order $G(M)$ via $[x_{1},x_{2}] \leq
[y_{1},y_{2}]$ iff $x_{1}+y_{2} \leq y_{1}+x_{2}$. As strong unit we
take $\eta(u) = [u,0]\in G(M)$.

It is easy to see that $F,G$ give an equivalence of categories. \QED

\auxproof{
Functoriality of $F\colon\poVectu\to\Bmodu$ is trivial: each $f\colon
V \rightarrow W$ restricts to $\setin{x}{V}{x \geq 0} \rightarrow
\setin{y}{W}{y \geq 0}$, since $x \geq 0$ implies $f(x) \geq f(0) =
0$. Clearly $F(V)$ is an $\Rnn$-module, since $r\geq 0$ and $x\geq 0$
implies $rx\geq 0$. We have a strong order unit $u\in F(V)$ by
definition.

Similarly, $G$ is functorial: $g\colon M\rightarrow K$ gives
$G(g)\colon G(M) \rightarrow G(K)$ by $[x_{1},x_{2}] \longmapsto
[f(x_{1}), f(x_{2})]$. This is clearly well-defined by linearity
of $f$. This $G(f)$ preserves the unit:
$$G(f)(\eta(u))
=
G(f)([u,0])
=
[f(u),f(0)]
=
[u,0].$$

\noindent It also preserves the order:
$$\begin{array}{rcl}
[x_{1},x_{2}] \leq [y_{1},y_{2}]
& \Longleftrightarrow &
x_{1}+y_{2} \leq y_{1}+x_{2} \\
& \Longrightarrow &
f(x_{1})+f(y_{2}) = f(x_{1}+y_{2}) \leq f(y_{1}+x_{2}) = f(y_{1})+f(x_{2}) \\
& \Longleftrightarrow &
[f(x_{1}),f(x_{2})] \leq [f(y_{1}),f(y_{2})] \\
& \Longleftrightarrow &
G(f)([x_{1},x_{2}]) \leq G(f)([y_{1},y_{2}]).
\end{array}$$

There is a map
$$\xymatrix{
V \ar[r]^-{\varphi} & GF(V)
\quad\mbox{by}\quad
x\ar@{|->}[r] & {\left\{\begin{array}{ll} [x,0] & \mbox{ if }x \geq 0 \\
   {[0,-x]} & \mbox{ if } x < 0 \end{array}\right.}
}$$

\noindent Obviously, $\varphi$ preserves $0$ and $u$. We check that
it also preserves addition and the order.

For addition, consider two elements $x,y\in V$. We distinguish several
cases.
\begin{itemize}
\item $x\geq 0, y\geq 0$, and thus $x+y\geq 0$. This is easy:
$$\varphi(x+y)
= 
[x+y,0]
=
[x,0]+[y,0]
=
\varphi(x)+\varphi(y).$$

\item $x<0, y<0$, and so $x+y<0$. Again easy:
$$\varphi(x+y)
=
[0,-(x+y)]
=
[0,-x-y]
=
[0,-x] + [0,-y]
=
\varphi(x)+\varphi(y).$$

\item $x\geq 0, y<0$ with $x+y\geq 0$. Then:
$$\begin{array}{rcl}
\varphi(x+y)
& = &
[x+y,0] \\
& = &
[x,-y] \qquad\mbox{since } (x+y)+ (-y) = x = x + 0 \\
& = &
[x,0] + [0,-y] \\
& = &
\varphi(x) + \varphi(y).
\end{array}$$

\item $x\geq 0, y<0$ with $x+y < 0$. Then:
$$\begin{array}{rcl}
\varphi(x+y)
& = &
[0, -(x+y)] \\
& = &
[x,-y] \qquad\mbox{since } 0 + (-y) = -y = -(x+y) + x \\
& = &
[x,0] + [0,-y] \\
& = &
\varphi(x) + \varphi(y).
\end{array}$$
\end{itemize}

\noindent Next assume $x \leq y$. In order to show $\varphi(x) \leq
\varphi(y)$ we need to distinguish several cases:
\begin{itemize}
\item $x \geq 0$. Then $y\geq 0$ and:
$$\varphi(x)
=
[x,0]
\leq 
[y,0]
=
\varphi(y).$$

\item $y < 0$. Then $x < 0$ and $-y = x + (-x-y) \leq y + (-x-y) =
  -x$, so:
$$\varphi(x)
=
[0,-x]
\leq
[0,-y]
=
\varphi(y).$$

\item $x<0, y\geq 0$. Then $y-x \geq 0$ and thus:
$$\varphi(x)
=
[0,-x] 
\leq
[y,0] 
=
\varphi(y).$$

\item $x\geq 0, y<0$ is impossible.
\end{itemize}

\noindent The inverse $\varphi^{-1}\colon GF(V) \rightarrow V$ is
given by:
$$\begin{array}{rcl}
\varphi^{-1}([x_{1},x_{2}])
& = &
x_{1}-x_{2}.
\end{array}$$

\noindent Then $V \cong GF(V)$ since:
$$\begin{array}{rcl}
\varphi^{-1}(\varphi(x))
& = &
{\left\{\begin{array}{ll} \varphi^{-1}([x,0]) & \mbox{ if }x \geq 0 \\
   \varphi^{-1}([0,-x]) & \mbox{ if } x < 0 \end{array}\right.} \\
& = &
{\left\{\begin{array}{ll} x-0 & \mbox{ if }x \geq 0 \\
   0 - (-x) & \mbox{ if } x < 0 \end{array}\right.} \\
& = &
x \\
\varphi(\varphi^{-1}([x_{1},x_{2}])
& = &
\varphi(x_{1}-x_{2}) \\
& = &
{\left\{\begin{array}{ll} [x_{1}-x_{2},0] & \mbox{ if }x_{1}-x_{2}\geq 0 \\
   {[0,-(x_{1}-x_{2})]} & \mbox{ if } x_{1}-x_{2} < 0 \end{array}\right.} \\
& = &
[x_{1},x_{2}].
\end{array}$$

Similarly, there is a map:
$$\xymatrix{
FG(M)\ar[r]^-{\psi} & M
\quad\mbox{by}\quad
\big([x_{1},x_{2}]\geq 0\big)\ar@{|->}[r] & 
   x_{1}- x_{2}.
}$$

\noindent The latter is well-defined, since $[x_{1},x_{2}] \geq 0 =
          [0,0]$ means $x_{1} \geq x_{2}$. Thus $x_{1}- x_{2}$ is
          defined in $M$.

Clearly, $\psi$ preserves zero $[0,0]$ and the unit $[u,0]$. It also
preserves addition since:
$$\begin{array}{rcl}
\psi\big([x_{1},x_{2}] + [y_{1},y_{2}]\big)
& = &
\psi\big([x_{1}+y_{1}, x_{2}+y_{2}]\big) \\
& = &
(x_{1}+y_{1}) - (x_{2}-y_{2}) \\
& = &
(x_{1}-x_{2}) + (y_{1}-y_{2}) \\
& = &
\psi([x_{1},x_{2}]) + \psi([x_{2},y_{2}]).
\end{array}$$

\noindent The inverse of $\psi$ is easy, since each $x\in M$ is
positive ($0 \leq x$, since $0+x=x$), so we can define $\psi^{-1}(x)
= [x,0]$. Then $FG(M) \cong M$ since:
$$\begin{array}{rcl}
\psi^{-1}(\psi([x_{1},x_{2}]))
& = &
\psi^{-1}(x_{1}-x_{2}) \\
& = &
[x_{1}-x_{2}, 0] \\
& = &
[x_{1},x_{2}] \\
\psi(\psi^{-1}(x))
& = &
\psi([x,0]) \\
& = &
x-0 \\
& = &
x.
\end{array}$$
}
\end{proof}

What does this mean for the effects of a Hilbert space? The positive
operators $\Pos(H)$ can be viewed as an object of $\Bmodu$, with the
identity $I$ as strong unit, so that by definition $\totpar(\Pos(H)) =
\Ef(H)$. Hence $\partot(\Ef(H)) \cong \Pos(H)$ by
lemma~\ref{L:[0,1]equiv1}. From here we can apply the free
constructions again and construct the spaces of operators $\SA(H)$ and
$\BL(H)$.

\marginpar{Is there a relation between $\Conv \rightarrow \Mod[\R]$
  and $\EMod \rightarrow \cat{BModu}$, via the adjunction $\Conv
  \rightleftarrows \Mod[\R]\op$?}

A construction similar to ours for embedding effect modules in
partially ordered vector spaces can be found
in~\cite{PulmannovaG98}. This construction in fact gives the same
result as the one we presented above. However~\cite{PulmannovaG98}
seems to have missed the equivalence of categories that it entails;
instead, only half of it is presented, as a representation theorem.
}

\subsection*{Acknowledgements}

The first steps of the research underlying this work was carried out
during a sabbatical visit of the first author (BJ) to the Quantum
Group at Oxford University in April and May 2010. Special thanks, for
discussion and/or feedback, go to Bob Coecke, Rick Dejonghe, Chris
Heunen, Klaas Landsman, Bas Spitters, and Dusko Pavlovi{\'c}.


\bibliographystyle{plain} 

\end{document}